\numberwithin{algocf}{section}
\crefname{line}{line}{lines}
\newtheorem{thm}{Theorem}[section]\crefname{thm}{Theorem}{Theorems}
\newtheorem{lem}[thm]{Lemma}\crefname{lem}{Lemma}{Lemmas}
\crefname{clm}{Claim}{Claims}
\newtheorem{dfn}[thm]{Definition}\crefname{dfn}{Definition}{Definitions}
\newtheorem{prp}[thm]{Proposition}\crefname{prp}{Proposition}{Propositions}
\crefname{prb}{Problem}{Problems}
\crefname{rem}{Remark}{Remarks}
\newtheorem{cor}[thm]{Corollary}\crefname{cor}{Corollary}{Corollaries}
\crefname{con}{Conjecture}{Conjectures}
\numberwithin{equation}{section}
\DeclareMathOperator{\poly}{poly}
\DeclareMathOperator{\polylog}{polylog}
\renewcommand{\vec}[1]{\bm{#1}}
\newcommand{\CC}{\mathbb C}
\newcommand{\RR}{\mathbb R}
\newcommand{\Exp}{\mathbb E}
\newcommand{\EE}{\mathbb E}
\newcommand{\U}{\mathrm U}
\newcommand{\Hyp}{\mathrm{Hyp}}
\newcommand{\Bin}{\mathrm{Bin}}
\newcommand{\eps}{\varepsilon}
\renewcommand{\epsilon}{\varepsilon}
\newcommand{\dTV}{d_{\mathrm{TV}}}
\newcommand{\kest}{k_{\mathrm{est}}}
\newcommand{\klb}{k_{\mathrm{lb}}}
\newcommand{\kub}{k_{\mathrm{ub}}}
\newcommand{\vmax}{v_{\mathrm{max}}}
\newcommand{\est}{\mathrm{est}}
\DeclarePairedDelimiter{\parens}{\lparen}{\rparen}
\newcommand{\dynamicparens}[1]{\mathchoice{\parens*{#1}}{\parens{#1}}{\parens{#1}}{\parens{#1}}}
\newcommand{\bigO}[1]{O\dynamicparens{#1}}
\DeclareMathOperator{\Unif}{Unif}
\NewDocumentCommand{\bigOt}{o m}{%
  \IfNoValueTF{#1}
    {\ensuremath{\wt{O}\dynamicparens{#2}}}
    {\ensuremath{\wt{O}_{#1}\dynamicparens{#2}}}%
}
\renewcommand{\vec}[1]{\boldsymbol{\mathbf{#1}}}
\DeclarePairedDelimiter{\abs}{\lvert}{\rvert}
\DeclarePairedDelimiter{\norm}{\lVert}{\rVert}
\DeclarePairedDelimiterX{\ip}[2]{\langle}{\rangle}{#1,#2}
\newcommand*\wt[1]{\mathpalette\wthelper{#1}}
\newcommand*\wthelper[2]{%
        \hbox{\dimen@\accentfontxheight#1%
                \accentfontxheight#11.1\dimen@
                $\m@th#1\widetilde{#2}$%
                \accentfontxheight#1\dimen@
        }%
}
\newcommand*\accentfontxheight[1]{%
        \fontdimen5\ifx#1\displaystyle
                \textfont
        \else\ifx#1\textstyle
                \textfont
        \else\ifx#1\scriptstyle
                \scriptfont
        \else
                \scriptscriptfont
        \fi\fi\fi3
}
\newcommand{\createsubroutineref}[3][]{
  \ifx#1{}%
    \newcommand{#2}{\hyperref[#3]{\texttt{\upshape\ref*{#3}}}}%
  \else%
    \newcommand{#2}{\hyperref[#3]{\texttt{\upshape#1}}}%
  \fi%
}
\title{Basic quantum subroutines: finding multiple marked elements and summing numbers}
\date{}
\author{Joran van Apeldoorn}
\affiliation{IViR and QuSoft, University of Amsterdam, The Netherlands}
\email{work@bitofbytes.com}
\author{Sander Gribling}
\affiliation{Department of Econometrics and Operations Research, Tilburg University, Tilburg, The
Netherlands}
\email{s.j.gribling@tilburguniversity.edu}
\author{Harold Nieuwboer}
\affiliation{Korteweg--de Vries Institute for Mathematics and QuSoft, University of Amsterdam, The Netherlands and Faculty of Computer Science, Ruhr University Bochum, Germany and Department of Mathematical Sciences, University of Copenhagen, Denmark}
\email{hani@math.ku.dk}
\begin{document}
\maketitle
\begin{abstract}
  We show how to find all $k$ marked elements in a list of size $N$ using the optimal number~$O(\sqrt{N k})$ of quantum queries and only a polylogarithmic overhead in the gate complexity, in the setting where one has a small quantum memory. Previous algorithms either incurred a factor $k$ overhead in the gate complexity, or had an extra factor $\log(k)$ in the query complexity.

  We then consider the problem of finding a multiplicative $\delta$-approximation of $s = \sum_{i=1}^N v_i$ where $v=(v_i) \in [0,1]^N$, given quantum query access to a binary description of $v$. We give an algorithm that does so, with probability at least $1-\rho$, using $O(\sqrt{N\log(1/\rho)/\delta})$ quantum queries (under mild assumptions on~$\rho$). This quadratically improves the dependence on $1/\delta$ and $\log(1/\rho)$ compared to a straightforward application of amplitude estimation. To obtain the improved $\log(1/\rho)$ dependence we use the first result.
\end{abstract}

\section{Introduction}

\subsection{Finding multiple marked elements in a list} \label{sec:intro grover}
Grover's famous search algorithm~\cite{grover1996QSearch} can be used to find a marked element in a list quadratically faster than possible classically. Formally it can be used to solve the following problem: given a bit string $x \in \{0,1\}^N$, $x \neq 0$, find an index $i \in [N]$ such that $x_i=1$.

In this work we consider the problem of finding \emph{all} indices $i \in [N]$ for which $x_i=1$.
We give a query-optimal quantum algorithm with polylogarithmic gate overhead in the setting where one has a \emph{small quantum memory}.
We explain below why this last assumption makes the problem non-trivial.
This improves over the previous state-of-the-art: previous algorithms were either query-optimal but with a polynomial gate overhead, or had a polylogarithmic gate overhead but also a logarithmic overhead in the query count.

A well-known query-optimal algorithm for the problem is as follows~\cite[Lem.~2]{degraafQuantumVersionsYao2002}.
Let~$k$ be the Hamming weight~$\abs{x} := \sum_{i=1}^N x_i$ of~$x$.
For ease of exposition, suppose the algorithm knows~$k$.
(For our results we will work with weaker assumptions such as knowing only an upper bound on~$k$, or an estimate of it, see \cref{sec:fastgrover}. We also ignore failure probabilities in this part of the introduction.)
A variant of Grover's algorithm~\cite{bbht:bounds} can 
find a single marked element using~$\bigO{\sqrt{N/k}}$ quantum queries and~$\bigO{\sqrt{N/k}\log(N)}$ additional single- and two-qubit gates.
One can then find all~$k$ marked elements using
\[
    \bigO{\sqrt{N/k} + \sqrt{N/(k-1)} + \ldots + \sqrt{N}}
    = \bigO{\sqrt{Nk}}
\]
quantum queries to~$x$. The above complexity is obtained as follows. Suppose we have already found a set~$J \subseteq [N]$ of marked elements. Then to find a new marked element, we replace~$x$ by the string~$z \in \{0,1\}^N$ defined as
\[
    z_i = \begin{cases} x_i & \text{ if } i \not \in J, \\
        0 & \text{ otherwise}.
    \end{cases}
\]
A quantum query to $z$ can be made using a single quantum query to $x$ and quantum query to $J$ (which on input $\ket{i}\ket{b}$ for $i\in [N], b \in \{0,1\}$ returns $\ket{i}\ket{b \oplus \delta_{i \in J}}$ where $\delta_{i \in J} \in \{0,1\}$ is one iff $i \in J$).
In particular, if~$J$ can be stored in a quantum memory (i.e. queried and updated in unit time), then the query complexity will be~$\bigO{\sqrt{Nk}}$ and the time complexity is~$\bigOt{\sqrt{Nk}}$. We refer the interested reader to \cite{Giovannetti:QRAM2008} and \cite[Sec.~5]{Ciliberto:QMachineLearning2018} for a discussion of quantum memory and its (dis)advantages. 

However, when we cannot store $J$ in a quantum memory, a naive implementation of the quantum queries to~$J$ is expensive in terms of gate complexity: if~$\abs{J} = s$, then one can use~$\bigO{s \, \log(N)}$ quantum gates to implement a single query to~$J$.\footnote{We ignore here the cost of maintaining a classical data structure for~$J$, but comment on this again later.} 
Since the size of~$J$ grows to~$k$, the total gate complexity of finding all marked elements will scale as~$\bigOt{\sqrt{N} k^{3/2}}$, which is a factor~$k$ larger than the query complexity.
We show that this factor of~$k$ in the gate complexity can be avoided: we give an algorithm that finds, with large probability, all~$k$ indices using the optimal number of quantum queries to $x$, $\bigO{\sqrt{Nk}}$, while incurring only a polylogarithmic overhead in the gate complexity, in the case where we only have a small quantum memory.
We state a simplified version of our main result below; for the full version, see \cref{thm:all-sol-grover-fast} and the corresponding algorithm \GroverMultipleFast{}.
\begin{thm}
   \label{thm:all-sol-grover-fast-intro}
   Let~$x \in \{0,1\}^N$ with~$\abs{x} = k \geq 2$, and let~$\rho \in (0,1)$ be such that~$k \in \Omega(\log(k/\rho)^3)$ (e.g.~$\rho = \Omega(1/\!\poly(k))$).
   Then we can find, with probability~$\geq 1 - \rho$, all~$k$ indices~$i \in [N]$ for which~$x_i = 1$ using~$\bigO{\sqrt{N k}}$ quantum queries and~$\bigO{\sqrt{Nk} \log(k)^3 \log(N)}$ additional gates.
\end{thm}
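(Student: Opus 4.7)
The plan is to \emph{spread} the $k$ marked positions into $B=\Theta(k)$ disjoint buckets via a random hash function, so that within each bucket we only need to maintain a small (polylogarithmic-size) classical found-set $J_b$. This keeps the membership-check cost per quantum query down to polylogarithmic factors, which prevents the naive $\abs{J}\log N$ overhead from accumulating to the bad $\sqrt{N}\,k^{3/2}$ gate bound described in the introduction.

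Concretely, I would first obtain a constant-factor estimate $\widetilde k$ of $k$ by an approximate-counting subroutine compatible with small quantum memory, at cost $\bigO{\sqrt{Nk}}$ queries and $\bigOt{\sqrt{Nk}}$ gates. Next I sample $h\colon[N]\to[B]$ with $B=\Theta(\widetilde k)$ from an $\bigO{\log k}$-wise independent family, e.g.\ via degree-$\bigO{\log k}$ polynomial hashing modulo a prime of size $\Theta(N)$; such an $h$ can be coherently evaluated on a superposition in $\polylog(N,k)$ gates with only small workspace. Setting $x^{(b)}_i := x_i\cdot \indic[h(i)=b]$, a quantum query to $x^{(b)}$ costs one query to $x$ plus one evaluation of $h$. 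For each bucket $b$ I run the standard query-optimal multi-solution Grover search on $x^{(b)}$, storing $J_b$ classically; this uses $\bigO{\sqrt{(N/B)(k_b+1)}}$ queries, where $k_b := \abs{\{i : x_i=1,\ h(i)=b\}}$, and has per-query gate overhead $\bigO{\log(N)\,(\abs{J_b}+\log k)}$ once one includes the controlled membership check against $J_b$ and the hash evaluation. Summing over buckets, Cauchy--Schwarz gives $\sum_b \sqrt{(N/B)(k_b+1)} \leq \sqrt{N/B}\cdot\sqrt{B(k+B)} = \bigO{\sqrt{Nk}}$, matching the optimal query bound.

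The gate bound then follows from concentration of the bucket loads: a standard high-moment tail inequality for $\bigO{\log k}$-wise independent balls-in-bins shows that, with probability $\geq 1-\rho/2$, every $k_b$ is $\bigO{\log k}$ provided $\rho=\Omega(1/\poly(k))$. On this good event $\abs{J_b}=\bigO{\log k}$ throughout, the per-query gate overhead is $\bigO{\polylog(k)\log N}$, and the total gate count is $\bigO{\sqrt{Nk}\,\polylog(k)\,\log N}$ as claimed.

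\textbf{Main obstacle.} The delicate part is choosing a hash family that simultaneously (a)~has independence strong enough to yield $\max_b k_b = \bigO{\log k}$ with failure probability $\bigO{1/\poly(k)}$, and (b)~admits per-query coherent evaluation in $\polylog(N,k)$ gates with only small ancilla; degree-$\bigO{\log k}$ polynomial hashing should hit both targets, but verifying the tail bound under limited independence, and then assembling it with a union bound over the $B$ per-bucket failure events, is the core technical step. The remaining bookkeeping---setting the failure budget of each bucket subroutine to $\rho/(2B)$, composing with the hashing event, and propagating the outer $\widetilde k$-estimation error---is routine once the hash family is fixed.
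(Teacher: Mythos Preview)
Your plan has a genuine gap at the per-bucket query count. You define $x^{(b)}\in\{0,1\}^N$ and then assert that multi-solution Grover on $x^{(b)}$ costs $\bigO{\sqrt{(N/B)(k_b+1)}}$ queries. But Grover on an $N$-bit string with $k_b$ ones costs $\Theta(\sqrt{N(k_b+1)})$ queries; the factor $N/B$ could only appear if you restricted the \emph{search domain} to $h^{-1}(b)$, i.e.\ if you could prepare a uniform superposition over $\{i:h(i)=b\}$ and use that as the diffusion state. For degree-$\Theta(\log k)$ polynomial hashing the preimages have no exploitable structure, so this cannot be done in $\polylog$ gates. With the correct per-bucket cost, your Cauchy--Schwarz sum becomes
\[
\sqrt{N}\sum_{b=1}^{B}\sqrt{k_b+1}\;\le\;\sqrt{N}\cdot\sqrt{B(k+B)}\;=\;\Theta(\sqrt{N}\,k)
\]
for $B=\Theta(k)$, a factor $\sqrt{k}$ too large (and this is tight: with $B=\Theta(k)$ most buckets carry $\Theta(1)$ marked elements). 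The obvious repair is to replace $h$ by an efficiently computable and \emph{invertible} permutation $\pi$ of $[N]$ and then take $B$ contiguous blocks, but now you need $\pi$ to be $\Theta(\log k)$-wise independent for your max-load argument, and degree-$d$ polynomial maps over $\mathbb F_p$ are not permutations once $d>1$. Such permutation families exist, but they are a heavier tool than what you invoked, and pinning this down is not the ``routine bookkeeping'' you suggest.

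The paper avoids hashing altogether. It first samples $t=\Theta(k/\log^2 k)$ marked elements uniformly at random via a coupon-collector Grover (cost $\bigO{\sqrt{N/k}\cdot t}$ queries). Sorted, these $t$ indices cut $[N]$ into $t+1$ \emph{contiguous intervals} $I_0,\dots,I_t$, each of which is trivially searchable over its own length. A short combinatorial lemma shows that a uniformly random $t$-subset of the $k$ marked positions leaves, with probability $\geq 1-\rho$, no gap containing more than $\ell=\bigO{(k/t)\log(k/\rho)}$ of them; Cauchy--Schwarz on $\sum_j\sqrt{|I_j|k_j}$ with $\sum_j|I_j|\leq N$ and $\sum_j k_j\leq k$ then gives the $\bigO{\sqrt{Nk}}$ query bound, and the gate bound follows from $\sum_j k_j^3\leq \ell^2\sum_j k_j$. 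The point is that here the ``buckets'' are intervals carved out by the marked positions themselves, so there is no tension between randomising the loads and being able to enumerate the bucket domain.
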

We mention that by a simple coupon-collector argument one can already achieve both query- and gate-complexity $\sqrt{Nk}\polylog(N,1/\rho)$, see \cref{prop:most-sol-grover}.
Our algorithm completely removes the $\polylog(N)$ factors in the query complexity and moreover has a much improved dependence on $\log(1/\rho)$: one can achieve $\rho = 1/\!\poly(k)$ without increasing the number of quantum queries made by the algorithm.
In the same spirit, we mention that previous work had already shown that simply boosting a constant success probability is not optimal for finding a single marked element: one can do so with probability $\geq 1-\rho$ using $\sqrt{N \log(1/\rho)}$ quantum queries~\cite{buhrmanBoundsSmallerrorZeroerror1999}.

In a nutshell, our algorithm is a hybrid between the quantum coupon-collector and the query-optimal algorithm described above.
First, we use the coupon collection strategy to find~$t$ marked indices~$1 \leq i_1 < \dots < i_{t} \leq n$, for $t$ roughly $k / \! \log(k)^2$.
A basic property of this strategy is that the resulting indices~$\{i_1, \dotsc, i_{t}\}$ yield a uniformly random subset of size $t$ of the marked indices in $x$. 
Next, for every~$j \in [t+1]$, we use the query-optimal algorithm to find all remaining marked elements in the interval~$(i_{j-1}, i_{j}) \subseteq [n]$, where we write~$i_0 = 0$ and~$i_{t + 1} = n+1$. 
With high probability over the found indices~$\{i_1, \dotsc, i_t\}$, each of the intervals~$(i_{j-1}, i_j)$ contains few remaining marked indices, which reduces the effect of the high gate-complexity overhead of the query-optimal search algorithm.

\subsection{Improved quantum summing algorithm}

Given quantum query access to a binary description of~$v \in [0,1]^N$, how difficult is it to obtain, with probability~$\geq 1 - \rho$, a multiplicative $\delta$-approximation\footnote{Here we use the convention that a multiplicative $\delta$-approximation of a real number~$s$ is a real number~$\tilde s$ for which~$(1-\delta)s \leq \tilde s \leq (1+\delta)s$.} of the sum~$s = \sum_{i=1}^N v_i$? We provide an algorithm to do so whose complexity can be tuned by choosing a parameter~$p \in (0,1)$; one special case of our second main result is as follows, see \cref{thm:fast-summing} for the full version.
In the version below we have made very mild assumptions on the failure probability~$\rho$ and precision~$\delta$, which essentially correspond to the regime in which one makes at most $\bigO{N}$ quantum queries.

\begin{thm}[Informal version of \cref{thm:fast-summing}]
    Let $v \in [0,1]^N$. Let $\rho, \delta \in (0,1)$ be such that $\log(1/\rho)/{\delta} = O(N)$.
    Then we can find, with probability $\geq 1-\rho$, a multiplicative $\delta$-approximation of $\sum_{i=1}^N v_i$ using
    \begin{equation}
      \label{eq:informal summing complexity intro}
      \bigO{\sqrt{\frac{N}{\delta} \log(1/\rho)}}
    \end{equation}
    quantum queries to binary descriptions of the entries of~$v$, and a gate complexity which is larger by a factor polylogarithmic in $N$, $1/\delta$ and $1/\rho$.
\end{thm}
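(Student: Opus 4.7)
The strategy is to apply a heavy/light decomposition of the indices, tuned by the parameter $p\in(0,1)$: entries with $v_i\geq p$ are enumerated and summed exactly via the multi-find algorithm of \cref{thm:all-sol-grover-fast-intro}, while the remaining contribution is approximated by amplitude estimation. The $\sqrt{1/\delta}$ improvement will come from optimizing over $p$, and the $\sqrt{\log(1/\rho)}$ improvement from invoking probability-boosted variants of both subroutines, built on \cref{thm:all-sol-grover-fast-intro}.

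First I would obtain a constant-factor estimate $\tilde s$ of $s$ using standard amplitude estimation on $\frac{1}{\sqrt N}\sum_i\ket i(\sqrt{1-v_i}\ket 0+\sqrt{v_i}\ket 1)$; this costs $O(\sqrt{N/s}\cdot\sqrt{\log(1/\rho)})$ queries, well within budget provided $s\geq 1$. Then I would set $p=\Theta(\tilde s\,\delta)$ and split $[N]=H\sqcup L$ with $H=\{i:v_i\geq p\}$, which gives $|H|\leq s/p=O(1/\delta)$. Running $\GroverMultipleFast$ on the marking oracle $i\mapsto \indic[v_i\geq p]$ (one query to $v$ plus a comparison with $p$) enumerates $H$ with failure probability $\rho/3$ using $O(\sqrt{N|H|\log(1/\rho)})=O(\sqrt{N\log(1/\rho)/\delta})$ queries, and one additional query per enumerated index yields $s_H=\sum_{i\in H}v_i$ exactly.

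For the light part $s_L=s-s_H$, I would prepare $\frac{1}{\sqrt N}\sum_i\ket i(\sqrt{1-w_i}\ket 0+\sqrt{w_i}\ket 1)$ with $w_i=(v_i\,\indic_{i\in L})/p\in[0,1]$, whose good-outcome probability equals $a=s_L/(Np)$. Amplitude estimation with $T$ queries returns $\tilde a$ with error $O(\sqrt{a(1-a)}/T+1/T^2)$, hence an induced error of $O(\sqrt{s_LNp}/T+Np/T^2)$ on $s_L$; at the balance $p=\Theta(s\delta)$, both terms drop below $\delta s/2$ as soon as $T=O(\sqrt{N/\delta})$. Upgrading the success probability to $1-\rho/3$ using a Buhrman-style probability-boosted $\AmpEst$ (itself built on \cref{thm:all-sol-grover-fast-intro}) gives $T=O(\sqrt{N\log(1/\rho)/\delta})$, and returning $\tilde s_H+\tilde s_L$ then satisfies the required approximation after a union bound over the three stages. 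The balance $p=\Theta(s\delta)$ is forced by equating the heavy cost $\sqrt{Ns/p}$ with the light cost $\sqrt{Np/s}/\delta$, which is exactly what produces the $\sqrt{1/\delta}$ savings.

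The main obstacle I expect is realizing the $\sqrt{\log(1/\rho)}$ factor rather than $\log(1/\rho)$: naive median-of-trials boosting of amplitude estimation yields only the latter, and eliminating the extra square root requires a Buhrman--Newman--Roehrig-style probability amplifier, which is precisely where \cref{thm:all-sol-grover-fast-intro} enters (both directly, to enumerate $H$, and indirectly, inside $\AmpEst$). Secondary nuisances are that $p$ is set from the coarse $\tilde s$ rather than from $s$, so the error analysis must tolerate a constant-factor slack in $p$, and the mild hypothesis $\log(1/\rho)/\delta=O(N)$ is needed to rule out the degenerate regime where the chosen $T$ would exceed the trivial $O(N)$ bound.
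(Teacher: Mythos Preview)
Your heavy/light decomposition and the per-trial amplitude-estimation analysis are on target, but the mechanism you propose for obtaining the $\sqrt{\log(1/\rho)}$ dependence does not work as stated. The claim that \GroverMultipleFast{} finds all of~$H$ with failure probability~$\rho$ in $O(\sqrt{N\,|H|\,\log(1/\rho)})$ queries is not what \cref{thm:all-sol-grover-fast-intro} provides: that theorem gives $O(\sqrt{N|H|})$ only under the hypothesis $\rho=\Omega(1/\poly(|H|))$, and with your value threshold $p=\Theta(\tilde s\,\delta)$ you only know $|H|=O(1/\delta)$, so this hypothesis fails whenever $\log(1/\rho)\gg\log(1/\delta)$. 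More seriously, there is no known ``Buhrman-style probability-boosted \AmpEst{}'' that replaces the median-of-$\log(1/\rho)$ overhead by $\sqrt{\log(1/\rho)}$: that trick is specific to \emph{search}, and you give no construction showing how \cref{thm:all-sol-grover-fast-intro} could be embedded inside amplitude estimation to achieve it. With your fixed per-trial cost $T=O(\sqrt{N/\delta})$, standard boosting only yields $O(\sqrt{N/\delta}\,\log(1/\rho))$.

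The paper gets $\sqrt{\log(1/\rho)}$ by a different route: it keeps plain median boosting for \AmpEst{} but shifts work to the finding stage by \emph{enlarging} the heavy set. It thresholds by \emph{rank} rather than value, using a quantile-estimation subroutine at level $p=\Theta(\log(1/\rho)/(\delta N))$, which forces $|H|=\Theta(pN)=\Theta(\log(1/\rho)/\delta)$ with a guaranteed lower bound as well as an upper bound. This simultaneously (i)~makes $|H|$ large enough that \cref{thm:all-sol-grover-fast-intro} finds all heavy elements with failure probability~$\rho$ at cost $O(\sqrt{N|H|})=O(\sqrt{N\log(1/\rho)/\delta})$, and (ii)~pushes the quantile value down to $\tilde z\le s/(c\,|H|)$, so that each \AmpEst{} trial now costs only $O(1/(\delta\sqrt{p}))=O(\sqrt{N/(\delta\log(1/\rho))})$, and hence $\log(1/\rho)$ median repetitions still total $O(\sqrt{N\log(1/\rho)/\delta})$. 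A value-based threshold cannot supply the lower bound on~$|H|$ that both points require; this is precisely why the paper invokes quantile estimation rather than your preliminary estimate of~$s$.
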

In a nutshell, our algorithm first finds all indices of ``large enough'' entries of the~$v$ using~\GroverMultipleFast{} and sums the corresponding elements classically.
It then rescales the remaining ``small enough'' elements and uses amplitude estimation~\cite{bhmt:countingj} to approximate their sum.
To determine what ``large enough'' means, we use a recent quantum quantile estimation procedure from \cite{hamoudi:QuantumSubGaussianMean2021}.
Choosing the quantile carefully controls both the number of elements that need to be found in the first stage, as well as the size of the elements that remain to be summed in the second stage.
Note that it is the above version of Grover's algorithm that allows us to obtain a query complexity with only a~$\sqrt{\log(1/\rho)}$-dependence, and without additional polylogarithmic factors in~$N$ and~$\delta$.
Indeed, the fact that the number of quantum queries required to find multiple marked elements does not depend on $\log(1/\rho)$ (for $\rho$ not too small) allows us to balance the complexities of the two stages.\footnote{{For completeness we mention that if you \emph{do} allow a large quantum memory, or a polynomial overhead in the gate complexity, then the same $\sqrt{\log(1/\rho)}$-dependence can be obtained using an analogous approach as the one we use to prove \cref{thm:fast-summing}, but instead relying on, e.g., one of the versions of Grover's algorithm discussed in \cref{sec:intro grover}.}}

The problem we consider can be viewed as a special case of the mean estimation problem, or as a generalization of the approximate counting problem for binary strings $x \in \{0,1\}^N$. We briefly discuss how our results compare to prior work on those problems.

\paragraph{Mean estimation algorithms.}
After multiplying the $v_i$ by a factor $\frac{1}{N}$, we can interpret the problem of finding a multiplicative $\delta$-approximation of the sum $s = \sum_{i=1}^N v_i$ as the problem of obtaining a multiplicative $\delta$-approximation of the mean $\mu = \frac{1}{N} \sum_{i=1}^N v_i$ of the random variable that, for each $i \in [N]$, takes value $v_i$ with probability $1/N$.
Quantum algorithms for the mean estimation problem date back to the work of Grover~\cite{grover:telecomputation,grover1998framework}.
A careful application of maximum finding and quantum amplitude estimation yields such an approximation of $\mu$, with probability $\geq 1-\rho$,  using $\bigO{\frac{\sqrt{N}}{\delta}\log(1/\rho)}$ quantum queries and polylogarithmic gate overhead, see \cref{thm:normal multiplicative mean est}.
We improve the dependence on $\delta$ from $1/\delta$ to $1/\sqrt{\delta}$.

As for applications, we note that \cref{thm:normal multiplicative mean est} was used to give quantum speedups for the matrix scaling problem in~\cite{qscalingICALP,qscaling2STACS}, where it is used to approximate the row- and column sums of a matrix with non-negative entries.
This is one of their main sources of quantum speedup, and the quality of this approximation directly affects the achievable precision for the matrix scaling problem.
Using the improved quantum summing subroutine of \cref{thm:fast-summing}, the dependence on the desired precision~$\eps$ for the matrix scaling problem is further improved.
More precisely, if~$A \in \RR_{\geq 0}^{N \times N}$ is an~$N \times N$ matrix with non-negative entries, let~$r(A) \in \RR_{\geq 0}^N$ denote its \emph{vector of row sums}, i.e., $r_i(A) = \sum_{j=1}^N A_{ij}$.
Then given quantum query access to~$A$, using the improved summing subroutine, one can with~$\bigOt{N^{1.5} / \sqrt{\delta}}$ queries compute a vector~$\hat{r} \in \RR^n$ such that \begin{equation*}
  \norm{\hat{r} - r(A)}_1 \leq \delta \norm{r(A)}_1.
\end{equation*}
Computing such an~$\hat r$ with~$\delta = \eps^2$ is the bottleneck in the second-order method for matrix scaling presented in~\cite{qscaling2STACS}.
By reducing the complexity of this step, this method is improved to become better than the fastest \emph{classical} first-order method (Sinkhorn's algorithm) 
for entrywise-positive matrices: the classical method finds an~$\eps$-$\ell^1$-scaling of entrywise positive matrices in time~$\bigOt{N^2 / \eps}$, whereas the box-constrained Newton method now runs in time~$\bigOt{N^{1.5} / \eps}$. Note that this gives an algorithm for matrix scaling whose runtime is sublinear in the input size when $1/\eps = o(\sqrt{N})$, corresponding to $1/\delta = o(N)$, which is precisely the regime of $\delta$ for which the quantum subroutine improves over classical summing. 

We remark that faster mean estimation algorithms have been developed for example for random variables with a small variance $\sigma^2$.
Indeed, the current state of the art obtains a multiplicative $\delta$-approximation, with probability $\geq 1-\rho$, using $\bigOt{\parens{\frac{\sigma}{\delta\mu} + \frac{1}{\sqrt{\delta \mu}}} \log(1/\rho)}$ quantum queries in expectation~\cite{hamoudi:QuantumSubGaussianMean2021,kothari:source}.\footnote{In \cite[Proposition~6.4]{hamoudi:QuantumSubGaussianMean2021}, a matching (up to log-factors) lower bound is shown for Bernoulli random variables. We remark that our algorithm does not break that lower bound since we parameterize the problem differently: the complexity of our algorithm depends also on the size of the support of the distribution.}
For comparison, we mention that $\sigma \leq \sqrt{\mu(1-\mu)}$ always holds, and when given binary access to the $v_i$, one may additionally assume (after maximum-finding and rescaling) that $\mu \in [1/N,1]$. {The second term in the complexity is then at most $\sqrt{N/\delta} \log(1/\rho)$ (i.e.~at most our bound when we ignore the $\rho$-dependence). The first term, however, is larger than our complexity if and only if $\delta N \leq (\sigma/\mu)^2$ (again ignoring $\rho$). Our algorithm thus improves over prior mean estimation algorithms when the support is relatively small: when $\delta N$ is at most $(\sigma/\mu)^2$.}

\paragraph{Approximate counting algorithms.}

As mentioned above, our algorithm improves the error-dependence for mean estimation (for random variables with small support). It therefore makes sense to compare our upper bound with the well-known lower bound for the approximate counting problem for binary strings $x \in \{0,1\}^N$. We first recall a precise statement.
Let $x \in \{0,1\}^N$ and $k = \abs{x}$, and $U_x$ a unitary implementing quantum oracle access to $x$.
Then for an integer $\Delta > 0$, any quantum algorithm which, with probability $\geq 2/3$, computes an additive $\Delta$-approximation of $k$ uses at least $\Omega(\sqrt{N/\Delta} + \sqrt{k(N-k)} / \Delta)$ applications of controlled-$U_x$~\cite[Thm.~1.10 and 1.11]{nayak1999quantum}.
A matching upper bound is given in~\cite[Thm.~18]{bhmt:countingj}, see~\cref{thm:approx counting} for a precise formulation. We can compare the complexity of our algorithm by converting multiplicative error into additive error, i.e., to achieve an additive error of $\epsilon$ we take $\delta = \eps/k$ (or $\eps$ divided by a suitable multiplicative approximation of $k$).
Then the key point is that if one considers~\cref{eq:informal summing complexity intro} for $\eps \leq \Delta$ and~$k \geq 1$, then
\begin{equation*}
    \sqrt{\frac{Nk}{\eps}} \geq \sqrt{\frac{Nk}{\Delta}} \geq \sqrt{\frac{1}{2} \frac{N}{\Delta} + \frac{1}{2} \frac{k (N-k)}{\Delta}} \geq \frac{1}{2} \left(\sqrt{\frac{N}{\Delta}} + \frac{\sqrt{k (N-k)}}{\Delta}\right)
\end{equation*}
where the last inequality follows from concavity of the square-root function and~$\Delta \geq 1$. In other words, for all parameters $N,k,\Delta$, the complexity of our algorithm (left hand side), is at least as large as the lower bound on approximate counting (right hand side), so we do not break the lower bound.

We highlight two ranges of parameters. On the one hand, when $\Delta$ is large, our upper bound is suboptimal for quantum counting. For example, when $\Delta=k/2$ (i.e.,~$\delta = 1/2$), our algorithm uses $O(\sqrt{N})$ queries whereas the approximate counting algorithm from~\cite[Thm.~18]{bhmt:countingj} uses only $O(\sqrt{N/k})$ queries. This is no surprise given that our algorithm finds all ``large'' elements, which in the counting setting amounts to finding all ones. On the other hand, when $\Delta$ is a small constant, say $\Delta=1$, the approximate counting lower bound shows that our upper bound is essentially tight.
To see this, note if one had a quantum algorithm for computing~$(1 \pm \delta)$-multiplicative approximations of sums with quantum query complexity $O(\sqrt{N}/\delta^c)$ (that succeeds with probability $\geq 2/3$), this would give an upper bound of $O(\sqrt{N} k^c)$ for finding an additive~$\Delta = 1$-approximation of~$k$.
The lower bound becomes $\Omega(\sqrt{k(N-k)})$ when~$\Delta = 1$, and so we must have~$\sqrt{N} k^c \geq \sqrt{k (N-k)}$, i.e., $c \geq 1/2$.
We leave it as an open problem whether one can obtain a quantum algorithm for approximate summing of vectors $v \in [0,1]^N$ that matches the approximate counting complexity when applied to $v \in \{0,1\}^N$ for the entire range of parameters $N,k,\Delta$. 

{Finally we highlight that our quantum upper bound for summing outperforms the classical lower bound for approximate counting for a certain range of parameters. The classical randomized query complexity of achieving a multiplicative $\delta$-approximation of the Hamming weight of $x \in \{0,1\}^N$ is $\Theta(\min\{N,\frac{N}{\delta^2 k}\})$.\footnote{We believe this is well known: the upper bound (which was also conjectured in~\cite{bhmt:countingj}) follows from the algorithm presented in \cite{dagumOptimalAlgorithmMonte2000}, applied to a Bernoulli random variable with success probability $k/N$; the lower bound is claimed for instance in \cite{aaronson&rall:qcounting}, but we could not locate a proof in the literature for cases other than $k = \Theta(N)$ \cite{canettiLowerBoundsSampling1995}. We therefore provide a (not entirely trivial) proof in \cref{sec:classicalLB}.} This classical bound exceeds our quantum upper bound of $\widetilde O(\sqrt{N/\delta})$ if $1/\delta \in O(N)$ and $1/\delta \in \Omega(k^{2/3}/N^{1/3})$ (ignoring logarithmic factors).}

\subsection*{Organization of the paper}
In~\cref{sec:preliminaries} we discuss notation, the computational model, and some basic results we build upon. In~\cref{sec:fastgrover} we provide our algorithm for searching for multiple marked elements. Lastly, in~\cref{sec:fastsum}, we give our summation algorithm.

\section{Preliminaries} \label{sec:preliminaries}
\subsection{Notation and assumptions}
Throughout the paper, we will assume that $N \geq 1$ and $N = 2^n$ for some $n \geq 1$.
We identify $\CC^N$ with $\CC^{2^n}$ by $\ket{j} \mapsto \ket{j_1 \dotsc j_n}$, where $(j_1, \dotsc, j_n) \in \{ 0, 1 \}^n$ is the standard binary encoding of $j-1 \in \{0, \dotsc, 2^n - 1\}$.
We write~$\log$ for the logarithm with base~$2$ and~$\ln$ for the natural logarithm.
For a bit string $x \in \{0,1\}^N$ we write $\abs{x} = \sum_{i \in [N]} x_i$. Throughout we will use $k$ to denote the Hamming weight of $x$, i.e,~$\abs{x}=k$, and we write $\kest,\klb,\kub$ for various bounds on $k$: $\kest$ will denote an integer such that $k/2 \leq \kest \leq 3k/2$, and $\klb$ and $\kub$ are lower- and upper bounds on $k$ respectively.

\subsection{Computational model} \label{sec:computational model}

We express the cost of a quantum algorithm in terms of the number of one- and two-qubit gates it uses.
Note that in particular we allow single-qubit rotations with arbitrary real angles.
In~\cref{sec:fastgrover}, the angle will always be determined by classical data.
In \cref{sec:fastsum} we additionally apply controlled rotations where the control register is allowed to be in superposition; in this case we only use angles of the form $\pi/2^m$ and we carefully count the number of used gates.  In the query setting, we separately count the number of quantum queries the algorithm makes, which means (controlled) applications of the query unitary or its inverse.
We will use the following types of quantum queries to access either $N$-bit strings $x \in \{0,1\}^N$ or $N$-dimensional vectors $v \in [0,1]^N$ (specified in fixed-point format).

\begin{dfn}
\label{dfn:bit string oracle access}
A unitary $U_x \in \U(\CC^N \otimes \CC^2)$ is said to implement quantum oracle access to an $N$-bit string $x \in \{0,1\}^N$ if it acts as
\vspace{-2mm}
\begin{equation*}
\vspace{-2mm}
    U_x \ket{i} \ket{b} = \ket{i} \ket{b \oplus x_i}
\end{equation*}
for all $i \in [N]$ and $b \in \{0,1\}$.
\end{dfn}

\begin{dfn}
\label{dfn:vector oracle access}
A unitary $U_v \in \U(\CC^N \otimes \CC^{2^b})$ is said to implement quantum oracle access to $(0,b)$-fixed-point representations of $v \in [0,1]^N$ if it acts as
\vspace{-2mm}
\begin{equation*}
    \vspace{-2mm}
    U_v \ket{i} \ket{0^b} = \ket{i} \ket{v_i}
\end{equation*}
for all $i \in [N]$, where $\ket{v_i} = \ket{(v_i)_1 \dotsc (v_i)_b}$ satisfies $\sum_{j=1}^b (v_i)_j 2^{-j} = v_i$.
\end{dfn}

In both cases we allow the unitary to act on additional workspace registers, which we omit for notational convenience. Moreover, throughout the paper, every algorithm will use at most logarithmic number of additional ancillary qubits.

We additionally use a classical data structure to maintain sorted lists that supports both insertion and look-up in a time that scales logarithmically with the size of the list, see for example \cite[Sec.~6.2.3]{KnuthVol3} or
\cite[Ch.~13]{cormenIntroductionAlgorithmsFourth2022}.
We emphasize that we allow neither writing nor reading of such a data structure in superposition.

\subsection{Various quantum subroutines}

In this section we summarize the external results that we build upon, and in some cases give a quick proof of an aspect of the result that is not mentioned explicitly in the original source.

\begin{procedure}[t]
  \caption{AmpEst($U,M$)}\label{alg:ampest}
  \Input{Access to controlled versions of unitary $U \in \U(2^q)$ and its inverse, an integer $M \geq 1$.}
  \Output{Real number $\tilde a \in [0,1]$.}
  \Analysis{\cref{lem:ampest}}
\end{procedure}

\begin{lem}[{Amplitude estimation~\cite[Thm.~12]{bhmt:countingj}}]\label{lem:ampest}
Let $U\in \CC^{2^q\times 2^q}$ be a unitary that creates a state
\[
\ket{\psi} = U\ket{0^q}= \sqrt{a}\ket{\phi_1}\ket{1}+\sqrt{1-a}\ket{\phi_0}\ket{0}.
\]
There is a quantum algorithm \AmpEst{} that, with probability $\geq \frac{8}{\pi^2}$, outputs an $\tilde{a} \in [0,1]$ such that
\[
  \abs{a-\tilde{a}} \leq 2\pi \frac{\sqrt{a(1-a)}}{M} + \frac{\pi^2}{M^2}
\]
using $M$ applications of controlled-$U$ and $M$ applications of controlled-$U^{\dagger}$. If~$M$ is a power of~$2$, the algorithm uses $\bigO{qM}$ additional quantum gates, and the computation of the sine-squared function of the normalized phase.
\end{lem}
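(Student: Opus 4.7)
The plan is to reproduce the standard BHMT amplitude estimation argument, which reduces amplitude estimation to quantum phase estimation (QPE) on a suitably chosen Grover-like operator. Concretely, I would first define the operator $Q = -U S_0 U^{\dagger} S_{\mathrm{good}}$, where $S_{\mathrm{good}}$ is the reflection that puts a minus sign on states whose last register is $\ket{1}$, and $S_0$ is the reflection about $\ket{0^q}$. Each application of $Q$ uses one application of $U$ and one of $U^{\dagger}$, together with $\bigO{q}$ gates to implement the two reflections on the $q$-qubit system.

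The next step is the standard 2D-subspace observation: writing $a = \sin^2(\theta)$ for $\theta \in [0, \pi/2]$, the vectors $\ket{\phi_1}\ket{1}$ and $\ket{\phi_0}\ket{0}$ span a $Q$-invariant plane on which $Q$ acts as a rotation by angle $2\theta$. In particular $\ket{\psi}$ lies in this plane and decomposes as a superposition of the two eigenvectors of $Q$ with eigenvalues $e^{\pm 2\mathbf{i}\theta}$. I would then run QPE for $Q$ on the input state $\ket{\psi}$ using $\log M$ ancilla qubits (this is where $M$ being a power of $2$ enters), which requires controlled-$Q^{2^j}$ for $j = 0, \ldots, \log M - 1$. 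Expanded out, this is $M-1$ controlled applications of $Q$, i.e.\ $M$ applications of controlled-$U$ and of controlled-$U^{\dagger}$ after rounding up, together with $\bigO{qM}$ additional gates for the reflections and $\bigO{\log^2 M}$ gates for the inverse QFT on the ancilla.

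The standard QPE success-probability analysis (rounding the best $\log M$-bit approximation of $\theta/\pi$ or $1 - \theta/\pi$) gives that with probability at least $8/\pi^2$ the measured integer $y \in \{0, \ldots, M-1\}$ satisfies
\[
    \abs*{\sin(\pi y / M) - \sin(\theta)} \leq \frac{\pi}{M}.
\]
Defining $\tilde{a} := \sin^2(\pi y / M)$ and expanding $\sin^2(\pi y / M) - \sin^2(\theta) = (\sin(\pi y/M) - \sin(\theta))(\sin(\pi y/M) + \sin(\theta))$, a short calculation using $\abs{\sin(\pi y/M) + \sin(\theta)} \leq 2\sin(\theta) + \pi/M$ and $\sin(\theta) = \sqrt{a}$, $\cos(\theta) = \sqrt{1-a}$ yields the stated bound $\abs{a - \tilde{a}} \leq 2\pi \sqrt{a(1-a)}/M + \pi^2/M^2$.

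I expect the main obstacle to be the careful bookkeeping rather than any single conceptual step: matching the exact query count $M$ (versus $M-1$ or $M+1$ depending on how one indexes QPE), and tightening the trigonometric error-propagation bound to land on exactly $2\pi \sqrt{a(1-a)}/M + \pi^2/M^2$ rather than an unspecified constant times those terms. Both are handled by the argument in~\cite{bhmt:countingj}, so in this write-up I would content myself with reproducing the two key ingredients (Grover rotation on the 2D subspace, QPE accuracy) and cite that source for the final constants and the gate count of $\bigO{qM}$.
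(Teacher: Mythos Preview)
Your proposal is correct and follows the same underlying BHMT phase-estimation argument that the paper relies on. The paper's own proof is far more terse: it simply cites \cite[Thm.~12]{bhmt:countingj} directly (with~$k=1$ there), and only adds the observation that the reflection through~$\ket{0^q}$ costs~$\bigO{q}$ gates and is performed~$M$ times, while the QFT on~$\log_2(M)$ qubits is applied once and absorbed in the~$\bigO{qM}$ bound. Your write-up expands out the internals of that citation, which is fine but more than the paper does; note also that the tight constant~$2\pi\sqrt{a(1-a)}/M$ comes from the trigonometric stability lemma (stated later in the paper as \cref{lem:sin stability}) rather than from the cruder factorisation you sketch, so your instinct to defer the exact constants to~\cite{bhmt:countingj} is the right call.
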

\begin{proof}
  This follows from the formulation in \cite{bhmt:countingj} by setting $k=1$ and implementing the reflection through $\ket{0^q}$ using $\bigO{q}$ gates, which needs to be performed $M$ times. If $M$ is a power of $2$ we can implement the quantum Fourier transform on $m = \log_2(M)$ qubits using $m$ Hadamard gates, and the QFT and its inverse need only be performed once; therefore, this cost is absorbed in the big-O.
\end{proof}
We note that the above formulation of \AmpEst{} outputs a real number $\tilde a$ whereas we require a fixed-point encoded number for future uses.
However, it suffices to use fixed-point arithmetic using~$\bigO{\log(M)}$ bits; after all, the guarantee of \AmpEst{} only gives a precision of~$1/\!\poly(M)$.

We also need a version of amplitude amplification where the success probability is $1$ if one knows the amplitude of the ``good'' part of the state exactly.
In a nutshell, the algorithm with success probability $1$ is the usual amplitude amplification algorithm applied not to $U$ but to $U$ followed by a rotation of the last qubit to slightly reduce the amplitude $a$ to~$\bar a$.
Carefully choosing $\bar a$ ensures that the success probability is exactly $1$ after an integer number of rounds of amplitude amplification.
This requires having access to gates which implement rotation by arbitrary angles, not just angles of the form~$\pi / 2^m$ for some integer~$m$.
We specialize the statement of this result to the search setting but remark that this works more generally.
For exactly $N/4$ marked elements this observation was first made in \cite{bbht:bounds}.

\begin{procedure}[t]
  \caption{GroverCertainty($U$, $k_0$)}\label{alg:grover certainty}
  \Input{Quantum oracle $U_x$ to access $x \in \{ 0,1 \}^N$, an integer $k_0 \geq 1$.}
  \Output{An index $i \in [N]$.}
  \Guarantee{If $\abs{x} = k_0$, then $x_i = 1$ with certainty.}
  \Analysis{\cref{thm:grover with certainty}}
\end{procedure}

\begin{thm}[{\cite[Thm.~4]{bhmt:countingj}}]
\label{thm:grover with certainty}
Let $x \in \{0,1\}^N$ with $\abs{x} = k \geq 1$.
Then there is a quantum algorithm \GroverCertainty{} that takes as input a quantum oracle~$U_x$ to access~$x$ and an integer $k_0 \in [N]$, and that outputs an index $i \in [N]$, such that $x_i = 1$ with certainty if $k_0 = k$, and uses $O(\sqrt{N/k_0})$ quantum queries to~$x$, and $O(\sqrt{N/k_0} \log(N))$ additional gates.
\end{thm}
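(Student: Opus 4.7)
The plan is to combine the usual geometric analysis of Grover with a one-qubit ``damping'' rotation that reduces the good-state amplitude just enough for an integer number of Grover iterations to rotate the starting state onto the good subspace exactly.

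\smallskip

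\noindent\textbf{Setup.} Let $a = k_0/N$ and let $\theta_0 \in (0,\pi/2]$ be defined by $\sin\theta_0 = \sqrt{a}$. Consider the state
\[
    \ket{\psi} \;=\; \bigl(H^{\otimes n} \otimes \id\bigr)\ket{0^n}\ket{0} \;=\; \frac{1}{\sqrt{N}}\sum_{i\in[N]}\ket{i}\ket{0}.
\]
Writing $\ket{G} = \frac{1}{\sqrt{k_0}}\sum_{i\colon x_i=1}\ket{i}\ket{1}$ and $\ket{B} = \frac{1}{\sqrt{N-k_0}}\sum_{i\colon x_i=0}\ket{i}\ket{0}$ for the good/bad states when $k=k_0$, applying $U_x$ to $\ket{\psi}$ gives $\sin\theta_0\ket{G} + \cos\theta_0\ket{B}$.

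\smallskip

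\noindent\textbf{Choice of iteration count and damping angle.} Choose the smallest integer $m \geq 0$ with $(2m+1)\theta_0 \geq \pi/2$, so $m = \bigO{1/\theta_0} = \bigO{\sqrt{N/k_0}}$. Define
\[
    \bar\theta \;=\; \frac{\pi}{2(2m+1)} \;\in\; (0,\theta_0],
\]
and set $\phi \in [0,\pi/2]$ by $\cos\phi = \sin\bar\theta / \sin\theta_0$. Let $A$ be the unitary that first prepares $\ket{\psi}$, then applies $U_x$, and finally, controlled on the last qubit being $\ket{1}$, rotates a fresh ancilla qubit from $\ket{0}$ to $\cos\phi\ket{0} + \sin\phi\ket{1}$. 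After $A$, the amplitude on the ``good'' subspace (last two qubits equal to $\ket{10}$) is exactly $\sin\theta_0\cos\phi = \sin\bar\theta$.

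\smallskip

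\noindent\textbf{Amplitude amplification.} I would now run $m$ iterations of standard amplitude amplification on $A$ with respect to the marker that the last two qubits equal $\ket{10}$. In the good/bad plane this performs a rotation by $2\bar\theta$ per iteration, so after $m$ iterations the good amplitude is $\sin((2m+1)\bar\theta) = \sin(\pi/2) = 1$, provided $k=k_0$; measuring the index register then returns a marked $i$ with certainty. For general $k$, the same algorithm still outputs an index (with some probability, which is irrelevant to the stated guarantee).

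\smallskip

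\noindent\textbf{Complexity and potential obstacle.} Each amplification step uses one call to $U_x$, one to $U_x^\dagger$, the reflection about $\ket{0^{n+2}}$ (implementable with $\bigO{n} = \bigO{\log N}$ gates via a multi-controlled $Z$), two layers of $H^{\otimes n}$, and $\bigO{1}$ single-qubit rotations (including those with angle depending on $\phi$ and $\bar\theta$). Summing over $m = \bigO{\sqrt{N/k_0}}$ iterations gives $\bigO{\sqrt{N/k_0}}$ queries and $\bigO{\sqrt{N/k_0}\log N}$ gates, matching the statement. The only real subtlety is verifying that $\bar\theta \leq \theta_0$ (so that $\cos\phi \leq 1$ and the damping rotation is well-defined) and checking that $m$ is chosen so that $(2m+1)\bar\theta$ lands exactly on $\pi/2$; both follow from the definition of $m$ as the smallest integer with $(2m+1)\theta_0 \geq \pi/2$. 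This observation and the use of arbitrary-angle single-qubit rotations are what make the ``certainty'' version possible; the remaining steps are standard Grover bookkeeping.
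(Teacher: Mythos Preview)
Your proposal is correct and matches the paper's approach. The paper does not give a full proof of this theorem (it is cited from~\cite{bhmt:countingj}); it only offers the one-paragraph ``in a nutshell'' description preceding the statement, namely: follow the state-preparation unitary by a single-qubit rotation that damps the good amplitude from~$\sin\theta_0$ down to~$\sin\bar\theta$ so that an integer number of amplitude-amplification rounds lands exactly on the good subspace. Your write-up fleshes out precisely this sketch, with the minor cosmetic difference that you place the damping rotation on a fresh ancilla (good marker $=\ket{10}$) rather than on the existing flag qubit; this is equivalent and arguably cleaner, since it avoids disturbing the bad component.
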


The other version of Grover that we need is the following, which is originally due to~\cite[Thm.~3]{bbht:bounds}, but we use a slightly different version from~\cite[Thm.~3]{bhmt:countingj}:
\begin{procedure}[t]
  \caption{GroverExpectation($U_x$)}\label{alg:grover expectation}
  \Input{Quantum oracle $U_x$ to access $x \in \{ 0,1 \}^N$.}
  \Output{An index $i \in [N]$.}
  \Guarantee{If $\abs{x} \geq 1$, then $x_i = 1$ with certainty.}
  \Analysis{\cref{thm:grover in expectation}}
\end{procedure}
\begin{thm}[{\cite[Thm.~3]{bhmt:countingj}}]
\label{thm:grover in expectation}
Let $x \in \{0,1\}^N$ with~$\abs{x} = k$, where~$k$ is not necessarily known.
Then there is a quantum algorithm \GroverExpectation{} that takes as input a quantum oracle $U_x$ to access $x$, and if $k \geq 1$, outputs an index $i \in [N]$ such that $x_i = 1$.
The number of quantum queries to~$x$ that it uses is a random variable $Q$, such that, if $k \geq 1$, then
\vspace{-2mm}
\begin{equation*}
\vspace{-2mm}
    \EE[Q] = \bigO{\sqrt{N/k}},
\end{equation*}
and if $\abs{x} = 0$, then  $Q = \infty$ (i.e., the algorithm runs forever).
The number of additional gates used is $\bigO{Q \, \log(N)}$.
The index $i$ which is output is uniformly random among all such indices, and independent of the value of $Q$.
\end{thm}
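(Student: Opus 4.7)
The plan is to adapt the standard BBHT-type algorithm. Concretely, \GroverExpectation{} maintains a real parameter $m$ (initially $m=1$) and a fixed growth factor $\lambda \in (1,4/3)$, and repeats the following round: draw $j$ uniformly from $\{0,1,\dots,\lceil m\rceil-1\}$; prepare the uniform superposition $\tfrac{1}{\sqrt{N}}\sum_i \ket{i}$; apply $j$ Grover iterations (one query plus $O(\log N)$ gates each) using $U_x$ and its inverse; measure to obtain $i$; query $U_x$ once more to test whether $x_i=1$. If $x_i=1$, return $i$; otherwise, update $m \leftarrow \min(\lambda m, \sqrt{N})$ and continue.

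For the expected query count, I would rely on the standard BBHT analysis. Writing $\theta = \arcsin(\sqrt{k/N})$, after $j$ Grover iterations the probability of measuring a marked index is $\sin^2((2j+1)\theta)$. Averaging this over $j \in \{0,\dots,\lceil m\rceil -1\}$ gives success probability $\tfrac{1}{2}-\tfrac{\sin(4\lceil m\rceil \theta)}{4\lceil m\rceil \sin(2\theta)}$, which exceeds $1/4$ once $\lceil m\rceil \ge 1/\sin(2\theta) = \Theta(\sqrt{N/k})$. So once the doubling schedule reaches $m = \Theta(\sqrt{N/k})$, each subsequent round succeeds with constant probability, and the dominant contribution is from this phase. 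The total expected number of queries is therefore at most a geometric sum proportional to $\sum_{\ell \le \log_\lambda \sqrt{N/k}} \lambda^\ell = O(\sqrt{N/k})$. If $k = 0$, every measurement yields an unmarked index, so the algorithm never halts and $Q = \infty$. The gate count of $O(Q\log N)$ is immediate since each iteration contributes $O(\log N)$ gates for the diffusion operator and each accepted/rejected outcome contributes at most $O(\log N)$ for the test.

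For the uniformity and independence claim, I would use the permutation symmetry of Grover. For any permutation $\sigma$ of $[N]$ that fixes the set of marked indices $M=\{i : x_i=1\}$, both the uniform initial state and the Grover operator commute with the induced permutation of basis states. Hence, for every $j$, the post-iteration state has the form
\[
  \alpha_j \frac{1}{\sqrt{k}}\sum_{i \in M}\ket{i} + \beta_j \frac{1}{\sqrt{N-k}}\sum_{i \notin M}\ket{i},
\]
so conditional on the measurement outcome being in $M$, the index is uniform on $M$. Because this holds for every round and every choice of $j$, the identity of the returned marked element is uniform on $M$ conditional on any value of the stopping time $Q$. This yields both the uniformity and the independence from $Q$.

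The main obstacle is the query-complexity bookkeeping, specifically verifying that the average success probability stays bounded away from zero once $m$ reaches $\Theta(\sqrt{N/k})$ and controlling the harmless case $m$ growing up to $\sqrt{N}$ when $k$ is as small as $1$; this is exactly the content of \cite[Thm.~3]{bbht:bounds}, and I would simply invoke it. The slight strengthening over the usual statement — that the returned index is uniformly random on $M$ \emph{and} independent of $Q$ — is not always spelled out in the literature, but follows from the symmetry argument above applied round-by-round.
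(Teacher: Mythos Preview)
The paper does not give its own proof of this theorem: it is stated as a preliminary result and attributed to \cite[Thm.~3]{bhmt:countingj} (with the underlying algorithm going back to \cite{bbht:bounds}), so there is nothing in the paper to compare your argument against line by line. Your proposal is a faithful reconstruction of the standard BBHT algorithm and its analysis, and the expected-query bound you sketch is exactly the one proved in those references; in particular, the constraint $\lambda \in (1,4/3)$ is the correct one for the geometric series $\sum_s (3/4)^s \lambda^s m$ to converge.

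Your symmetry argument for the uniformity and independence of the output from $Q$ is correct and worth having, since, as you note, this strengthening is often used but rarely written out. The key point you identify---that for every round and every choice of $j$ the pre-measurement state lies in the two-dimensional span of the uniform-marked and uniform-unmarked states---indeed implies that, conditional on the full history of $j$-values and on which round first produced a marked outcome, the returned index is uniform on the marked set; since $Q$ is a function of that history, independence follows. One small point: you should also note that the cap $m \leftarrow \min(\lambda m,\sqrt{N})$ together with the $j=0$ case (which is just classical sampling) handles the regimes $k$ very small and $k$ close to $N$, respectively, so that the ``success probability $\geq 1/4$'' claim holds uniformly; this is in BBHT, but it is the one place your sketch is terse.
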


\begin{procedure}[t]
  \caption{Grover$_{2/3}$($U_x,\klb$)}\label{alg:grover prob}
  \Input{Quantum oracle $U_x$ to access $x \in \{ 0,1 \}^N$ and a lower bound $\klb$ on $\abs{x}$.}
  \Output{An index $i \in [N]$.}
  \Guarantee{If $\abs{x} \geq 1$, then with probability $\geq 2/3$, $x_i = 1$.}
  \Analysis{\cref{lem:grover with lower bound}}
\end{procedure}

\begin{lem}
\label{lem:grover with lower bound}
Let $x \in \{0,1\}^N$.
Then there is a quantum algorithm \GroverLB{} that takes as input a quantum oracle $U_x$ to access $x$ and a lower bound $\klb \geq 1$ on $\abs{x}$. With probability $\geq 2/3$, it outputs an index $i \in [N]$ such that $x_i = 1$.
It uses~$\bigO{\sqrt{N/\klb}}$ quantum queries to~$x$, and~$\bigO{\sqrt{N/\klb} \log(N)}$ additional gates.
\end{lem}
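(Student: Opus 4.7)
The plan is to derive this directly from \GroverExpectation{} (\cref{thm:grover in expectation}) by running it with a hard truncation, using Markov's inequality to turn the expected-query bound into a high-probability bound.

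Concretely, I would define \GroverLB{} as follows. Since $\klb \geq 1$ is a valid lower bound, we have $\abs{x} = k \geq \klb \geq 1$, so \GroverExpectation{}($U_x$) returns a marked index $i$ using a random number of queries $Q$ with $\EE[Q] \leq C' \sqrt{N/k}$ for some absolute constant $C'$ furnished by \cref{thm:grover in expectation}. Since $k \geq \klb$, we have $\EE[Q] \leq C' \sqrt{N/\klb}$. I would simulate \GroverExpectation{} but abort as soon as the query count exceeds $T := 3 C' \sqrt{N/\klb}$, in which case the algorithm outputs an arbitrary index (say $i = 1$).

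The correctness argument is then a one-line application of Markov's inequality:
\begin{equation*}
    \Pr[Q > T] \;=\; \Pr[Q > 3 C' \sqrt{N/\klb}] \;\leq\; \Pr[Q > 3\,\EE[Q]] \;\leq\; \tfrac{1}{3}.
\end{equation*}
Hence with probability at least $2/3$ the inner call terminates without being aborted, and in that case \cref{thm:grover in expectation} guarantees the returned index is marked. The number of quantum queries used is at most $T = \bigO{\sqrt{N/\klb}}$, and since \cref{thm:grover in expectation} states that the gate cost is $\bigO{Q \log N}$, the truncation forces a total gate cost of $\bigO{\sqrt{N/\klb}\,\log N}$.

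The only subtlety worth being careful about is that we must be able to halt the inner procedure mid-execution after a predetermined number of queries; this is fine because \GroverExpectation{} is explicitly a loop of Grover-style amplitude amplifications with geometrically growing numbers of iterations (as in \cite{bbht:bounds}), so we can maintain a running query counter and abort in between iterations once the threshold $T$ is crossed. I do not expect any real obstacle here beyond choosing the truncation constant correctly.
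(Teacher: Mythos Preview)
Your proposal is correct and essentially identical to the paper's own proof: both run \GroverExpectation{}, truncate after $\Theta(\sqrt{N/\klb})$ queries, and invoke Markov's inequality on the expected query count $\EE[Q] = O(\sqrt{N/k}) \leq O(\sqrt{N/\klb})$ to conclude that the truncated run still succeeds with probability at least $2/3$.
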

\begin{proof}
The algorithm \GroverExpectation{} finds an index $i$ such that $x_i=1$. Its number of applications of controlled-$U_x$ is a random variable $Q$ and the number of additional gates is $O(Q\cdot \log(N))$. By \cref{thm:grover in expectation} we have $\EE[Q] = O(\sqrt{N/\abs{x}})$. Markov's inequality shows that if we terminate \GroverExpectation{} after at most $C\sqrt{N/\abs{x}}$ quantum queries for a suitable constant $C>0$, then it finds an index $i$ such that $x_i=1$ with probability at least $2/3$. The procedure \GroverLB{} uses the lower bound $\klb$ on $\abs{x}$ to decide when to terminate \GroverExpectation{}. For the same constant $C>0$ as before, it terminates after at most $C \sqrt{N/\klb}$ quantum queries.
Since $C\sqrt{N/\klb} \geq C \sqrt{N/\abs{x}}$, the success probability of \GroverLB{} is also at least $2/3$.
\end{proof}

Let us make some remarks about the complexity of finding a single marked element. First, to find such an element with certainty one can essentially remove the $\log(N)$ factor in the gate complexity: $O(\sqrt{N}\log(\log^*(N)))$ gates suffice~\cite{arunachalam:qsearchgatesj}. Second, by cleverly combining \GroverCertainty{} and \GroverLB{}, one can find a marked element (among an unknown number of solutions) with probability $\geq 1-\rho$ using $\sqrt{N \log(1/\rho)}$ quantum queries~\cite{buhrmanBoundsSmallerrorZeroerror1999}. This shows that the standard way of boosting the success probability of \GroverLB{} is not optimal.

Next, we recall a well-known result on approximate counting.
\begin{procedure}[t]
  \caption{ApproxCount($U_x$, $\eps$, $\rho$)}\label{alg:approx counting mult error}
  \Input{Quantum oracle $U_x$ to access $x \in \{ 0,1 \}^N$, rational number $\eps > 0$ such that $\frac{1}{3N} < \eps \leq 1$, failure probability $\rho > 0$.}
  \Output{Integer $\tilde k \in \{ 0, \dotsc, N \}$.}
  \Guarantee{If $\abs{x} = k \geq 1$, with probability $\geq 1 - \rho$, $\abs{\tilde k - k} \leq \eps k$, and if $k = 0$ then $\tilde k = 0$ with certainty.}
  \Analysis{\cref{thm:approx counting}}
\end{procedure}

\begin{thm}[{\cite[Thm.~18]{bhmt:countingj}}]
\label{thm:approx counting}
Let $x \in \{0,1\}^N$ and write $\abs{x}=k$. Let $\frac{1}{3N}<\eps \leq 1$.
Then there is a quantum algorithm that, with probability at least~$2/3$, that outputs an estimate $\tilde k$ such that
\vspace{-2mm}
\[
\vspace{-2mm}
\abs{\tilde k - k} \leq \eps k
\]
using an expected number of
\[
    \Theta\left(\sqrt{\frac{N}{\lfloor \eps k \rfloor +1}} + \frac{\sqrt{k(N-k)}}{\lfloor \eps k\rfloor +1}\right)
\]
quantum queries to~$x$. If $k=0$, then the algorithm outputs $\tilde k = 0$ with certainty, using $\Theta(\sqrt{N})$ quantum queries to $x$. In both cases, the algorithm uses a number of gates which is~$\bigO{\log(N)}$ times the number of quantum queries.
 To boost the success probability to $1-\rho$, repeat the procedure $O(\log(1/\rho))$ many times and output the median of the returned values.
\end{thm}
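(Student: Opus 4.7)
The plan is to apply amplitude estimation (\cref{lem:ampest}) to the unitary $U = U_x\,(H^{\otimes n} \otimes I)$ acting on $\ket{0^{n+1}}$, which prepares the state $\tfrac{1}{\sqrt N}\sum_{i\in[N]} \ket{i}\ket{x_i}$. The amplitude of the ``good'' part (last qubit equal to $\ket 1$) equals $\sqrt{k/N}$, so $a = k/N$; rounding $N \tilde a$ to the nearest integer to obtain $\tilde k$, \cref{lem:ampest} with parameter $M$ gives
\[
  \abs{\tilde k - k} \le \frac{2\pi\sqrt{k(N-k)}}{M} + \frac{\pi^2 N}{M^2}
\]
with probability at least $8/\pi^2$, using $2M$ queries to $U_x$ and $\bigO{M \log N}$ additional gates.

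Had $k$ been known, choosing the smallest power of two $M$ exceeding a suitable constant multiple of
\[
  M^*(k) \;:=\; \sqrt{\tfrac{N}{\lfloor\eps k\rfloor + 1}} \;+\; \frac{\sqrt{k(N-k)}}{\lfloor\eps k\rfloor+1}
\]
would make both error terms at most $\eps k / 2$, yielding the stated query count directly. Since $k$ is unknown, I would use an exponential search: run \AmpEst{} with $M_\ell = 2^\ell$ for $\ell = 0, 1, 2, \ldots$, repeating each level $O(1)$ times and taking a median to boost the per-level confidence, and stop as soon as the current estimate $\tilde k_\ell$ is self-consistent with the current $M_\ell$, i.e. $M_\ell \ge \alpha \cdot M^*(\tilde k_\ell)$ for a sufficiently large safety factor $\alpha$. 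Because the schedule is geometric, the total number of queries is within a constant factor of the final $M_\ell$, and under the event that all levels succeed (which happens with probability $\ge 2/3$ by a union bound over the $\bigO{\log N}$ potential iterations), this matches the claimed expected complexity.

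The case $k = 0$ is automatic: when $a = 0$, the initial state $U\ket{0^{n+1}}$ lies entirely in the $+1$-eigenspace of the Grover operator driving \AmpEst{}, so the underlying phase estimation returns phase $0$ with certainty, hence $\tilde a = 0$. Thus the termination criterion first triggers once $M_\ell$ reaches $\Theta(\sqrt N)$, using $\Theta(\sqrt N)$ queries in total, and the algorithm outputs $\tilde k = 0$ deterministically. In both cases the gate complexity is $\bigO{\log N}$ times the query count, since each level of \AmpEst{} incurs only this overhead. Finally, to boost the overall success probability from $2/3$ to $1 - \rho$, I would run the entire procedure $O(\log(1/\rho))$ times independently and return the median of the resulting estimates, which is a valid multiplicative $\eps$-approximation whenever a majority of runs succeeds.

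The main obstacle is justifying the stopping criterion, since $M^*(\tilde k)$ is evaluated on a noisy estimate of $k$. One must show both that (a) once $M_\ell$ genuinely exceeds $M^*(k)$, the estimate $\tilde k_\ell$ is accurate to within a multiplicative $\eps$ and the criterion triggers, and (b) if the criterion triggers at some earlier $M_\ell$, the output $\tilde k_\ell$ is still an $\eps$-approximation. Both directions reduce to choosing $\alpha$ and the per-level confidence large enough that $\tilde k_\ell$ is within, say, a factor $2$ of $k$ whenever the criterion is evaluated, so that $M^*(\tilde k_\ell)$ and $M^*(k)$ differ by only a constant factor; this is where the careful interaction between the geometric schedule, the median boosting per level, and the union bound over levels comes in.
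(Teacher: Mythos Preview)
The paper does not prove this theorem; it is quoted from~\cite{bhmt:countingj} and used as a black box. The only hint the paper gives about the proof is the parenthetical remark after \cref{cor:factor3-approx}: the BHMT algorithm ``in fact starts by obtaining a constant factor approximation of~$\abs{x}$''. In other words, BHMT run a separate rough-counting phase first and then call \AmpEst{} once with~$M$ set from that rough estimate, rather than interleaving estimation with a self-consistency test as you propose.

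Your exponential-search-with-self-check idea can be made to work, but the justification you give for the stopping rule is not correct as written. You say that both directions ``reduce to choosing~$\alpha$ and the per-level confidence large enough that~$\tilde k_\ell$ is within, say, a factor~$2$ of~$k$ whenever the criterion is evaluated''. Boosting the per-level confidence by repeating and taking medians only increases the probability that the \emph{error bound} of \cref{lem:ampest} holds; it does not tighten that bound. At an early level where~$M_\ell$ is small, the bound~$2\pi\sqrt{k(N-k)}/M_\ell + \pi^2 N/M_\ell^2$ can be huge, so no amount of median-boosting forces~$\tilde k_\ell$ within a factor~$2$ of~$k$. What actually rescues direction~(b) is a different observation: from~$\abs{\tilde a - a} \leq 2\pi\sqrt{a(1-a)}/M + \pi^2/M^2$ one gets~$\lvert\sqrt{\tilde a}-\sqrt a\rvert \leq \pi/M$, and hence the error can be rewritten with~$\tilde k_\ell$ in place of~$k$ up to constants. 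Only then does the test~$M_\ell \geq \alpha\, M^*(\tilde k_\ell)$ certify~$\abs{\tilde k_\ell - k} \leq \eps k$. Separately, ``$O(1)$ repetitions per level'' followed by a union bound over~$\Theta(\log N)$ levels does not yield overall success probability~$2/3$; you need either~$O(\log\log N)$ repetitions per level or a more careful accounting of which levels actually matter for the stopping decision. The BHMT two-phase approach avoids both of these complications.
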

We often use the special case $\eps=1/2$ of the above theorem, hence we record it here for future use. (Note that the proof of \cref{thm:approx counting} given in \cite{bhmt:countingj} in fact starts by obtaining a constant factor approximation of $\abs{x}$.)
\begin{cor} \label{cor:factor3-approx}
Let $x \in \{0,1\}^N$ and write $\abs{x}=k$. Then there is a quantum algorithm that outputs a $\kest$ such that, with probability $\geq 1-\rho$, we have $k/2\leq \kest \leq 3 k/2$, and uses $\bigO{\sqrt{N/({k+1})}\log(1/\rho)}$ quantum queries and $\bigO{\sqrt{N/({k+1})}\log(1/\rho)\log(N)}$ gates.
\end{cor}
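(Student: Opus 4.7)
The plan is to obtain \cref{cor:factor3-approx} as a direct specialisation of \cref{thm:approx counting} with multiplicative precision $\eps = 1/2$. The hypothesis $\tfrac{1}{3N} < \tfrac{1}{2} \leq 1$ holds for all $N \geq 1$, so one invocation of \ApproxCount{} returns, with probability $\geq 2/3$, an integer $\tilde k$ satisfying $\abs{\tilde k - k} \leq k/2$, which is exactly the bound $k/2 \leq \tilde k \leq 3k/2$. The case $k = 0$ is handled for free by the second clause of \cref{thm:approx counting}, which outputs $0$ deterministically in $\bigO{\sqrt{N}}$ queries.

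Next I would simplify the query count. For $\eps = 1/2$ one has $\lfloor k/2 \rfloor + 1 = \Theta(k+1)$ uniformly in $k \geq 0$, and $\sqrt{k(N-k)} \leq \sqrt{kN}$ together with $\sqrt{k}/(k+1) = O(1/\sqrt{k+1})$ gives
\[
    \sqrt{\frac{N}{\lfloor k/2 \rfloor + 1}} + \frac{\sqrt{k(N-k)}}{\lfloor k/2\rfloor + 1}
    \;=\; \bigO{\sqrt{\frac{N}{k+1}}},
\]
with an $\bigO{\log N}$ factor overhead in gate complexity by the final sentence of \cref{thm:approx counting}. Because that theorem guarantees only an \emph{expected} query count, I would cut off each run at a sufficiently large constant multiple of this expectation; by Markov's inequality, the probability of exceeding the cutoff is at most a small constant, so the success probability of a truncated run still exceeds $1/2$ by a fixed margin.

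Finally, to amplify from this constant success probability to $1-\rho$, I would run $\Theta(\log(1/\rho))$ independent truncated copies and output the median $\kest$. A standard Chernoff bound for independent Bernoullis with success probability bounded away from $1/2$ shows that the median lies in $[k/2, 3k/2]$ with probability $\geq 1-\rho$, yielding $\bigO{\sqrt{N/(k+1)} \log(1/\rho)}$ queries and $\bigO{\sqrt{N/(k+1)} \log(1/\rho) \log(N)}$ gates as claimed. There is no substantive obstacle; the only mild care point is the expected-to-worst-case conversion on a single run, which must leave enough margin above $1/2$ for the median-of-means step to succeed.
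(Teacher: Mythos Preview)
Your approach matches the paper's: the corollary is simply recorded as the special case $\eps = 1/2$ of \cref{thm:approx counting}, with the two terms in the query expression collapsed to $O(\sqrt{N/(k+1)})$ and the success probability boosted via the median trick exactly as that theorem already prescribes. One minor wrinkle in your ``only mild care point'': the truncation threshold you propose is a constant multiple of an expectation that depends on the unknown~$k$, so as written it is not implementable; the paper does not spell this out either, and one resolves it by reading the bound as an expected-query guarantee (as in \cref{thm:approx counting} itself) or by appealing to the geometric-tail structure of the exponential search underlying the counting algorithm of~\cite{bhmt:countingj}.
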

We now discuss known extensions of the above results on counting the Hamming weight of a bit string to the problem of \emph{mean estimation}: given a vector $v \in [0,1]^N$, one is interested in approximating $\bar{v} = \frac{1}{N} \sum_{i=1}^N v_i$.
This was first studied in~\cite{grover:telecomputation} and later in \cite{grover1998framework} where in the latter it was shown that one can find an additive $\eps$-approximation of $\bar{v}$ using $\bigOt{1/\eps}$ quantum queries to a unitary that prepares a state encoding the entries of~$v$ in its amplitudes, and a similar number of additional gates (also dependent on~$N$).
Using amplitude amplification techniques one can reduce the query dependence to $O(1/\eps)$ with $O(\log(N)/\eps)$ additional gates.
This result may be easily recovered from \cref{lem:ampest} with~$M = \Theta(1/\eps)$, applied to a unitary preparing
\vspace{-1mm}
\begin{equation*}
    \vspace{-1mm}
    \frac{1}{\sqrt{N}} \sum_{i=1}^N  \ket{i} (\sqrt{1 - v_i} \ket{0} + \sqrt{v_i} \ket{1}).
\end{equation*}
It is well-known that when one has quantum oracle access to fixed point representations of the entries of~$v$ (cf.~\cref{dfn:vector oracle access}), rather than just a state encoding its entries in the amplitudes, one can give an algorithm whose complexity depends only on~$N$ and~$\delta$, with guarantees as given below.
\begin{thm}
\label{thm:normal multiplicative mean est}
Let $v \in [0,1]^N$ be a vector with each entry $v_i$ encoded in $(0,b)$-fixed-point format, and let~$U_v$ be a unitary implementing binary oracle access to~$v$ (cf.~\cref{dfn:vector oracle access}).
Let~$\rho \in (0,1)$.
Then with $\bigO{\frac{\sqrt{N}}{\delta} \log(1/\rho)}$ applications of controlled-$U_v$, controlled-$U_v^\dagger$, and a polylogarithmic gate overhead,
one can find with probability~$\geq 1 - \rho$ a multiplicative $\delta$-approximation of $\frac{1}{N} \sum_i v_i$.
\end{thm}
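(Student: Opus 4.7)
The plan is to combine quantum maximum-finding with a single application of amplitude estimation on a rescaled state-preparation unitary, and boost via a median of $O(\log(1/\rho))$ independent runs. The reason rescaling is needed is that a direct application of \cref{lem:ampest} to the natural unitary preparing $\frac{1}{\sqrt N}\sum_i \ket i(\sqrt{1-v_i}\ket 0+\sqrt{v_i}\ket 1)$ only gives error $O(\sqrt{\bar v}/M + 1/M^2)$, and turning this into a multiplicative $\delta$-error on $\bar v$ would require $M = \Omega(1/(\delta\sqrt{\bar v}))$, which can be arbitrarily large without any lower bound on $\bar v$.

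First I would run quantum maximum-finding (in the style of D\"urr--H\o yer, boosted by $O(\log(1/\rho))$ independent repetitions and keeping the largest returned value) to obtain $v_{\max} = \max_i v_i$ with probability $\geq 1-\rho/2$ using $\bigO{\sqrt N \log(1/\rho)}$ queries and a polylogarithmic gate overhead. Writing $w_i := v_i/v_{\max} \in [0,1]$, the rescaled mean $a := \frac1N\sum_i w_i = \bar v / v_{\max}$ now satisfies $a \geq 1/N$, since at least one $w_i$ equals $1$.

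Next, build a unitary $U$ that, using one call to $U_v$, standard fixed-point division to compute $w_i$ to $\bigO{\log(N/\delta)}$ bits, and a controlled rotation (implementable in $\polylog(N,1/\delta)$ gates along the lines of the controlled rotations discussed in \cref{sec:computational model}) prepares
\begin{equation*}
    U\ket{0^q} = \tfrac{1}{\sqrt N}\sum_{i=1}^N \ket i \bigl(\sqrt{1-w_i}\ket 0 + \sqrt{w_i}\ket 1\bigr),
\end{equation*}
whose ``good'' amplitude equals $a$. Apply \AmpEst{} to $U$ with $M = 2^{\lceil \log(c\sqrt N/\delta)\rceil}$ for a suitable constant $c$. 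With probability $\geq 8/\pi^2$ the returned $\tilde a$ satisfies
\begin{equation*}
    \abs{\tilde a - a} \leq \frac{2\pi\sqrt{a(1-a)}}{M} + \frac{\pi^2}{M^2} \leq \frac{2\pi\sqrt{a}}{M} + \frac{\pi^2}{M^2} \leq \tfrac{1}{4}\delta a,
\end{equation*}
where the last inequality uses $a \geq 1/N$ and $\delta \leq 1$ to control both terms simultaneously. Boost this success probability to $\geq 1-\rho/2$ by running \AmpEst{} $\bigO{\log(1/\rho)}$ times independently and taking the median (powering lemma). Outputting $\tilde a$ then yields a multiplicative $\delta$-approximation of $a$, and hence $v_{\max}\tilde a$ is a multiplicative $\delta$-approximation of $\bar v$ with probability $\geq 1-\rho$ by a union bound.

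The total query complexity is $\bigO{\sqrt N \log(1/\rho)}$ for maximum-finding plus $\bigO{\sqrt N/\delta \cdot \log(1/\rho)}$ for the boosted amplitude estimation, giving the claimed $\bigO{\sqrt N \log(1/\rho)/\delta}$ bound; the gate complexity is larger only by a $\polylog(N,1/\delta,1/\rho)$ factor. The main (and really only) obstacle is implementing the rescaled state-preparation unitary $U$ with polylogarithmic overhead, which reduces to standard quantum fixed-point arithmetic to compute $w_i$ followed by the controlled amplitude-loading rotation; the rest is bookkeeping of constants in the amplitude-estimation error bound and a routine median-boosting argument.
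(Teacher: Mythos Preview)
Your proposal is correct and follows essentially the same approach as the paper: find $v_{\max}$ via D\"urr--H\o yer, rescale to $w_i = v_i/v_{\max}$ so that $\bar w \geq 1/N$, apply \AmpEst{} with $M = \Theta(\sqrt{N}/\delta)$, and boost via the median trick. The paper's informal proof uses the specific choice $M = 8\sqrt{N}/\delta$ and defers the fixed-point arithmetic details to~\cite{qscalingICALP}, but the structure and the key inequality $1/N \leq \bar w$ enabling the multiplicative error bound are identical to yours.
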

We give an informal description of the algorithm here, and refer the interested reader to~\cite{qscalingICALP} for a careful implementation along with a bit complexity analysis.
By using quantum maximum finding~\cite{durr&hoyer:minimum}, with $\bigO{\sqrt{N}}$ quantum queries and $O(b\sqrt{N}\log(N))$ other gates, one may find $v_{\max} = \max_i v_i$.
If $v_{\max} = 0$ one may output $\bar{v}_\est = 0$ as an estimate of $\bar{v}$.
Note that having binary access here makes it easy to compare elements.
Next, set $w_i = v_i / v_{\max}$, and let~$U$ be a unitary preparing a state
\vspace{-2mm}
\begin{equation*}
\vspace{-2mm}
    \frac{1}{\sqrt{N}} \sum_{i=1}^N  \ket{i} (\sqrt{1 - w_i} \ket{0} + \sqrt{w_i} \ket{1}).
\end{equation*}
Then~\cref{lem:ampest} with~$M = 8 \sqrt{N}/\delta$ outputs an estimate~$\bar{w}_\est$ of~$\bar{w}$, such that
\begin{equation*}
    \abs{\bar{w}_\est - \bar{w}} \leq 2\pi \frac{\sqrt{\bar{w} (1 - \bar{w})}}{M} + \frac{\pi^2}{M^2} \leq \frac{2 \pi}{8} \delta \bar{w} \sqrt{1 - \bar{w}} + \frac{\pi^2}{64} \delta^2 \bar{w} \leq \delta \bar{w}
\end{equation*}
because $1/N \leq \bar{w}$, so $\bar{w}_\est$ is a multiplicative $\delta$-approximation of $\bar{w}$.
Therefore $\bar{w}_\est \cdot v_{\max}$ is a multiplicative $\delta$-approximation of $\bar{v}$.
We note that in this step the binary access to the entries of~$v$ enables the ``binary amplification'' by ensuring the largest entry of~$w$ is~$1$.

\section{Fast Grover search for multiple items, without quantum memory}
\label{sec:fastgrover}

In this section we give a version of Grover's search algorithm for the problem of, given a string~$x \in \{0,1\}^N$, finding \emph{all}~$k$ indices~$i \in [N]$ such that~$x_i = 1$.
For~$\rho \in (0,1)$ with~$\rho = \Omega(1/\!\poly(k))$, our algorithm finds all such indices with probability~$\geq 1-\rho$, and uses~$\bigO{\sqrt{Nk}}$ quantum queries and~$\bigOt{\sqrt{Nk}}$ single- and two-qubit gates.
The contribution here is that the query complexity is optimal and the time complexity is only polylogarithmically worse than the query complexity, without using a QRAM.

\subsection{Deterministic Grover for multiple elements}
\label{subsec:grovercertaintymultiple}
We first recall the well-known result~\cite[Lem.~2]{degraafQuantumVersionsYao2002}, that it is possible to find all solutions with probability~$1$ using~$\bigO{\sqrt{Nk}}$ quantum queries, which is optimal, but suffers from a too-high gate complexity in terms of $k$.
The algorithm is given in \GroverCertaintyMultiple{}.
We first define for each $j \in [N]$ a gate $C_j$, referred to as the ``control-on-$j$-NOT''-gate, and describe how to implement it with a standard gate-set.
The point of this gate is that if one has quantum oracle access $U_x$ to $x \in \{0,1\}^N$, then $C_j U_x$ implements quantum oracle access to the bit string $y \in \{0,1\}^N$ which agrees with $x$ on all indices, except on the $j$-th index, where the bit is flipped.
\begin{lem}
\label{lem:C_j gate properties}
Let $N = 2^n$.
For $j \in [N]$ define the ``control-on-$j$-NOT''-gate $C_j \in \U(\CC^N \otimes \CC^2)$ by
\begin{equation}
\label{eq:C_j gate definition}
    C_j \ket{i} \ket{b} = \begin{cases}
      \ket{i} \ket{b \oplus 1} & \text{if } i = j, \\
      \ket{i} \ket{b} & \text{otherwise.}
    \end{cases}
\end{equation}
Then the $C_j$-gate can be implemented with $\bigO{n}$ standard gates and $n - 1$ ancillary qubits.
\end{lem}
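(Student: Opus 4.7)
The plan is to reduce the $C_j$-gate to an $n$-fold controlled NOT targeting the $\ket{b}$-register, where each of the $n$ qubits of the $\ket{i}$-register serves as a control. Writing the binary encoding $j - 1 = (j_1, \dotsc, j_n) \in \{0,1\}^n$, the condition $i = j$ in \eqref{eq:C_j gate definition} is equivalent to the $\ell$-th qubit of $\ket{i}$ being in state $\ket{j_\ell}$ for every $\ell \in [n]$. For each index $\ell$ with $j_\ell = 0$, I sandwich the $\ell$-th control qubit with an $X$-gate before and after the controlled operation, converting a ``control-on-$\ket{0}$'' into a standard ``control-on-$\ket{1}$''. This costs at most $2n$ additional single-qubit gates, and the classical data $j$ determines which qubits are sandwiched, so no additional ancillas are needed for this part.

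It then suffices to implement the standard $n$-fold controlled NOT with $O(n)$ one- and two-qubit gates and $n-1$ ancillary qubits. I use the textbook Toffoli cascade: initialize ancillas $a_1, \dotsc, a_{n-1}$ in $\ket{0}$, and apply Toffoli$(q_1, q_2; a_1)$, then Toffoli$(q_3, a_1; a_2)$, and so on up to Toffoli$(q_n, a_{n-2}; a_{n-1})$, so that $a_{n-1}$ holds the AND of all $n$ control bits. Apply a CNOT from $a_{n-1}$ to the target, and then uncompute the cascade by running the $n-1$ Toffolis in reverse order. Each Toffoli can be decomposed into a constant number of one- and two-qubit gates from the standard gate set, so the total cost is $O(n)$, and the ancillas are returned to $\ket{0}$.

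There is no real obstacle here: the only care needed is in the bookkeeping of gate and ancilla counts and in verifying that the ancillas are cleanly restored to $\ket{0}$ on every computational branch. The latter is immediate because the Toffoli cascade is its own inverse when run in reverse, and the CNOT on the target qubit does not touch the ancilla register, so the uncomputation commutes with the target update and returns each $a_\ell$ to $\ket{0}$.
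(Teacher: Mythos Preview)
Your proof is correct and takes essentially the same approach as the paper: conjugate by $X$-gates on the bits where $j_\ell = 0$, then implement the $n$-fold controlled NOT via the standard Toffoli cascade with $n-1$ ancillas (which the paper cites from Nielsen--Chuang, Fig.~4.10, rather than spelling out). Your extra remarks on clean uncomputation of the ancillas are correct and slightly more explicit than the paper's version.
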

\begin{proof}
Let $\ket{j} = \ket{j_1 \dotsc j_n}$ be the binary encoding of $j-1$.
Then:
\begin{enumerate}
    \item For each $l \in [n]$ such that $j_l = 0$, apply a NOT gate on the $l$-th qubit of the index register.
    \item Apply a NOT-gate to the output register containing $b$, controlled on all $n$ qubits of the index register.
    This can be implemented using $O(n)$ Toffoli gates, one CNOT gate, and $n-1$ ancilla qubits, see \cite[Fig.~4.10]{nielsen-chuang}.
    \item Apply the NOT gates from the first step again. \qedhere
\end{enumerate}
\end{proof}

\begin{procedure}[t]
  \caption{GroverCertaintyMultiple($U_x$, $\kub$)}\label{alg:grover multi certainty large gate overhead}
  \Input{Quantum oracle $U_x$ to access $x \in \{0,1\}^N$, an integer $\kub \geq 1$.}
  \Output{Classical list of indices $J \subseteq [N]$.}
  \Guarantee{If $\abs{x} \leq \kub$, then for every $j \in [N]$, $j \in J$ if and only if $x_j = 1$.}
  \Analysis{\cref{lem:all-sol-grover-slow}}
  \medskip

  $J_{\kub} \leftarrow \emptyset$\;
  $U_{J_{\kub}} \leftarrow U_x$\;
  $m \leftarrow \kub$\;

  \smallskip
  \While{$m > 0$}{
    use \GroverCertainty$(U_{J_m}, m)$ to find a $j \in [N] \setminus J_m$\;
    \uIf{$x_j = 1$}{
        $J_{m-1} \leftarrow J_m \cup \{j\}$\;
        $U_{J_{m-1}} \leftarrow C_j U_{J_m}$, where $C_j$ is defined in \cref{lem:C_j gate properties}\;
    } \Else{
        $J_{m-1} \leftarrow J_m$\;
        $U_{J_{m-1}} \leftarrow U_{J_m}$\;
    }
    $m \leftarrow m - 1$\;
  }

  \Return{$J_0$}\;
\end{procedure}
\begin{lem}\label{lem:all-sol-grover-slow}
    Let $x\in \{0,1\}^N$, $U_x$ a quantum oracle to access $x$, and $\kub \geq 1$. If $\abs{x} = k \leq \kub$, then \GroverCertaintyMultiple{}($U_x$, $\kub$) finds, with probability $1$, all $k$ indices $i$ such that $x_i = 1$.
    The algorithm uses
    \[
        \bigO{\sqrt{N \kub}}
    \]
    applications of $U_x$, and
    \[
    \bigO{\sqrt{N \kub} (k + 1) \log(N)}
    \]
    additional non-query gates.
\end{lem}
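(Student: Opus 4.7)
The plan is to analyze \GroverCertaintyMultiple{} by combining a correctness argument based on a loop invariant with a straightforward per-iteration cost summation. First I would observe that the oracle $U_{J_m}$ built inside the algorithm implements quantum oracle access to the string $x^{J_m}$ defined by $x^{J_m}_i = x_i$ for $i \notin J_m$ and $x^{J_m}_i = 0$ for $i \in J_m$. Indeed, $U_{J_m} = \bigl(\prod_{j \in J_m} C_j\bigr) U_x$ (the $C_j$ gates pairwise commute), and every $j \in J_m$ was added only after the algorithm certified $x_j = 1$, so $C_j$ flips the oracle bit for index $j$ back from $1$ to $0$. Consequently $\abs{x^{J_m}} = k - \abs{J_m}$.

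The key correctness claim is then the downward-induction loop invariant $k - \abs{J_m} \leq m$. The base case $m = \kub$ is the hypothesis $k \leq \kub$. For the inductive step, the iteration calls \GroverCertainty$(U_{J_m}, m)$, producing some $j \notin J_m$. If $x_j = 1$ then $\abs{J_{m-1}} = \abs{J_m} + 1$, and the bound drops by one on both sides. If $x_j = 0$ then $\abs{J_{m-1}} = \abs{J_m}$; but the certainty guarantee of \cref{thm:grover with certainty} forbids $\abs{x^{J_m}} = m$ (otherwise $x^{J_m}_j$ would equal $1$, i.e.\ $x_j = 1$), so $k - \abs{J_m} = \abs{x^{J_m}} \leq m - 1$. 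At termination ($m = 0$) the invariant yields $\abs{J_0} \geq k$, and since $J_0$ only ever collects genuine marked indices we conclude $J_0 = \{ i : x_i = 1 \}$ with probability $1$.

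For the cost, I would apply \cref{thm:grover with certainty} to each call \GroverCertainty$(U_{J_m}, m)$: it uses $\bigO{\sqrt{N/m}}$ applications of $U_{J_m}$ and $\bigO{\sqrt{N/m} \log N}$ additional gates. A single application of $U_{J_m}$ costs one query to $U_x$ together with $\abs{J_m} \leq k$ control-on-$j$ gates, each implemented in $\bigO{\log N}$ standard gates by \cref{lem:C_j gate properties}. Summing over $m = 1, \dotsc, \kub$ and using $\sum_{m=1}^{\kub} 1/\sqrt{m} = \bigO{\sqrt{\kub}}$ yields $\bigO{\sqrt{N \kub}}$ queries to $U_x$ in total, and $\bigO{(k+1) \sqrt{N \kub} \log N}$ additional gates (the $+1$ absorbs the $\bigO{\sqrt{N/m} \log N}$ term from \GroverCertainty{} itself, and covers the edge case $k = 0$).

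The main obstacle is the invariant argument. \GroverCertaintyMultiple{} does not know whether $m$ matches the true remaining count, so the cleanest way to exploit \GroverCertainty{} is contrapositively: whenever $U_{J_m}$ happens to have exactly $m$ marked elements, \cref{thm:grover with certainty} \emph{guarantees} that the returned index is marked, so the appearance of an unmarked output $j$ is itself a certificate that the remaining count is strictly less than $m$. This is what forces the invariant to tighten by one even on ``bad'' iterations; everything else is routine bookkeeping.
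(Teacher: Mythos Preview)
Your proposal is correct and follows essentially the same approach as the paper: both arguments hinge on the observation that $U_{J_m}$ implements access to the string with the found indices zeroed out, use the contrapositive of the \GroverCertainty{} guarantee to handle failure iterations, and sum $\sum_{m=1}^{\kub} \sqrt{N/m}$ for the complexity. Your loop invariant $k - \abs{J_m} \leq m$ is exactly the paper's ``gap'' $m - \abs{y^m} \geq 0$, just stated more explicitly as an induction.
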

\begin{proof}
    We first establish correctness of \GroverCertaintyMultiple{}.
    For $m \in [\kub]$, let $J_m \subseteq [N]$ be the index set and $U_{J_m}$ the unitary in the algorithm at the $m$-th step.
    Then by the definition of $C_j$, $U_{J_m}$ implements oracle access to the bit string $y^m$ which agrees with $x$ on $[N] \setminus J_m$, and is zero on the indices in $J_m$ (whereas $x_j = 1$ for $j \in J_m$). Clearly $j \in J_0$ implies that $x_j=1$. It remains to show that in $\kub$ iterations we find \emph{all} marked elements. To do so, observe that there can be at most $\kub - k$ iterations in which one fails to find a new $j \in [N]$ such that $x_j = 1$: indeed, as soon as this happens, we have $m = \abs{y^m}$, and every iteration afterwards we find a new index with certainty by the guarantees of \GroverCertainty{}.

    In total, this procedure uses $\sum_{m=1}^{\kub} \bigO{\sqrt{N/m}} = \bigO{\sqrt{N\kub}}$ applications of $U_x$.
    The number of auxiliary gates for a single query in the $m$-th iteration is $\bigO{\abs{J_m} \cdot \log(N)}$, and \GroverCertainty{} itself uses an additional $\bigO{\sqrt{N/m} \log(N)}$ additional gates.
    Therefore the total number of gates in the $m$-th iteration is
    \begin{equation*}
        \bigO{\sqrt{N/m} \cdot \abs{J_m} \cdot \log(N) + \sqrt{N/m} \log(N)} = \bigO{\sqrt{N/m} (k + 1) \log(N)}
    \end{equation*}
    Summing this over all iterations yields a total gate complexity of $\bigO{\sqrt{N\kub} (k + 1) \log(N)}$.
\end{proof}

%--------------------------------

\subsection{Coupon collecting Grover}
\label{subsec:grovercoupon}
We next give another simple version of Grover which can be used to find a large fraction of the marked elements in a time-efficient manner, but does not yield a query-optimal bound when the fraction is close to~$1$.
The algorithm is given in~\GroverCoupon{}, and is analyzed in~\cref{prop:most-sol-grover}.

The algorithm is simple: the idea is to repeatedly call~\GroverLB{} to sample marked elements.
The analysis is based on the observation that the required number of calls to~\GroverLB{} is a sum of geometrically distributed random variables: for~$1 \leq i \leq t$, the number of calls to obtain the~$i$-th distinct marked element is a geometrically distributed random variable with success probability~$p_i' \geq \frac23 p_i$, where~$p_i = (k-i+1)/k$ is the probability of observing a new element after $i-1$ distinct elements have been found.
This is because~$\GroverLB{}$ succeeds with probability~$\geq 2/3$, and the fact that if~$\GroverLB{}$ successfully finds a marked element, then it is uniformly random among the marked elements. The number of calls can then by bounded using a general tail bound on sums of geometrically distributed random variables given in~\cite[Thm.~2.3]{Janson:geometrictail} (see \cref{lem:geometrictail}).

The analysis is based on tail bounds of sums of geometrically distributed random variables. These tail bounds in turn are stated in terms of the harmonic numbers, for which we recall some basic properties in the following lemma.
\begin{lem}
\label{lem:harmonic numbers}
The $k$-th harmonic number $H_k$ is defined by $H_k = \sum_{j=1}^k \frac{1}{j}$, and we shall use the convention $H_0 = 0$.
For $k \geq 1$ it satisfies
\begin{equation*}
    H_k - \gamma - \ln(k) \in \left[\frac{1}{2(k+1)}, \frac{1}{2k}\right],
\end{equation*}
where $\gamma \approx 0.577$ is the Euler--Mascheroni constant.
Furthermore, for $0 \leq t < k$, this implies
\begin{equation*}
    H_k - H_{k-t} \leq \ln\left(\frac{k}{k-t}\right) + \frac{2k-t+1}{2 k (k-t+1)},
\end{equation*}
which in turn for $t \leq k/2 < k$ implies
\begin{equation*}
    H_k - H_{k-t}
    \leq \frac{2(t+1)}{k}.
\end{equation*}
\end{lem}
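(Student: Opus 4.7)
The plan is to establish the three bounds in sequence, where the first bound is the technical core and the other two follow by routine estimates.

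For the first bound, I would set $a_k := H_k - \ln k - \gamma$ and use that $\gamma = \lim_{k \to \infty} a_k$ to write $a_k = \sum_{j=k}^{\infty}(a_j - a_{j+1}) = \sum_{j=k}^{\infty}\bigl(\ln(1 + 1/j) - 1/(j+1)\bigr)$. Each summand admits the integral representation $\ln(1+1/j) - 1/(j+1) = \int_j^{j+1} \bigl(1/x - 1/(j+1)\bigr)\,dx = \int_j^{j+1} \frac{j+1-x}{x(j+1)}\,dx$. Replacing $x$ in the denominator by its extremal values $j$ and $j+1$, and using $\int_j^{j+1} (j+1-x)\,dx = 1/2$, yields the two-sided bound $1/(2(j+1)^2) \leq \ln(1+1/j) - 1/(j+1) \leq 1/(2j(j+1))$. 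Summing the upper estimate via the telescoping identity $1/(j(j+1)) = 1/j - 1/(j+1)$ gives $a_k \leq 1/(2k)$, and summing the lower estimate via $1/(j+1)^2 \geq 1/((j+1)(j+2))$ together with the same telescoping trick gives $a_k \geq 1/(2(k+1))$.

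For the second bound, I would apply the first bound to $H_k$ as an upper estimate and to $H_{k-t}$ as a lower estimate, obtaining $H_k - H_{k-t} \leq \ln(k/(k-t)) + 1/(2k) - 1/(2(k-t+1))$; this is trivially dominated by $\ln(k/(k-t)) + 1/(2k) + 1/(2(k-t+1))$, and combining the last two fractions produces $(2k-t+1)/(2k(k-t+1))$ as claimed. For the third bound, under $t \leq k/2$ I would invoke $-\ln(1-x) \leq x/(1-x) \leq 2x$ for $x \in [0, 1/2]$ with $x = t/k$ to get $\ln(k/(k-t)) \leq 2t/k$, while the correction term satisfies $1/(2k) + 1/(2(k-t+1)) \leq 1/(2k) + 1/(k+2) \leq 3/(2k)$ using $k - t + 1 \geq k/2 + 1$. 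Summing the two contributions gives $2t/k + 3/(2k) \leq 2(t+1)/k$, as required.

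The main obstacle is the lower half of the first bound: naive convexity arguments (such as Hermite--Hadamard applied to $1/x$ on each $[j,j+1]$) only yield $H_k \geq \ln k + 1/2 + 1/(2k)$, which is strictly weaker than the claim because $\gamma > 1/2$. The integral-representation trick above, which amounts to a sharp estimate of the trapezoidal error on $[j, j+1]$, is what recovers the constant $\gamma$ together with a $1/k$-type remainder of the correct order.
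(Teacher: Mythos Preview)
Your proposal is correct. For the second and third inequalities your argument is essentially identical to the paper's: both apply the upper bound to $H_k$ and the lower bound to $H_{k-t}$ (you correctly note that the resulting $-\tfrac{1}{2(k-t+1)}$ is simply weakened to $+\tfrac{1}{2(k-t+1)}$ to match the stated form), and both bound $\ln\bigl(\tfrac{k}{k-t}\bigr)$ by $2t/k$ and the remaining fraction by $2/k$ under $t \le k/2$. The one genuine difference is the first inequality: the paper does not prove it at all but simply cites \cite{young1991}, whereas you supply a complete elementary argument via the telescoping series $a_k = \sum_{j \ge k}\bigl(\ln(1+1/j) - 1/(j+1)\bigr)$ together with sharp two-sided bounds on each summand. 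Your argument is clean and self-contained, and the ``obstacle'' you flag (that naive convexity gives only $1/2$ in place of $\gamma$) is exactly the point that makes the integral-representation trick worth recording.
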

\begin{proof}
  The bounds on $H_k - \gamma - \ln(k)$ are well-known, see~\cite{young1991} for an elementary proof.
    For the estimate on $H_k - H_{k-t}$ we have
    \begin{align*}
        H_k - H_{k-t} \leq \ln(k) - \ln(k-t) + \frac{1}{2k} + \frac{1}{2(k-t+1)} = \ln \left(\frac{k}{k-t}\right) + \frac{2k-t+1}{2 k (k-t+1)}.
    \end{align*}
    Furthermore, if $t \leq k/2 < k$, then
    \begin{align*}
    \ln\left(1 + \frac{t}{k-t}\right) + \frac{2k-t+1}{2 k (k-t + 1)} & \leq \frac{2t}{k} + \frac{2k+1}{2k (k/2 + 1)}
    %= \frac{2t}{k} + \frac{2k+1}{k (k + 2)} 
    \leq \frac{2t}{k} + \frac{2}{k} = \frac{2(t+1)}{k}. \qedhere
    \end{align*}
\end{proof}

We use the following tail bound for geometrically distributed variables.
\begin{lem}[{\cite[Thm.~2.3]{Janson:geometrictail}}] \label{lem:geometrictail}
  For $i \in [n]$ assume $X_i \sim \mathrm{Geo}(p_i)$ for $p_i \in (0,1]$. Let $X = \sum_{i \in [n]} X_i$ and write $\mu = \Exp[X]$, $p_* = \min_{i \in [n]} p_i$. Then for any $\lambda \geq 1$ we have
  \[
    \Pr[X \geq \lambda \mu] \leq \lambda^{-1}(1-p_*)^{(\lambda-1-\ln(\lambda))\mu}.
  \]
\end{lem}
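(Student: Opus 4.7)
The plan is to apply the Cram\'er--Chernoff method and reduce the multi-parameter problem to a single geometric distribution with parameter $p_*$. Fix any $s \in (1, 1/(1-p_*))$; then Markov's inequality applied to $s^X$ gives
\[
\Pr[X \geq \lambda \mu] \leq s^{-\lambda \mu} \prod_{i \in [n]} \Exp[s^{X_i}] = s^{-\lambda \mu} \prod_{i \in [n]} \frac{p_i s}{1 - (1-p_i) s},
\]
using the closed form of the $\mathrm{Geo}(p_i)$ moment generating function, which is finite since $s(1-p_i) \leq s(1-p_*) < 1$.

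The key technical step is a per-coordinate comparison that rewrites the product in terms of $\mu$ and $p_*$ only. Writing $u_i := 1/p_i$ and $\beta := 1 - 1/s \in (0,1)$, a direct computation gives $\ln \Exp[s^{X_i}] = -\ln(1 - \beta u_i) =: \phi(u_i)$. Since $\phi$ is convex on $[0, 1/\beta)$ with $\phi(0) = 0$, the ratio $\phi(u)/u$ is non-decreasing, so $\phi(u_i) \leq (u_i/u_*)\phi(u_*)$ for every $i$, where $u_* := 1/p_*$. Summing and using $\sum_i u_i = \mu$ yields $\Exp[s^X] \leq \Exp[s^{Y_*}]^{p_* \mu}$ for $Y_* \sim \mathrm{Geo}(p_*)$, and therefore
\[
\Pr[X \geq \lambda \mu] \leq s^{-\lambda \mu}\left(\frac{p_* s}{1 - (1-p_*) s}\right)^{p_* \mu}.
\]

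It then remains to choose $s$ and simplify. Taking $s = (\lambda - p_*)/[\lambda(1-p_*)]$---the Chernoff-optimal tilt for a $\mathrm{Geo}(p_*)$ variable at level $\lambda/p_*$---gives a bound of the form $\bigl[\lambda^{p_*}(1-p_*)^{\lambda - p_*}(\lambda/(\lambda - p_*))^{\lambda - p_*}\bigr]^{\mu}$. The logarithm of this expression can be rearranged using the identity $\int_1^\lambda (1 - 1/t)\, dt = \lambda - 1 - \ln \lambda$ together with the elementary inequality $\ln(1-p_*) \leq -p_*$ to recover an exponent of the form $(\lambda - 1 - \ln \lambda)\mu \ln(1-p_*)$, which is the exponent in the target bound.

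The main obstacle I expect is producing the explicit prefactor $\lambda^{-1}$: the Chernoff step above only yields a pure exponential bound of the form $(1-p_*)^{c(\lambda)\mu}$, without any polynomial factor out front. Extracting $\lambda^{-1}$ plausibly requires either exploiting the integrality of $X$ (dominating the tail sum $\Pr[X \geq m]$ by the geometric series $\Pr[X = m]/p_*$ and then applying the Chernoff estimate to $\Pr[X = m]$ for $m = \lceil \lambda \mu\rceil$) or a slightly different choice of tilt that trades an extra $\ln \lambda$ in the exponent for an explicit polynomial prefactor. Making the constants line up exactly with Janson's clean form is the most delicate part of the argument.
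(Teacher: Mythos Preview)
The paper does not prove this lemma; it is quoted from Janson without argument, so there is no in-paper proof to compare against.

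Your Cram\'er--Chernoff skeleton with the convexity reduction $\phi(u_i)\leq (u_i/u_*)\phi(u_*)$ is correct and is the right way to collapse the moment generating function to the single parameter~$p_*$; the tilt $s=(\lambda-p_*)/[\lambda(1-p_*)]$ is indeed optimal for the resulting one-parameter bound. Your diagnosis of the remaining obstacle is also correct, and it is a real gap rather than cosmetics: any Chernoff bound has the form $e^{\mu\,g(s)}$ and cannot by itself produce a $\mu$-independent prefactor such as~$\lambda^{-1}$. One natural attempt is to use $p_*\mu\geq 1$ to convert an internal $\lambda^{-p_*\mu}$ into $\lambda^{-1}$, but this fails too: a direct computation shows that the optimized Chernoff exponent strictly exceeds $(\lambda-1-\ln\lambda)\ln(1-p_*)-p_*\ln\lambda$ for $\lambda$ just above~$1$ (the difference vanishes at $\lambda=1$ and has $\lambda$-derivative $p_*>0$ there). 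So the~$\lambda^{-1}$ genuinely requires an ingredient beyond the exponential-moment method; your integrality-based suggestion points in a plausible direction, but as written it is only a heuristic, and closing it---especially for unequal $p_i$---is precisely the non-trivial content you would still need to supply.
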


\begin{cor} \label{cor:geometrictail}
    For $i \in [n]$ assume $X_i \sim \mathrm{Geo}(p_i)$ for $p_i \in (0,1]$. Let $X = \sum_{i \in [n]} X_i$ and write $\mu = \Exp[X]$, $p_* = \min_{i \in [n]}  p_i$. Let $\rho \in (0,1)$. Then~$\Pr[X \geq T] \leq \rho$ whenever
    \[
    T \geq 2 \ln(2) \mu + 2 \frac{\ln(1/\rho)}{\ln(1/(1-p_*))}.
    \]
\end{cor}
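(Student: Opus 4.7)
The plan is to apply the Janson-type tail bound (\cref{lem:geometrictail}) with a carefully chosen~$\lambda \geq 1$. Writing $\alpha = \ln(1/\rho)/(\mu \ln(1/(1-p_*)))$ for brevity, the hypothesis on~$T$ reads $T \geq \lambda_0 \mu$ where $\lambda_0 := 2\ln(2) + 2\alpha$. Since the tail $\Pr[X \geq \cdot]$ is nonincreasing, it suffices to show $\Pr[X \geq \lambda_0 \mu] \leq \rho$, so I would simply apply \cref{lem:geometrictail} at $\lambda = \lambda_0$; note that $\lambda_0 \geq 2\ln 2 > 1$, so the lemma is applicable.

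The lemma then yields
\[
  \Pr[X \geq \lambda_0 \mu] \;\leq\; \lambda_0^{-1}(1-p_*)^{(\lambda_0 - 1 - \ln\lambda_0)\mu} \;\leq\; (1-p_*)^{(\lambda_0 - 1 - \ln\lambda_0)\mu},
\]
where the last inequality uses $\lambda_0 \geq 1$. Taking logarithms, the desired bound $\leq \rho$ reduces to
\[
  (\lambda_0 - 1 - \ln\lambda_0)\,\mu \,\ln\!\bigl(1/(1-p_*)\bigr) \;\geq\; \ln(1/\rho),
\]
which is equivalent to the purely analytic inequality $\lambda_0 - 1 - \ln\lambda_0 \geq \alpha$.

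The only nontrivial step (and the place where the constants $2\ln 2$ and $2$ in the statement are pinned down) is verifying this last inequality. Substituting $\beta := \ln 2 + \alpha$, one has $\lambda_0 = 2\beta$ and therefore
\[
  (\lambda_0 - 1 - \ln\lambda_0) - \alpha \;=\; 2\beta - 1 - \ln(2\beta) - (\beta - \ln 2) \;=\; \beta - 1 - \ln\beta,
\]
which is nonnegative since $\ln\beta \leq \beta - 1$ for every $\beta > 0$. The factor of $2$ in front of both terms in the definition of $T$ is exactly what produces this clean cancellation.

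I do not foresee a real obstacle: once one notices that the statement is tuned so that the auxiliary variable~$\beta = \ln 2 + \alpha$ makes the inequality $\lambda_0 - 1 - \ln \lambda_0 \geq \alpha$ collapse to the standard bound $\ln\beta \leq \beta - 1$, the rest is routine bookkeeping.
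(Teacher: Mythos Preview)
Your proof is correct and follows essentially the same route as the paper's: apply \cref{lem:geometrictail}, discard the $\lambda^{-1}$ prefactor, and reduce to an elementary logarithmic inequality. The only cosmetic difference is that the paper first linearizes $\lambda - 1 - \ln\lambda \geq \tfrac{1}{2}\lambda - \ln 2$ via the tangent line at $\lambda = 2$ (convexity) and then solves for $\lambda$, whereas you plug in $\lambda = \lambda_0$ directly and verify the needed inequality through the substitution $\beta = \ln 2 + \alpha$; unwinding either argument yields exactly $\beta - 1 - \ln\beta \geq 0$.
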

\begin{proof}
    We apply \cref{lem:geometrictail} with $\lambda \geq 1$ to obtain
\[
\Pr[X \geq \lambda \mu] \leq \lambda^{-1}(1-p_*)^{(\lambda-1-\ln(\lambda))\mu} \leq (1-p_*)^{(\lambda-1-\ln(\lambda))\mu}.
\]
By the first-order characterization of convexity of~$\lambda \mapsto \lambda - 1 - \ln(\lambda)$ at~$\lambda = 2$, we have
\begin{equation*}
    \lambda - 1 - \ln(\lambda) \geq \left(1 - \frac{1}{2}\right) (\lambda - 2) + 2 - 1 - \ln(2) = \frac{1}{2} \lambda - \ln(2).
\end{equation*}
Therefore
\begin{align*}
\Pr[X \geq \lambda \mu ] & \leq e^{\ln(1-p_*) (\lambda/2 - \ln(2)) \mu},
\end{align*}
and so to ensure that this is at most~$\rho$, it suffices to take
\begin{equation*}
    \lambda \mu \geq 2 \ln(2) \mu + \frac{2 \ln(\rho)}{\ln(1 - p_*)}.
\end{equation*}
Note that such~$\lambda$ also satisfies $\lambda \geq 1$, because~$2 \ln(2) \geq 1$ and~$\ln(\rho) / \ln(1 - p_*) \geq 0$.
Therefore we have shown that~$\Pr[X \geq T] \leq \rho$ whenever~$T \geq 2 \ln(2) \mu + 2 \frac{\ln(\rho)}{\ln(1 - p_*)}$.
\end{proof}

Applying the above tail bound with~$p_i \geq \tfrac23 (k - i + 1) / k$ yields the following lemma.
\begin{lem} \label{lem:GroverCouponSampleComplexity}
Let~$1 \leq t \leq k \leq N$ and subset~$I \subseteq [N]$ of size~$k$, and let~$\rho \in (0,1)$.
Consider a procedure in which at each step with probability~$\geq 2/3$, one obtains a uniformly random sample from~$I$.
The outputs of
\[
r \geq 3 \ln(2) \, k(H_k - H_{k-t}) + \frac{2 \ln(1/\rho)}{\ln(3k / (k + 2(t-1)))} =: R_{t,k,\rho}
\]
repetitions of this procedure suffice to, with probability $\geq 1-\rho$, obtain $t$ distinct samples from~$I$.
\end{lem}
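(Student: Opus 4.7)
The plan is to model the number of repetitions as a sum of geometric random variables and then apply the tail bound of \cref{cor:geometrictail}. Concretely, I would index repetitions as a sequence of independent trials: say that a trial \emph{succeeds} if it both produces a sample (probability $\geq 2/3$) and that sample is new, i.e.\ distinct from all previously obtained samples. Given that $i-1$ distinct samples have already been collected, the sample (conditioned on being produced) is uniformly random in $I$, so the conditional probability of success is at least $\frac{2}{3}\cdot\frac{k-i+1}{k}$. Hence the number $X_i$ of trials needed to move from $i-1$ to $i$ distinct samples is stochastically dominated by a geometric random variable with parameter $p_i := \frac{2}{3}\cdot\frac{k-i+1}{k}$, and the $X_i$ can be taken independent by the usual renewal argument. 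The total number of trials is then $X = \sum_{i=1}^t X_i$, dominated by a sum of independent geometrics with these parameters.

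Next I would compute the two quantities appearing in \cref{cor:geometrictail}. The expectation satisfies
\begin{equation*}
\mu = \mathbb{E}[X] \leq \sum_{i=1}^t \frac{1}{p_i} = \frac{3k}{2} \sum_{i=1}^t \frac{1}{k-i+1} = \frac{3k}{2}\,(H_k - H_{k-t}),
\end{equation*}
so $2\ln(2)\,\mu \leq 3\ln(2)\,k\,(H_k - H_{k-t})$, which matches the first term of $R_{t,k,\rho}$. The smallest success probability is $p_* = p_t = \frac{2(k-t+1)}{3k}$, so
\begin{equation*}
\frac{1}{1 - p_*} = \frac{3k}{3k - 2(k-t+1)} = \frac{3k}{k + 2(t-1)},
\end{equation*}
and therefore $\ln(1/(1-p_*)) = \ln\bigl(3k/(k+2(t-1))\bigr)$, matching the denominator of the second term of $R_{t,k,\rho}$.

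Finally, \cref{cor:geometrictail} gives $\Pr[X \geq T] \leq \rho$ whenever $T \geq 2\ln(2)\mu + 2\ln(1/\rho)/\ln(1/(1-p_*))$. Plugging in the bounds above shows that taking $T = R_{t,k,\rho}$ suffices to ensure $\Pr[X \geq R_{t,k,\rho}] \leq \rho$, i.e.\ with probability $\geq 1-\rho$ at most $R_{t,k,\rho}$ repetitions are needed to collect $t$ distinct samples.

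\paragraph{Expected main obstacle.}
The genuine content here is almost all contained in \cref{cor:geometrictail}; the only step requiring care is the stochastic-domination argument, ensuring that the number of trials needed to collect $t$ distinct elements really is upper bounded by a sum of \emph{independent} geometrics with the stated parameters rather than merely ``on average'' so. I would justify this by conditioning on the history: whenever we have collected the set $S_{i-1}$ of $i-1$ distinct elements, the next trial independently produces a new element with probability at least $p_i$, regardless of which specific subset $S_{i-1}$ was obtained, because the $2/3$ lower bound and the uniform-over-$I$ property both hold unconditionally. This lets us couple the $X_i$ to independent $\mathrm{Geo}(p_i)$ variables and invoke \cref{cor:geometrictail} verbatim.
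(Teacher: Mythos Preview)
Your proposal is correct and follows exactly the paper's approach: the paper simply states ``Applying the above tail bound with~$p_i \geq \tfrac{2}{3}(k-i+1)/k$ yields the following lemma,'' and your write-up unpacks precisely this, computing~$\mu$, $p_*$, and invoking \cref{cor:geometrictail}. Your extra care with the stochastic-domination/coupling step is a welcome addition that the paper leaves implicit.
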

\begin{proof}
    We apply \cref{cor:geometrictail} to $X = \sum_{i \in [t]} X_i$ where $X_i \sim \mathrm{Geo}(p_i)$ with $p_i \geq \tfrac23 (k - i + 1) / k$. We then have $p_* \coloneqq \min_{i \in [t]} p_i \geq \frac{2}{3}(k-t+1)/k$, and thus $\frac{1}{1-p_*} \leq \frac{1}{1-\frac{2}{3}(k-t+1)/k} = \frac{3k}{k+2(t-1)}$, and $\mu \coloneqq \EE[X] = \sum_{i \in [t]} \frac{1}{p_i} \leq \sum_{i \in [t]} \frac{3}{2} \frac{k}{k-t+1} = \frac{3}{2} k (H_k - H_{k-t})$. \cref{cor:geometrictail} thus shows that we obtain $t$ distinct samples from $I$ with probability at least $1-\rho$, whenever we use at least $R_{t,k,\rho}$ repetitions of this procedure. 
\end{proof}
We briefly emphasize the value of this lemma.
For general~$t \leq k$, we can use the simple bound $\ln(k/(t-1)) \geq \ln(k/(k-1)) \geq 1/k$ and the estimate $H_k-H_{k-t} \approx \ln(k/(k-t))$, to obtain that $r \in \Omega(k\log(k) + k \ln(1/\rho)) = \Omega(k\log(k/\rho))$ samples suffice.
By contrast, an application of Markov's inequality only yields a sample complexity upper bound of~$k\log(k) \log(1/\rho)$.
In later applications (cf.~\cref{thm:all-sol-grover-fast}), we apply this with $t$ at most $k/2$, in which case we can give tighter estimates.
Indeed, the factor~$1/\ln(3k/(k+2(t-1)))$ is then at most a constant and $H_k - H_{k-t} \leq \frac{2(t+1)}{k}$ by \cref{lem:harmonic numbers}, and thus $r \in \Omega(t + \ln(1/\rho))$ samples suffice.
Therefore the bound is an improvement over the sample complexity of~$\Omega(t \ln(1/\rho))$ one would obtain from a simple application of Markov's inequality -- in particular, one can now ``for free'' choose~$\rho$ to be exponentially small in~$t$ (and similar above).

By using~\GroverLB{} to obtain the samples required for~\cref{lem:GroverCouponSampleComplexity}, we obtain the following algorithmic result.

\begin{procedure}[t]
  \caption{GroverCoupon($U_x$, $R$, $\klb$, $t$)}\label{alg:grover coupon collector}
  \Input{Quantum oracle $U_x$ to access $x \in \{0,1\}^N$, an integer~$R \geq 1$, an integer~$\klb$ such that~$\klb \leq \abs{x}$, an integer $t \geq 1$ such that $1 \leq t \leq \abs{x}$.}
  \Output{Classical sorted list of indices $J \subseteq [N]$.}
  \Guarantee{If $R \geq R_{t,k,\rho} = 3 \ln(2) k (H_k - H_{k-t}) + 2 \frac{\ln(1/\rho)}{\ln(3k/(k+2(t-1))}$, then, with probability~$\geq 1 - \rho$, we have $\abs{J} =t$ and $x_j =1$ for all $j \in J$.}
  \Analysis{\cref{prop:most-sol-grover}}
  \medskip

  $J \leftarrow \emptyset$\;

  \smallskip
  \For{$r=1,\ldots,R$}{
    use \GroverLB{} with arguments ($U_x$, $\klb$) to find a $j \in [N]$ such that $x_j = 1$ with probability $\geq 2/3$ \label{line:grover application}\;
    \If{$j\not\in J$ and $x_j = 1$}{
        add $j$ to $J$\;
    }
    \If{$\abs{J} = t$ \label{line:sizeJ}}{
        \Return{$J$} \;
    \label{line:endfor}}
  }
  \Return{$J$}\;
\end{procedure}

\begin{prp}
\label{prop:most-sol-grover}
    Let $x\in \{0,1\}^N$ with $\abs{x} = k$ unknown, let~$R \geq 1$,
    let~$\klb \geq 1$ be such that~$\klb \leq k$, let $t \geq 1$, and $\rho \in (0,1)$.
    Assume $1 \leq t \leq k$.
    Then \GroverCoupon{} called with a quantum oracle $U_x$ to access $x$, and additional inputs $R$, $\klb$, and $t$, uses $\bigO{\sqrt{N/\klb} \, r}$ quantum queries to~$x$ and~$\bigO{\sqrt{N/\klb} \, r \log(N)}$ additional quantum gates.
    Here, $r$ is a random variable such that~$r \leq R$ with certainty, and with probability $\geq 1 - \rho$, one has
    \begin{equation*}
        r \leq R_{t,k,\rho} = 3 \ln(2) k (H_k - H_{k-t}) + 2 \frac{\ln(1/\rho)}{\ln(3k/(k+2(t-1))}.
    \end{equation*}
    If~$R \geq R_{t,k,\rho}$, then with probability $\geq 1-\rho$, it finds a set of $t$ distinct marked elements, uniformly at random from the set of all sets of $k$ marked elements.
\end{prp}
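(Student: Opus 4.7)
The plan splits into three parts: bounding the per-iteration cost, bounding the loop count $r$, and establishing uniformity of the returned set. For the costs, each iteration makes one call to \GroverLB{}($U_x$, $\klb$), which by \cref{lem:grover with lower bound} uses $\bigO{\sqrt{N/\klb}}$ queries to~$x$ and $\bigO{\sqrt{N/\klb}\log(N)}$ additional gates; the subsequent $x_j = 1$ check costs one more query, while the membership test and sorted-list insertion take only $\bigO{\log(N)}$ gates using the sorted-list data structure from~\cref{sec:computational model}. Since the for-loop runs at most $R$ times, the deterministic bound $r \leq R$ is immediate, as are the overall query and gate counts in terms of $r$.

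The main step is to cast the iteration into the abstract sampling process of \cref{lem:GroverCouponSampleComplexity} with $I = \{i \in [N] : x_i = 1\}$ of size $k$. I would show that each iteration independently produces a uniformly random element of $I$ with probability $\geq 2/3$. For the success probability, the proof of \cref{lem:grover with lower bound} truncates \GroverExpectation{} after $C\sqrt{N/\klb}$ queries; since $\klb \leq k$ we have $C\sqrt{N/\klb} \geq C\sqrt{N/k}$, and Markov's inequality applied to $\Exp[Q] = \bigO{\sqrt{N/k}}$ gives that \GroverExpectation{} finishes within this budget with probability $\geq 2/3$. The delicate point is uniformity: by \cref{thm:grover in expectation} the output of \GroverExpectation{} is uniform on $I$ and independent of its query count~$Q$, so conditioning on the early-termination event preserves the uniform distribution. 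The $x_j = 1$ check inside the for-loop filters out any spurious index produced when \GroverLB{} is truncated, and fresh randomness across iterations ensures independence.

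Applying \cref{lem:GroverCouponSampleComplexity} then yields that whenever $R \geq R_{t,k,\rho}$, with probability $\geq 1 - \rho$ the algorithm collects $t$ distinct elements of $I$ within the first $R_{t,k,\rho}$ iterations and exits via \cref{line:sizeJ}; in that event $r \leq R_{t,k,\rho}$. For the uniformity of the returned set $J$, I would invoke a standard symmetry argument: the distribution of the first $t$ distinct values in an i.i.d.\ uniform sequence on $I$ is invariant under permutations of $I$, hence uniform over the $\binom{k}{t}$ size-$t$ subsets of~$I$. I expect the main obstacle to be the uniformity of each individual sample --- one must argue carefully that truncating \GroverExpectation{} at a data-independent threshold does not bias the distribution of accepted outputs, and this leans crucially on the independence of the output from $Q$ asserted in \cref{thm:grover in expectation}.
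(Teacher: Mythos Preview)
Your proposal is correct and follows essentially the same approach as the paper: bound the per-iteration cost via \cref{lem:grover with lower bound}, then invoke \cref{lem:GroverCouponSampleComplexity} to control the number of iterations needed to collect $t$ distinct marked elements. If anything, your argument is slightly more careful than the paper's on one point: the paper simply asserts that \GroverLB{} returns a uniformly random marked element with probability $\geq 2/3$, whereas you correctly trace this back to the independence of output and query count in \cref{thm:grover in expectation} and argue that truncation at a data-independent threshold preserves uniformity---a genuine subtlety the paper glosses over.
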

\begin{proof}
    We first analyze the complexity of \GroverCoupon{}. Let $r \in [R]$ be the number of times the algorithm repeats \cref{line:grover application} through \cref{line:endfor}.
    By \cref{lem:grover with lower bound}, the application of \GroverLB{} in \cref{line:grover application} uses $O(\sqrt{N/\klb})$ quantum queries and $O(\sqrt{N/\klb}\log(N))$ additional gates.
    With one additional query we can verify if the index $j \in [N]$ that is returned by \GroverLB{} is such that $x_j=1$. If indeed $x_j=1$, then we add $j$ to $J$.
    As mentioned in \cref{sec:computational model},  we can insert an element in the sorted list~$J$ in (classical) time $O(\log(N))$.
    We can verify \cref{line:sizeJ} in time $O(\log(N))$ by maintaining a counter for $\abs{J}$.
    The above shows that \GroverCoupon{} indeed uses $\bigO{\sqrt{N/\klb} \, r}$ quantum queries and~$\bigO{\sqrt{N/\klb} \, r \log(N)}$ additional quantum gates.

    We now establish correctness. By construction $r \leq R$ with certainty. \Cref{lem:grover with lower bound} shows that, with probability $\geq 2/3$, the index returned by \GroverLB{} in \cref{line:grover application} is a uniformly random marked element. Hence \cref{lem:GroverCouponSampleComplexity} shows that after obtaining
    \[
        R_{t,k,\rho} = 3 \ln(2) \, k(H_k - H_{k-t}) + \frac{2 \ln(1/\rho)}{\ln(3k / (k + 2(t-1)))}
    \]
    such indices, we have obtained $t$ distinct indices with probability at least $1-\rho$. In other words, if $R \geq R_{t,k,\rho}$, then, with probability at least $1-\rho$, \GroverCoupon{} terminates at \cref{line:endfor} with a sorted list $J \subseteq [N]$ of $t$ distinct marked indices.
\end{proof}

\subsection{Grover for multiple elements, fast}
\label{subsec:grovermultiplefast}
In this section we improve the complexity of finding all marked indices by combining the two previously discussed algorithms, \GroverCoupon{} and \GroverCertaintyMultiple{}. The structure of our algorithm, \GroverMultipleFast{}, is as follows. As before, suppose we are given query access to an $x \in \{0,1\}^N$. Let the (unknown) number of marked indices be $k \geq 1$, i.e., $k=\abs{x}$. We first use \GroverCoupon{} to find a (large) fraction of the marked elements. That is, we find a uniformly random subset $J_0 \subseteq [N]$ of $\tau k$ marked elements, where $0<\tau<1$ is a parameter we can use to tune the complexity of the algorithm. This subset $J_0$ partitions $[N]$ into intervals. We then use \GroverCertaintyMultiple{} to find all marked indices in each interval separately.

The following lemma upper bounds the probability that when we draw a set $S \subseteq [k]$ of size $t$ uniformly at random, there exists an interval of length $\geq \ell$ in the set $[k] \setminus S$.
In the analysis of \GroverMultipleFast{} (see \cref{thm:all-sol-grover-fast}), we will use this bound to control the number of elements that are in between any two elements of the previously sampled indices~$J_0$.
\begin{lem}\label{lem:prob-of-run}
  Let $S \subseteq [k]$ be a uniformly random $t$-element set, and let~$1 \leq \ell \leq k - t$.
  The probability that~$[k] \setminus S$ contains a contiguous subset~$I$ of length~$\geq \ell$, i.e., $I = \{a, a+1, \dotsc, a + \ell - 1\}$ for~$1 \leq a \leq k-\ell+1$, is at most~$(k - \ell + 1) (1 - \frac{t}{k})^\ell$.
\end{lem}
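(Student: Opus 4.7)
The plan is to use a union bound over the possible starting positions of a contiguous run of length $\ell$ in $[k] \setminus S$, and for each fixed position compute the probability exactly as a ratio of binomial coefficients.

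More precisely, for each $a \in \{1, \dotsc, k - \ell + 1\}$, let $E_a$ be the event that the contiguous block $\{a, a+1, \dotsc, a+\ell-1\}$ is disjoint from $S$. The event we care about is contained in $\bigcup_a E_a$, so by the union bound it suffices to show that $\Pr[E_a] \leq (1 - t/k)^\ell$ for every fixed $a$, and then conclude by multiplying with the number $k-\ell+1$ of starting positions.

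To bound $\Pr[E_a]$, I would note that since $S$ is a uniformly random $t$-subset of $[k]$, the event $E_a$ just says that $S$ is a $t$-subset of $[k]$ avoiding a particular $\ell$-element set. Thus
\begin{equation*}
  \Pr[E_a] = \frac{\binom{k-\ell}{t}}{\binom{k}{t}} = \prod_{i=0}^{\ell-1} \frac{k - t - i}{k - i}.
\end{equation*}
The key calculation is that each factor satisfies
\begin{equation*}
  \frac{k-t-i}{k-i} = 1 - \frac{t}{k-i} \leq 1 - \frac{t}{k},
\end{equation*}
which holds because $k - i \leq k$ and $t \geq 0$. Multiplying the $\ell$ factors gives $\Pr[E_a] \leq (1 - t/k)^\ell$, and the union bound then yields the claimed bound $(k - \ell + 1)(1 - t/k)^\ell$.

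There is no real obstacle here; the only thing to check is that the assumption $\ell \leq k-t$ makes the binomial $\binom{k-\ell}{t}$ well-defined and nonzero (so the ratio formula is valid) and that the starting positions range over exactly $k-\ell+1$ values. Both are routine.
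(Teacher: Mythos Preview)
Your proof is correct and essentially identical to the paper's: both apply a union bound over the $k-\ell+1$ starting positions, compute $\Pr[E_a] = \binom{k-\ell}{t}/\binom{k}{t}$, and bound this product term-by-term by $(1-t/k)^\ell$. The only cosmetic difference is that you spell out the per-factor inequality $1 - t/(k-i) \leq 1 - t/k$ explicitly, whereas the paper states the product bound in one line.
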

\begin{proof}
The probability that~$[k] \setminus S$ contains a contiguous subset~$I$ of length at least~$\ell$ is the same as the probability that it contains a contiguous subset of length \emph{exactly} $\ell$.
This is in turn given by
\begin{align*}
    \Pr[\exists\, a \in \{ 1, \dotsc, k - \ell + 1 \} : \{ a, \dotsc, a + \ell - 1 \} \cap S = \emptyset].
\end{align*}
By a union bound, this is at most
\begin{align*}
    \sum_{a = 1}^{k-\ell + 1} \Pr[\{ a, \dotsc, a + \ell - 1 \} \cap S = \emptyset].
\end{align*}
By the uniform randomness of~$S$, each of the latter probabilities is the same, and given by
\begin{align*}
    \Pr[\{ a, \dotsc, a + \ell - 1 \} \cap S = \emptyset] & = \frac{\binom{k-\ell}{t}}{\binom{k}{t}} = \frac{(k-t)(k-t-1)\cdots (k-t-l+1)}{k(k-1)\cdots(k-l+1)} \leq \left(\frac{k-t}{k}\right)^\ell.
\end{align*}
We conclude that the probability that~$[k] \setminus S$ contains a contiguous subset~$I$ of at least~$\ell$ is at most~$(k - \ell + 1) (1 - \frac{t}{k})^\ell$.
\end{proof}

\begin{procedure}[ht]
  \caption{GroverMultipleFast($U_x$, $\kest$, $\rho$, $\lambda$)}\label{alg:grover fast}
  \Input{Quantum oracle $U_x$ to access $x \in \{0,1\}^N$, an integer $\kest \geq 1$ such that $\abs{x}/2 \leq \kest \leq 3\abs{x}/2$, a failure probability $\rho > 0$, threshold parameter $\lambda \in [6, \kest]$.} 
  \Output{Classical list of indices $J \subseteq [N]$.}
  \Guarantee{If $\lambda$ and $\rho$ are such that $\log(6\kest/\rho) \leq \lceil \kest/\lambda\rceil$, then, with probability $\geq 1-\rho$, we have $\abs{J} = \abs{x}$ and $x_j = 1$ for all $j \in J$.} 
  \Analysis{\Cref{thm:all-sol-grover-fast}}
  \medskip

  $J \leftarrow \emptyset$\;

  $t \leftarrow \lceil k/\lambda \rceil$\;
  $R \leftarrow 6 \ln(2) (t+1) + 2 \ln(1/\rho) \ln(3/2)$\;
  use \GroverCoupon{}($U_x, R, \frac{2}{3}\kest,t$) to find, with probability $\geq 1 - \rho/3$, a sorted list $J_0 \subseteq [N]$ with $x_j = 1$ for all $j \in J_0$, $\abs{J_0} = t$\;
  set $J \leftarrow J_0$ and write 
  $J_0 = \{ a_1 < a_2 < \dotsb < a_t \}$\;
  set $a_0 = 0$ and $a_{t+1} = N+1$\;

  \smallskip
  \For{$i = 0, \dotsc, t$}{
    If $a_{i+1}=a_i+1$, continue with next loop; otherwise, let $b_i = 2^{\lceil \log(a_{i+1}-1-a_i)\rceil}$\;
    construct from $U_x$ an oracle $U_y$ which implements access to the bit string $y \in \{0,1\}^{b_i}$ given by $y_j = x_{a_i + j}$ if $a_i + j < a_{i+1}$, and $0$ otherwise\;

    $(k_j)_{\mathrm{est}} \leftarrow $ \ApproxCount{}($U_y$, $\frac12$, $\frac{\rho}{3(t+1)}$)\;
    \label{algline:find remaining in small interval}
    use \GroverCertaintyMultiple{}($U_y$, $2 (k_j)_{\mathrm{est}}$) to find all $j \in (a_i, a_{i+1})$ such that $x_j = 1$, and add these to $J$\;
  }
  \Return{$J$}\;
\end{procedure}

\begin{thm}\label{thm:all-sol-grover-fast}
Let $x\in \{0,1\}^N$ with~$\abs{x} = k \geq 2$, and assume one knows~$\kest \geq 1$ such that $k/2 \leq \kest \leq 3k/2$. Let $0 < \rho < 1$ and $6\leq \lambda \leq  \kest$ be such that~$t := \lceil \kest / \lambda \rceil \geq \log(6 \kest / \rho)$. Then
    \[
        \bigO{\sqrt{Nk} \left(1 + \frac{1}{\sqrt{\lambda}} \log(k/\rho \lambda) \right) }
    \]
quantum queries to $x$ suffice to, with probability $\geq 1-\rho$, find all $k$ indices $i$ s.t.\ $x_i = 1$. The algorithm uses an additional
\[
 \bigO{\sqrt{Nk}\lambda\log(k/\rho)\log(N)} 
\]
non-query gates.
\end{thm}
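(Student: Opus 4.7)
The plan is to establish the three claims of the theorem---the success probability, the query complexity, and the gate complexity---by analyzing the three stages of \GroverMultipleFast{} separately and combining their failure probabilities with a union bound. Three kinds of bad events can occur: the \GroverCoupon{} call may fail, one of the $t+1$ \ApproxCount{} calls may fail, or even if both succeed, the random sample $J_0$ may leave an interval containing too many marked elements. The \ApproxCount{} calls fail with probability at most $\rho/(3(t+1))$ each by construction, contributing at most $\rho/3$ in total. For \GroverCoupon{}, I would verify that the chosen $R$ satisfies $R \geq R_{t,k,\rho/3}$; using $k(H_k - H_{k-t}) \leq 2(t+1)$ from \cref{lem:harmonic numbers} (valid because $\lambda \geq 6$ and $\kest \leq 3k/2$ force $t = \lceil \kest/\lambda\rceil \leq k/2$) and $\ln(3k/(k+2(t-1))) \geq \ln(3/2)$ in this range, the chosen $R$ suffices up to constants. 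This yields $J_0$ which, conditioned on success, is a uniformly random $t$-subset of the $k$ marked indices.

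The geometric step controls $k_j$, the number of remaining marked indices in each of the $t+1$ intervals $(a_j, a_{j+1})$. Applying \cref{lem:prob-of-run} with $\ell = \lceil (k/t)\ln(3k/\rho)\rceil$ and using $(1-t/k)^\ell \leq e^{-t\ell/k}$ bounds the probability of some interval having more than $\ell$ marked indices by $\rho/3$. Substituting $t \geq \kest/\lambda \geq k/(2\lambda)$ yields the uniform bound $k_j \leq \ell = O(\lambda\log(k/\rho))$; the hypothesis $t \geq \log(6\kest/\rho)$ is exactly what is needed for $\ell \leq k - t$ so that the lemma applies non-vacuously. Conditioned on \ApproxCount{} succeeding, the returned $(k_j)_\est$ satisfies $k_j/2 \leq (k_j)_\est \leq 3k_j/2$, so $\kub := 2(k_j)_\est$ is a valid upper bound on the number of marked elements in the $j$-th interval, and \cref{lem:all-sol-grover-slow} then guarantees that \GroverCertaintyMultiple{} finds all of them with certainty.

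For the complexity analysis, write $b_j \leq 2(a_{j+1}-1-a_j)$ so that $\sum_j b_j \leq 2N$. \GroverCoupon{} uses $O(\sqrt{N/\kest}\cdot R) = O(\sqrt{Nk}/\lambda + \sqrt{N/k}\log(1/\rho))$ queries; since $k \geq 2$ and $\lambda \leq \kest \leq 3k/2$, both terms are absorbed in the stated bound. The \ApproxCount{} calls contribute $\sum_j O(\sqrt{b_j/(k_j+1)}\log((t+1)/\rho))$ queries; Cauchy--Schwarz yields $\sum_j \sqrt{b_j} \leq \sqrt{(t+1)\sum_j b_j} = O(\sqrt{Nk/\lambda})$, and $\log((t+1)/\rho) = O(\log(k/(\rho\lambda)))$, producing an $O(\sqrt{Nk/\lambda}\log(k/(\rho\lambda)))$ contribution. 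The \GroverCertaintyMultiple{} calls use $\sum_j O(\sqrt{b_j\,\kub}) = O(\sum_j\sqrt{b_j k_j})$ queries; a second Cauchy--Schwarz using $\sum_j k_j \leq k$ bounds this by $O(\sqrt{Nk})$. Adding the three contributions gives the advertised query bound.

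The gate complexity runs similarly, with an extra $\log(N)$ factor per subroutine. The dominant cost $\sum_j O(\sqrt{b_j k_j}(k_j+1)\log(N))$ comes from \GroverCertaintyMultiple{}; pulling out the uniform bound $k_j+1 = O(\lambda\log(k/\rho))$ from the geometric step and applying Cauchy--Schwarz again yields the advertised $O(\sqrt{Nk}\,\lambda\log(k/\rho)\log(N))$ total. The main obstacle is not any single deep step but the careful bookkeeping required to route the assumption $t \geq \log(6\kest/\rho)$ through the gap-length estimate and then combine the two Cauchy--Schwarz applications with the uniform bound $k_j = O(\lambda\log(k/\rho))$ so as to produce exactly the $\log(k/\rho\lambda)$ factor for the \ApproxCount{} contribution and the $\lambda\log(k/\rho)\log(N)$ factor for the \GroverCertaintyMultiple{} contribution, without any stray polylog factors leaking through.
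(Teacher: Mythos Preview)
Your proposal is correct and follows essentially the same approach as the paper: the same three-stage decomposition (\GroverCoupon{}, per-interval \ApproxCount{}, then \GroverCertaintyMultiple{}), the same union bound over the three failure modes, the same use of \cref{lem:prob-of-run} to bound $\max_j k_j$ by $\ell = O(\lambda\log(k/\rho))$, and the same two Cauchy--Schwarz applications for the query complexity. The only cosmetic difference is in the gate bound for the \GroverCertaintyMultiple{} step: you factor out $\max_j(k_j+1)\le O(\ell)$ from $\sum_j\sqrt{b_jk_j}(k_j+1)$ and then apply Cauchy--Schwarz, whereas the paper applies Cauchy--Schwarz to $\sum_j\sqrt{|I_j|k_j^3}$ and then bounds $\sum_j k_j^3\le \|\vec k\|_1\|\vec k^2\|_\infty\le k\ell^2$; these are equivalent and yield the same $O(\sqrt{Nk}\,\lambda\log(k/\rho)\log(N))$.
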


We remark here that~\GroverMultipleFast{} takes a multiplicative estimate~$\kest$ of~$k$ as additional input, which can be found with~$\bigO{\sqrt{N/k} \log(1/\rho)}$ quantum queries and~$\bigO{\sqrt{N/k} \log(1/\rho) \log(N)}$ additional gates; see~\cref{cor:factor3-approx}.
Both of these costs are dominated by that of finding the actual elements.
The above theorem also includes a parameter $\lambda$ that allows for a trade-off between query complexity and gate complexity. Before we provide the proof of \cref{thm:all-sol-grover-fast},
let us highlight the two extremal cases that follow from taking $\lambda$ either as large as useful or as small as possible.
\begin{cor}
  Let $x\in \{0,1\}^N$ with $\abs{x}=k\geq 2$. Assume one knows $\kest$ such that  $k/2 \leq \kest \leq 3k/2$. Let $1 > \rho >0$. Then we can find, with probability $\geq 1-\rho$, all $k$ indices $i$ for which $x_i=1$ using either:
\begin{itemize}
    \item $\bigO{\sqrt{Nk}}$ quantum queries and time complexity
 $\bigO{\sqrt{Nk} \min\{\log^3(k/\rho), k\} \log(N)}$, via \cref{thm:all-sol-grover-fast} with $\lambda = \min\{\kest/\log(6 \kest/\rho),  \log^2(\kest/\rho)\}$,\footnote{Strictly speaking, this choice of $\lambda$ could be smaller than $6$, but in that case \GroverCertaintyMultiple{} already has the stated complexity.}  or,
\item $\bigO{\sqrt{Nk}\log(k/\rho)}$ quantum queries and time complexity $\bigO{\sqrt{Nk} \log(k/\rho)\log(N)}$, via  \cref{thm:all-sol-grover-fast} with $\lambda = 6$.
\end{itemize}
\end{cor}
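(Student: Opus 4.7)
The plan is to first obtain the estimate $\kest$ using \cref{cor:factor3-approx} with failure probability $\rho/2$, at a cost of $\bigO{\sqrt{N/k}\log(1/\rho)}$ queries and an extra $\log(N)$ factor in gates; this is dominated by both targeted complexities. Then for each of the two bullets I will plug the indicated $\lambda$ into \cref{thm:all-sol-grover-fast} (with failure probability $\rho/2$), falling back on \GroverCertaintyMultiple{} (with $\kub = \lceil 3\kest/2\rceil$) when the hypothesis $\lambda\in[6,\kest]$ or $t=\lceil \kest/\lambda\rceil \geq \log(6\kest/\rho)$ fails. In all fallback regimes $k$ turns out to be so small that \cref{lem:all-sol-grover-slow}, which gives $\bigO{\sqrt{Nk}}$ queries and $\bigO{\sqrt{Nk}(k+1)\log(N)}$ gates, already matches the claim.

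For the first bullet, with $\lambda=\min\{\kest/\log(6\kest/\rho),\,\log^2(\kest/\rho)\}$, I split into two sub-cases. Sub-case A: $\log^2(\kest/\rho)\leq \kest/\log(6\kest/\rho)$, equivalently $\kest\geq \log^3(6\kest/\rho)$. Then $\lambda=\log^2(\kest/\rho)$, so $t=\lceil \kest/\lambda\rceil\geq \log(6\kest/\rho)$ and \cref{thm:all-sol-grover-fast} applies. The query bound becomes $\sqrt{Nk}\bigl(1+\log(k/\rho\lambda)/\sqrt{\lambda}\bigr)=\bigO{\sqrt{Nk}}$, since $\sqrt{\lambda}=\log(\kest/\rho)=\Theta(\log(k/\rho))$ dominates the logarithm in the numerator; the gate bound becomes $\sqrt{Nk}\lambda\log(k/\rho)\log(N)=\bigO{\sqrt{Nk}\log^3(k/\rho)\log(N)}$, and in this regime $\log^3(k/\rho)\leq k$ so the minimum is $\log^3(k/\rho)$ as claimed. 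Sub-case B: the other inequality forces $\kest<\log^3(6\kest/\rho)$, hence $k=\bigO{\log^3(k/\rho)}$; here $\lambda$ may drop below $6$, so I invoke \GroverCertaintyMultiple{} instead, which uses $\bigO{\sqrt{N\kub}}=\bigO{\sqrt{Nk}}$ queries and $\bigO{\sqrt{Nk}(k+1)\log(N)}$ gates, matching $\min\{\log^3(k/\rho),k\}=k$ in this regime.

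For the second bullet ($\lambda=6$) I again check the hypothesis: if $\kest \geq 6\log(6\kest/\rho)$, \cref{thm:all-sol-grover-fast} applies and yields
\[
    \bigO{\sqrt{Nk}\bigl(1+\tfrac{1}{\sqrt{6}}\log(k/(6\rho))\bigr)} = \bigO{\sqrt{Nk}\log(k/\rho)}
\]
queries and $\bigO{\sqrt{Nk}\cdot 6\log(k/\rho)\log(N)}=\bigO{\sqrt{Nk}\log(k/\rho)\log(N)}$ gates. If instead $\kest < 6\log(6\kest/\rho)$, then $k=\bigO{\log(k/\rho)}$ and \GroverCertaintyMultiple{} gives $\bigO{\sqrt{Nk}}$ queries and $\bigO{\sqrt{Nk}\cdot k\log(N)}=\bigO{\sqrt{Nk}\log(k/\rho)\log(N)}$ gates, again within the claimed bounds.

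The only delicate point is the bookkeeping around the hypothesis $t=\lceil\kest/\lambda\rceil\geq \log(6\kest/\rho)$ and $\lambda\geq 6$ required by \cref{thm:all-sol-grover-fast}: whenever the chosen $\lambda$ violates either, one has to justify that $k$ is already small enough that the classical-gate-heavy \GroverCertaintyMultiple{} subroutine meets the bound. Provided that case split is handled cleanly, the remainder is just algebraic simplification of the two expressions from \cref{thm:all-sol-grover-fast} under the substitutions $\kest=\Theta(k)$ and $\log(\kest/\rho)=\Theta(\log(k/\rho))$.
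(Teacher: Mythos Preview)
Your approach matches the paper's, which does not give a detailed proof: the corollary is obtained by substituting the indicated values of~$\lambda$ into \cref{thm:all-sol-grover-fast}, with the footnote handling the edge case~$\lambda < 6$ by falling back on \GroverCertaintyMultiple{}. Your case split and the algebra in each sub-case are correct.

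Two small corrections. First, the hypothesis already hands you~$\kest$, so the initial appeal to \cref{cor:factor3-approx} is redundant (and you should then invoke \cref{thm:all-sol-grover-fast} with failure probability~$\rho$ rather than~$\rho/2$). Second, in the fallback you take $\kub = \lceil 3\kest/2\rceil$, but since~$\kest$ may be as small as~$k/2$ this only guarantees $\kub \geq \lceil 3k/4\rceil$, which can be strictly less than~$k$; the correct choice is $\kub = 2\kest$, which satisfies $\kub \geq k$ by the assumption $k/2 \leq \kest$. With that fix the fallback argument goes through unchanged.
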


\begin{proof}[Proof of \cref{thm:all-sol-grover-fast}]
Let $t = \lceil \kest / \lambda \rceil$.
Note that because $\lambda \geq 6$ and $\kest \leq 3 k / 2$, we have $t \leq k / 2$. Therefore we can find $t$ of the solutions using the procedure of \cref{prop:most-sol-grover} with probability $\geq 1-\rho/3$, using
\begin{equation} \label{query-cost-random-subset}
  \bigO{\sqrt{\frac{N}{k}} (t +\log(1/\rho))} =  \bigO{\sqrt{\frac{N}{k}} \left(\frac{k}{\lambda} + \log(1/\rho) \right)}
\end{equation}
queries and
\begin{equation} \label{gate-cost-random-subset}
    \bigO{\sqrt{\frac{N}{k}} \left(\frac{k}{\lambda} + \log(1/\rho) \right) \log(N)}
\end{equation}
gates.
We remark here that these upper bounds hold because $t \leq k/2<k$. Indeed, under that assumption on $t$ and $k$ we have $k(H_{k}-H_{k-t}) \leq 2(t+1)$ by \cref{lem:harmonic numbers}, and moreover the factor $1/\ln(3k/(k+2(t-1)))$ is $\Theta(1)$ (it lies between $1/\ln(3)$ and $1/\ln(3/2)$). This shows that calling \GroverCoupon{} with $R = 6 \ln(2) (t+1) + 2 \ln(1/\rho) \ln(3/2) \in \Theta(t+\log(1/\rho))$ has the desired behaviour.

Let $a_1<a_2<\dots<a_{t}$ denote the found indices for which $x_{a_{j}} = 1$ and define the intervals $I_0 = \{1,\ldots,a_1-1\}$, $I_t = \{a_t+1,\ldots,N\}$, and, for $j \in [t-1]$, $I_j = \{a_j+1,\ldots,a_{j+1}-1\}$. We use
$k_j$ to denote the (unknown) number of marked elements in $I_j$, so in particular $\sum_{j=0}^t k_j \leq k-t$.
Then by \cref{lem:prob-of-run}, the probability that there is a $k_j$ larger than $\ell := \frac{k}{t} (\log(k)+\log(3/\rho))$ is at most
\begin{align*}
    (k-\ell+1)\left(1-\frac{t}{k}\right)^\ell
    \leq 2^{\log(k)} \left(1-\frac{t}{k}\right)^{ \frac{k}{t} (\log(k) + \log(3/\rho))}
    \leq 2^{\log(k)} \left(\frac{1}{2}\right)^{\log(k) + \log(3/\rho)} % = \left(\frac12\right)^{-\log(k)+ \log(k)+\log(3/\rho)}\\
    = \rho/3.
\end{align*}
Here we used that $\ell \geq 1$, $(1 - \frac{t}{k})^{k/t} \leq \frac{1}{e} \leq \frac{1}{2}$, and~$\log(k) + \log(3/\rho) \geq 0$.\footnote{Note also that~$\ell \leq k$ because~$t \geq \log(6 \kest / \rho) \geq \log(3 k /\rho)$ by assumption; if~$\ell > k$, then the probability of having an interval of length~$\geq \ell$ is of course~$0$, and in this regime one may just as well run \GroverCertaintyMultiple{} on the whole string (and have zero failure probability).} 
For the rest of the argument we may thus assume that there is no interval with more than $\ell$ not-yet-found marked elements.

In the next step of our algorithm we search for all marked elements in each interval. To do so for the $j$th interval, we search over the elements from $[2^{\lceil\log(\abs{I_j})\rceil}]$ marking an element $i \in [2^{\lceil\log(\abs{I_j})\rceil}]$ if $x_{i+a_j} = 1$ and $i\leq \abs{I_j}$ (letting $a_0 = 0$).
One can implement this unitary using $\bigO{1}$ quantum queries and $\bigO{\log(N)}$ gates (to implement the addition and comparison).
For each interval, we first compute an estimate~$(k_j)_{\mathrm{est}}$ of~$k_j$ that satisfies $k_j /2 \leq (k_j)_{\mathrm{est}} \leq 3k_j/2$ using \cref{cor:factor3-approx}, with success probability~$\geq 1 - \rho/(3(t+1))$.
The associated query cost is~$\bigO{\sqrt{\abs{I_j}/(k_j + 1)} \log(t/\rho)}$, and it uses~$\bigO{\sqrt{\abs{I_j}/(k_j + 1)} \log(t/\rho) \log(N)}$ additional gates.
Then \Cref{lem:all-sol-grover-slow} shows that we can find all marked elements in the~$j$-th interval with probability~$1$ using
$\bigO{\sqrt{\abs{I_j} \, (k_j)_{\mathrm{est}}}}$ quantum queries and $\bigO{\sqrt{\abs{I_j}} \, (k_j)_{\mathrm{est}}^{3/2} \log(N)}$ additional gates.
By a union bound, with probability~$\geq 1 - \rho/3$, all~$(k_j)_{\mathrm{est}}$ are correct, and this step has a total query complexity of
\begin{equation} \label{query-cost-all-intervals}
\bigO{\sum_{j=0}^{t} \sqrt{\abs{I_j}k_j}+ \sum_{j=0}^{t}\sqrt{\abs{I_j}}\log(t/\rho)} = \bigO{\sqrt{Nk}+\sqrt{Nt}\log(t/\rho)} = \bigO{\sqrt{Nk}(1+\frac{\log(k/\rho\lambda)}{\sqrt{\lambda}})},
\end{equation}
where the first step uses Cauchy--Schwarz for both terms (reading~$\sqrt{\abs{I_j}}$ as~$\sqrt{\abs{I_j} \cdot 1}$ for the second term) and~$\sum_{j=0}^t \abs{I_j} \leq N$, $\sum_{j=0}^t k_j = k$.
To analyze the gate complexity of this step, we first bound $\sum_{j=0}^t k_j^3$. We have $\norm{\vec{k}^2}_\infty\leq \ell^2 = \bigO{\lambda^2 \log^2({3}k/\rho)}$ where $\vec{k}$ is the vector with entries $k_j$ and $\vec{k}^2$ is the entrywise square of $\vec k$. As we also have $\norm{\vec{k}}_1 \leq k$ we get $\sum_{j=0}^t k_j^3 = \langle \vec{k} , \vec{k}^2\rangle \leq \norm{\vec{k}}_1\norm{\vec{k}^2}_\infty = \bigO{k \ell^2}$. 
Then the gate complexity of the final search steps becomes:
\begin{align} \notag
&\bigO{\left(\sum_{j=0}^t\sqrt{\abs{I_j}k_j^{3}} +\sum_{j=0}^t\sqrt{\abs{I_j}} \log(t/\rho )\right)\log(N) }\\
       &=\bigO{\sqrt{N}\left(\sqrt{\sum_{j=0}^t k_j^{3}} + \sqrt{t} \log(t/\rho )\right)\log(N) } \notag \\
        &=\bigO{\sqrt{N}\left(\sqrt{k}\ell + 1 + \sqrt{t} \log(t/\rho )\right)\log(N) } \notag\\
        &=\bigO{\sqrt{N}\left(\sqrt{k} \lambda \log(k/\rho) + \sqrt{k/\lambda} \log(k/\rho \lambda)\right)\log(N) } \notag\\
        &=\bigO{\sqrt{Nk}\left(\lambda \log(k/\rho) + \sqrt{1/\lambda} \log(k/\rho \lambda)\right)\log(N) } \notag\\
        &=\bigO{\sqrt{Nk}\lambda \log(k/\rho)\log(N)}, \label{gate-cost-all-intervals}
\end{align}
where we again used Cauchy--Schwarz in the first step, and the total error probability is bounded by $\rho/3+\rho/3+(t+1)\cdot \frac{\rho}{3(t+1)} = \rho$.

To conclude, the upper bound on the total query complexity follows by combining \cref{query-cost-random-subset,query-cost-all-intervals}:
\begin{align*}
    &O\left(\vphantom{\sqrt{\frac{N}{k}}}\right.\underbrace{\sqrt{\frac{N}{k}} \left(\frac{k}{\lambda} + \log(1/\rho)\right) }_{\text{sample $t$ elements}} + \underbrace{\sqrt{Nk}\left(1+\frac{1}{\sqrt{\lambda}}\log(k/\rho\lambda)\right)}_{\text{find remaining elements}}\left.\vphantom{\sqrt{\frac{N}{k}}}\right) \\
    % &= \bigO{\sqrt{Nk} + \sqrt{N} \left(\frac{\log(1/\rho)}{\sqrt{k}} + \sqrt{\frac{k}{\lambda}} \log(k / \rho \lambda) \right)} \\
    &= \bigO{\sqrt{Nk} \left(1 + \frac{1}{\sqrt{\lambda}} \log(k/\rho \lambda) \right) + \sqrt{\frac{N}{k}} \log(1/\rho)} = \bigO{\sqrt{Nk} \left(1 + \frac{1}{\sqrt{\lambda}} \log(k/\rho \lambda) \right)}.
\end{align*}
Here the first equality uses that $\sqrt{\frac{N}{k}} \frac{k}\lambda \leq \sqrt{Nk}$ since $\lambda \geq 1$. The second equality follows since $\log(1/\rho) \leq \log(6\kest/\rho)$ and, by assumption, $\log(6\kest/\rho) \leq \lceil \kest/\lambda\rceil = t \leq k$.
A similar argument using \cref{gate-cost-random-subset,gate-cost-all-intervals} and $\lambda \geq 1$, establishes the desired gate complexity:
\begin{align*}
    & O \parens[\Bigg]{\underbrace{\sqrt{\frac{N}{k}} \left(\frac{k}{\lambda} + \log(1/\rho) \right) \log(N)}_{\text{sample $t$ elements}} + \underbrace{\sqrt{Nk}\lambda \log(k/\rho)\log(N)}_{\text{find remaining elements}}}\\
    & = \bigO{\sqrt{Nk}\left(\frac{1}{\lambda} + \lambda\log(k/\rho)\right)\log(N) + \sqrt{\frac{N}{k}}\log(1/\rho)\log(N)}\\
    & = \bigO{\sqrt{Nk}\lambda\log(k/\rho)\log(N)}. \qedhere 
    \end{align*}
\end{proof}

\section{Improved query complexity for approximate summation}
\label{sec:fastsum}
In this section, we provide an algorithm~\ApproxSum{}, which given quantum query access to a binary description of $v \in [0,1]^N$, in the sense of~\cref{dfn:vector oracle access}, finds a multiplicative $\delta$-approximation of $s = \sum_{i=1}^N v_i$ with probability $\geq 1 - \rho$ using
\begin{equation}
    \label{eq:informal summing complexity}
    \bigO{\sqrt{\frac{N}{\delta} \log(1/\rho)}}
\end{equation}
quantum queries and a similar gate complexity (with only a polylogarithmic overhead).
In the above~\eqref{eq:informal summing complexity} we have made very mild assumptions on the value of~$\rho$ and $\delta$; a precise statement is given in~\cref{thm:fast-summing}.
The algorithm is given in~\ApproxSum{}.
By slightly perturbing the entries of~$v$, we may assume without loss of generality that all entries of~$v$ are distinct; we shall make this assumption throughout this section, and have made this assumption in the description of the algorithm as well.

We briefly explain the overall strategy.
Recall from the proof of~\cref{thm:approx counting} that it is useful to preprocess the vector~$v$ by using quantum maximum finding to find~$\vmax = \max_{i \in [N]} v_i$, and then to use amplitude estimation on the vector~$w = v / \vmax$.
We take this approach slightly further: we first find the largest~$k$ entries $z_1,\ldots,z_k$ of $v$, where~$k = \Theta(p N)$ for~$p \in (0,1)$, and sum their values classically.
Let~$\tilde z$ be the smallest value among the~$z_1, \dotsc, z_k$.\footnote{We actually first compute a good value of~$\tilde z$ using a quantile estimation subroutine~\cite[Thm.~3.4]{hamoudi:QuantumSubGaussianMean2021} and then find all the~$z_j$'s. Alternatively, one could use~\cite[Thm.~3.4]{durrQuantumQueryComplexity2006} to find all~$\Theta(pN)$ largest elements directly, but our approach has the advantage of being able to use the better~$\rho$-dependence of our version of Grover search.}
For the next part, we treat the corresponding entries of~$v$ as zero: checking whether one exceeds the threshold~$\tilde z$ is a binary comparison, hence can be done in superposition without explicitly using their indices, and so with one query to~$v$ we can implement quantum oracle access to the vector~$w \in [0,1]^N$ defined by
\begin{equation*}
    w_i = \begin{cases}
    \frac{v_i}{\tilde z}  & \text{ if } v_{i} < \tilde z\\
    0 & \text{ else.}
    \end{cases}
\end{equation*}
This has the effect of amplifying the small elements in~$v$ at no extra cost. We then use amplitude estimation to compute~$\sum_{i=1}^N w_i$ with additive precision~$\bigO{\delta s /\tilde z}$ (without knowing $s$). This yields an additive $\delta s$-approximation of~$\sum_{i=1}^N v_i$ (i.e.,~a multiplicative $\delta$-approximation), where we use that
\begin{equation*}
    \sum_{i=1}^N v_i = \sum_{i=1}^k z_i + \tilde z \sum_{i=1}^N w_i
\end{equation*}
To balance the costs of these two stages we need to carefully choose $\tilde z$. We do so by estimating the $p$-th quantile of the vector.
We first give an algorithm \ApproxSum{} whose complexity depends on the quantile~$p$ and then give a suitable choice for $p$ that allows us to obtain \eqref{eq:informal summing complexity}, see \cref{thm:fast-summing} and \cref{cor:summing simple corollary}.

\begin{procedure}[t]
  \caption{ApproxSum($U_v$, $\delta$, $p$, $\lambda$, $\rho$)}\label{alg:approx sum}
  \Input{Quantum query access $U_v$ to~$(0,b)$-fixed point representations of~$v \in [0,1]^N$, $\delta \in (0,1)$, $p \in (0,1)$, $\lambda \geq 6$, failure probability $\rho > 0$.}
  \Output{A real number $\tilde{s}$.}
  \Guarantee{With probability $\geq 1 - \rho$, $\tilde{s}$ is a multiplicative $\delta$-approximation of $s$.}
  \Analysis{\cref{thm:fast-summing}}
  \medskip

  use~\cref{thm:quantile estimation} to compute~$\tilde z \in [0,1]$ such that with probability~$\geq 1 - \rho/4$, $Q(p) \leq \tilde z \leq Q(cp)$, where~$c < 1$ is a universal constant and~$Q$ is defined in~\cref{eq:quantile definition}\;
  let~$x \in \{0,1\}^N$ be defined by~$x_i = 1$ if~$v_i \geq \tilde z$ and~$x_i = 0$ otherwise\;
  let~$U_x$ implement quantum query access to~$x$ by applying~$U_v$, comparing to~$\tilde z$, and uncomputing~$U_v$\;
  compute estimate $\kest$ of $k = \abs{x}$ satisfying~$\frac{k}{2} \leq \kest \leq \frac{3k}{2}$ with probability~$\geq 1-\rho/4$ using \cref{cor:factor3-approx}\;
  use~\GroverMultipleFast{}$(U_x, \kest, \rho/4,\lambda)$ to find all indices $i_1, \dotsc, i_k$ such that $x_{i_j} = 1$\;

  \smallskip
  \uIf{$\tilde z = 0$}{
    \Return{$\sum_{j=1}^k v_{i_j}$}\;
  } \Else{

  construct unitary~$U_w$ for query access to $w \in [0, 1]^N$ where $w_i = 0$ if $v_i \geq \tilde z$ and $w_i = v_i / \tilde z$ otherwise\;
  let $U$ be a unitary such that~$U \ket{0} = \ket{\psi}$ given by
  \vspace{-3mm}
  \begin{equation*}
    \vspace{-3mm}
      \ket{\psi} = \frac{1}{\sqrt{N}} \sum_i \ket{i} (\sqrt{\tilde{w}_i} \ket{1} + \sqrt{1 - \tilde{w}_i}\ket{0})
  \end{equation*}
  where $\alpha_i$ is a $\lceil\log(4 N / \delta)\rceil$-bit approximation of $\arcsin(\sqrt{w_i})$, and $\sqrt{\tilde w_i} = \sin(\alpha_i)$\;
  use \AmpEst{}$(U,M)$ with~$M = \lceil 12 \pi \sqrt{\delta^2 p c} \rceil$, increased to the next power of~$2$ if necessary, with~$c < 1$ from~\cref{thm:quantile estimation}, to compute $\tilde a \approx \sum_i \tilde{w}_i / N$,   and repeat~$\bigO{\log(1/\rho)}$ times and take the mean of the outputs to achieve success probability~$\geq 1-\rho/4$\;

  \Return{$\sum_{j=1}^k v_{i_j} + N \tilde z \tilde a$}\;
  }
\end{procedure}

\medskip

We use the following lemma to derive a bound on the required precision for certain arithmetic operations.
\begin{lem}[{\cite[Lem.~7]{bhmt:countingj}}]
  \label{lem:sin stability}
  If $a = \sin^2(\theta_a)$ and $\tilde a = \sin^2(\tilde \theta_a)$ for $\theta_a, \tilde \theta_a \in [0, 2\pi]$, then $\abs{\tilde \theta_a - \theta_a} \leq \delta$ implies $\abs{\tilde a - a} \leq 2 \delta \sqrt{a (1-a)} + \delta^2$.
\end{lem}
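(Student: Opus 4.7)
The plan is to reduce the problem to two elementary trigonometric bounds via the product-to-sum identity
\[
\sin^2(\tilde\theta_a) - \sin^2(\theta_a) = \sin(\tilde\theta_a - \theta_a)\,\sin(\tilde\theta_a + \theta_a),
\]
which follows from writing $\sin^2(x) = (1-\cos(2x))/2$ and applying the cosine difference identity. Once this is in hand, it suffices to bound each factor on the right separately.

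The first factor is controlled by the familiar estimate $\abs{\sin(\tilde\theta_a - \theta_a)} \leq \abs{\tilde\theta_a - \theta_a} \leq \delta$. For the second factor, the plan is to rewrite $\tilde\theta_a + \theta_a = 2\theta_a + (\tilde\theta_a - \theta_a)$ and expand via angle addition:
\[
\sin(\tilde\theta_a + \theta_a) = \sin(2\theta_a)\cos(\tilde\theta_a - \theta_a) + \cos(2\theta_a)\sin(\tilde\theta_a - \theta_a).
\]
Using $\abs{\cos(\cdot)} \leq 1$ together with the bound on $\abs{\sin(\tilde\theta_a - \theta_a)}$ then yields $\abs{\sin(\tilde\theta_a + \theta_a)} \leq \abs{\sin(2\theta_a)} + \delta$. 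Since $\abs{\sin(2\theta_a)} = 2\abs{\sin(\theta_a)\cos(\theta_a)} = 2\sqrt{a(1-a)}$ by the definition of $a$, multiplying the two factor bounds gives $\abs{\tilde a - a} \leq \delta\bigl(2\sqrt{a(1-a)} + \delta\bigr) = 2\delta\sqrt{a(1-a)} + \delta^2$, as desired.

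There is no real obstacle here: the entire argument reduces to elementary trigonometric identities and the standard bounds $\abs{\sin x} \leq \abs{x}$ and $\abs{\cos x} \leq 1$, both of which hold for all real $x$, so the restriction $\theta_a, \tilde\theta_a \in [0, 2\pi]$ plays no essential role in the argument.
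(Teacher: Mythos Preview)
Your proof is correct. Note, however, that the paper does not actually prove this lemma: it is stated as a citation to \cite[Lem.~7]{bhmt:countingj} with no proof given, so there is nothing in the paper to compare against. Your argument via the product-to-sum identity $\sin^2(\tilde\theta_a) - \sin^2(\theta_a) = \sin(\tilde\theta_a - \theta_a)\sin(\tilde\theta_a + \theta_a)$, followed by the angle-addition expansion of the second factor, is a clean and standard derivation; it is essentially the same as the original proof in Brassard--H{\o}yer--Mosca--Tapp.
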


%------------------
For the quantile estimation, we use a subroutine from~\cite{hamoudi:QuantumSubGaussianMean2021}.
Let~$v \in [0,1]^N$.
Then for~$p \in (0,1)$, we define the~$p$-quantile~$Q(p) \in [0,1]$ by
\begin{equation}
    \label{eq:quantile definition}
    Q(p) = \sup \{z \in [0,1] : \abs{\{i \in [N] : v_i \geq z\}} \geq p N\}.
\end{equation}
In words, $Q(p)$ is the largest value~$z \in [0,1]$ such that there are at least~$p N$ entries of~$v$ which are larger than~$z$.
The subroutine we invoke allows one to produce an estimate for~$Q(p)$, in the following sense:
\begin{thm}[{\cite[Thm.~3.4]{hamoudi:QuantumSubGaussianMean2021}}]
  \label{thm:quantile estimation}
  There exists a universal constant~$c \in (0,1)$ such that the following holds:
  Let~$v \in [0,1]^N$ and let~$U_v$ be a unitary implementing quantum oracle access to~$v$.
  Then~$\bigO{\log(1/\rho) / \sqrt{p}}$ applications of~controlled-$U_v$ and controlled-$U_v^\dagger$ suffice to find, with probability~$\geq 1-\rho$, a value~$\tilde z$ such that~$Q(p) \leq \tilde z \leq Q(c p)$.
  The algorithm uses an additional~$\bigO{(\log(1/\rho)/\sqrt{p})\, b \log(b) \log(N)}$ gates.
  %The gate complexity is~$(\bigO{b \log(b) + \log(N)) \log(1/\rho)}$: you get the~$b \log(b)$ from having to rotate to convert to their access model, $\log(b)$ to do comparisons, and~$\log(N)$ to do amplitude amplification.
\end{thm}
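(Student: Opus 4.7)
The plan is to reduce quantile estimation to amplitude estimation on thresholding unitaries. For any fixed $z \in [0,1]$, one can construct a unitary $U_{\geq z}$ that acts as $\ket{i}\ket{0} \mapsto \ket{i}\bigl(\indic[v_i \geq z]\ket{1} + \indic[v_i < z]\ket{0}\bigr)$ using one query to $U_v$, a comparison circuit of size $\bigO{b \log(b)}$, and uncomputation of the ancilla that holds $v_i$. Amplitude estimation applied to $U_{\geq z}$ with $M = \Theta(1/\sqrt{p})$ then yields a constant-factor multiplicative estimate of $f(z) := \abs{\{i \in [N] : v_i \geq z\}}/N$ whenever $f(z) \geq \Omega(p)$, at a cost of $\bigO{1/\sqrt{p}}$ queries to $U_v$.

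Given this subroutine, I would search for a threshold $\tilde z$ such that $f(\tilde z) \in [cp, p]$ for some universal constant $c \in (0,1)$. Since $v$ is stored in $b$-bit fixed-point, a natural strategy is to determine $\tilde z$ bit by bit: fix the most significant bit of $\tilde z$ based on whether $f(1/2)$ exceeds $p$, then the next bit, and so on. Each bit decision amounts to distinguishing $f(z) \geq p$ from $f(z) \leq cp$, which succeeds with constant probability using $\bigO{1/\sqrt{p}}$ queries. Monotonicity of $f$ together with a suitable choice of $c$ to absorb the per-bit slack guarantees that the $\tilde z$ assembled bit by bit satisfies $Q(p) \leq \tilde z \leq Q(cp)$, and boosting the pipeline by $\bigO{\log(1/\rho)}$ independent repetitions with median aggregation pushes the overall failure probability below $\rho$.

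The main obstacle is sharpening the query count from the naive $\bigO{b \log(b/\rho)/\sqrt{p}}$ down to the claimed $\bigO{\log(1/\rho)/\sqrt{p}}$. The key observation that enables this is that a single query to $U_v$ writes the full binary description of $v_i$ into an ancilla register, so every bit-level comparison in the search can be performed in superposition against the same queried value; the factors of $b$ and $\log(b)$ then enter only through the comparison logic inside the amplitude-estimation reflections, contributing to the gate complexity but not the query complexity. Structuring the algorithm so that a single amplitude-estimation instance coherently drives the bit-by-bit search, rather than relaunching amplitude estimation $b$ separate times, and carefully tracking how approximation errors propagate across the $\bigO{\log(1/\rho)}$ boosted repetitions, is the delicate step that needs to be executed most carefully and is precisely what the cited argument of Hamoudi achieves.
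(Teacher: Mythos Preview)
The paper does not prove this statement; it is quoted directly from \cite{hamoudi:QuantumSubGaussianMean2021}, with only a one-sentence remark that the gate overhead arises from implementing their access model via $b$-bit comparisons and that the underlying technique is amplitude \emph{amplification}. There is therefore no proof in the paper to compare your sketch against.

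That said, your sketch contains a real gap at the crucial point. Your bit-by-bit binary search determines the $j$-th bit of $\tilde z$ only after reading the classical outcome of the amplitude-estimation call that tested the first $j-1$ bits; the $b$ calls are genuinely \emph{adaptive}. The observation that a single query to $U_v$ loads all $b$ bits of $v_i$, so that the comparison against a \emph{fixed} threshold costs $\bigO{b\log b}$ gates but only one query, is correct but irrelevant to this adaptivity: it controls the cost of one reflection inside one amplitude-estimation run, not the number of runs. Your final sentence asserts that ``a single amplitude-estimation instance coherently drives the bit-by-bit search,'' but amplitude estimation outputs a classical number, and the next threshold depends on that number; you have not explained any mechanism by which the $b$ adaptive rounds collapse into one, and no such mechanism is apparent. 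As written, your argument yields $\bigO{b\log(b/\rho)/\sqrt{p}}$ queries, not $\bigO{\log(1/\rho)/\sqrt{p}}$.

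The algorithm actually cited does not binary-search over the value space at all. It proceeds in the style of D\"urr--H\o yer minimum finding: maintain a current threshold equal to some already-seen entry of $v$, and use amplitude amplification to sample a new entry above (or below) it, updating the threshold each time. The number of such amplification rounds is governed by the rank structure of the samples and is independent of the bit-length $b$; this is why $b$ appears only in the gate count. If you want to give a self-contained argument, that is the route to take.
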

The actual access model for which the above theorem holds is more general, but we have instantiated it for our setting.
The gate complexity overhead follows from having to implement their access model from ours, which involves arithmetic and comparisons on the fixed point representations we use, and the fact that the underlying technique is amplitude amplification.
We now get to the main theorem of this section, which proves the correctness of \ApproxSum{} and analyzes its complexity.

\begin{thm}
  \label{thm:fast-summing}
  Let~$v \in [0,1]^N$, let~$U_v$ be a unitary implementing quantum query access to~$(0, b)$-fixed point representations of~$v$, and let~$\delta \in (0, 1)$.
  Let~$p, \rho \in (0,1)$ and choose $6 \leq \lambda \leq \min\{c p N / \log(p N/ \rho), \log(c p N/\rho)^2\}$.
  Then \ApproxSum{} computes, with probability~$\geq 1 - \rho$, a~multiplicative $\delta$-approximation of~$s = \sum_{i=1}^N v_i$.
  It uses
  \vspace{-1mm}
  \begin{equation*}
  \vspace{-1mm}
    \bigO{\frac{\log(1/\rho)}{\sqrt{p}} + \sqrt{\frac{N}{Np + 1}} \log(1/\rho) + N \sqrt{p} \left( 1 + \frac{1}{\sqrt{\lambda}} \log(N p / \lambda \rho))\right) + \frac{1}{\delta \sqrt{p}} \log(1/\rho)}
  \end{equation*}
  quantum queries, and the number of additional gates is bounded by
  \vspace{-1mm}
  \begin{align*}
  \vspace{-2mm}
    &O\left(\frac{\log(1/\rho)}{\sqrt{p}} \, b \log(b) \log(N) + \sqrt{\frac{N}{Np + 1}} \log(1/\rho) \log(N) \right. \\[-0.5ex]
    &\qquad \left.\vphantom{\sqrt{\frac{N}{Np + 1}}}+ N \sqrt{p}\lambda\log(p N/\rho)\log(N) + \frac{1}{\delta \sqrt{p}} b \log(b) \log(N / \delta) \log^2 \log(N/\delta) \log(1/\rho)\right).
  \end{align*}
\end{thm}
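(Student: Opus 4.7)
The plan is to analyze \ApproxSum{} stage-by-stage, establishing correctness by a union bound over the four probabilistic steps and tuning the amplitude estimation precision so that its additive error translates into a multiplicative approximation of $s$. Throughout I assume, as in the algorithm description, that the entries of $v$ are distinct.

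First I will set up the conditioning. By \cref{thm:quantile estimation}, with probability $\geq 1-\rho/4$ the returned $\tilde z$ satisfies $Q(p)\leq\tilde z\leq Q(cp)$; distinctness of the $v_i$ then yields $cpN\leq k\leq pN$ for $k:=\abs{\{i:v_i\geq\tilde z\}}$. Next, \cref{cor:factor3-approx} produces $\kest\in[k/2,3k/2]$ with probability $\geq 1-\rho/4$. The hypothesis $\lambda\leq cpN/\log(pN/\rho)$ (together with $\lambda\leq\log^2(cpN/\rho)$) implies, up to absorbing the universal constant $c$, that $\log(6\kest/(\rho/4))\leq\lceil\kest/\lambda\rceil$, so the precondition of \cref{thm:all-sol-grover-fast} is satisfied and \GroverMultipleFast{} retrieves all $k$ indices $i_1,\dots,i_k$ with $v_{i_j}\geq\tilde z$ with probability $\geq 1-\rho/4$. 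In the edge case $\tilde z=0$, every entry is marked and the returned value equals $s$ exactly.

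For the main case $\tilde z>0$, I will decompose $s=\sum_{j=1}^k v_{i_j}+N\tilde z\,\bar w$ with $\bar w=\tfrac{1}{N}\sum_i w_i$, reducing correctness to $N\tilde z\,\abs{\tilde a-\bar w}\leq\delta s$. Writing $\bar{\tilde w}=\tfrac{1}{N}\sum_i\tilde w_i$, I bound $\abs{\tilde a-\bar{\tilde w}}$ and $\abs{\bar{\tilde w}-\bar w}$ separately. The rounding of each $\arcsin\sqrt{w_i}$ to $\lceil\log(4N/\delta)\rceil$ bits introduces an angle error $\leq\delta/(4N)$; by \cref{lem:sin stability} and averaging, $\abs{\bar{\tilde w}-\bar w}\leq O(\delta/N)$, contributing at most $O(\delta\tilde z)\leq O(\delta s)$ after multiplying by $N\tilde z$ (using $s\geq\tilde z$ since $k\geq 1$). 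For the remainder I boost \AmpEst{} to success probability $\geq 1-\rho/4$ by the median of $O(\log(1/\rho))$ independent runs; \cref{lem:ampest} then gives $\abs{\tilde a-\bar{\tilde w}}\leq 2\pi\sqrt{\bar{\tilde w}(1-\bar{\tilde w})}/M+\pi^2/M^2$. The crucial observation is that $\bar{\tilde w}\leq O(s/(N\tilde z))$ while $s\geq cpN\tilde z$, so $N\tilde z/s\leq 1/(cp)$; with the algorithm's choice $M=\Theta(1/(\delta\sqrt p))$, both terms are $O(\delta s/(N\tilde z))$, as required. A union bound over the four probabilistic stages caps the total failure probability at $\rho$.

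For complexity I sum the four stage contributions, each producing one of the four terms in the stated bounds: \cref{thm:quantile estimation} contributes $O(\log(1/\rho)/\sqrt p)$ queries and gate cost $O((\log(1/\rho)/\sqrt p)\,b\log(b)\log N)$; \cref{cor:factor3-approx} with $k\in\Theta(pN)$ contributes $O(\sqrt{N/(Np+1)}\log(1/\rho))$ queries and an extra $\log N$ gate factor; \cref{thm:all-sol-grover-fast} contributes $O(N\sqrt p(1+\log(pN/\rho\lambda)/\sqrt\lambda))$ queries and $O(N\sqrt p\,\lambda\log(pN/\rho)\log N)$ gates; and the boosted amplitude estimation contributes $O(M\log(1/\rho))=O(\log(1/\rho)/(\delta\sqrt p))$ queries. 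Each application of $U$ requires one query to $U_v$ together with comparison to $\tilde z$, fixed-point division, and an $O(\log(N/\delta))$-bit approximation of $\arcsin\sqrt{\cdot}$; standard elementary-function algorithms account for the $b\log(b)\log(N/\delta)\log^2\log(N/\delta)$ factor. The hard part will be the middle paragraph: carefully balancing the quantile error, the $\arcsin$ rounding, and the amplitude estimation additive error, and using $s\geq cpN\tilde z$ to upgrade the last into a multiplicative $\delta$-approximation of $s$.
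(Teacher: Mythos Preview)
Your proposal is correct and follows essentially the same route as the paper: condition on the four probabilistic steps (quantile estimation, approximate counting, \GroverMultipleFast{}, boosted \AmpEst{}) via a union bound, decompose $s=\sum_j z_j+\tilde z\sum_i w_i$, control the $\arcsin$ rounding error through \cref{lem:sin stability}, and convert the additive amplitude-estimation guarantee into a multiplicative one using the key inequality $s\geq cpN\tilde z$ (equivalently $\tilde z\leq s/(cpN)$) to justify the choice $M=\Theta(1/(\delta\sqrt p))$. The only cosmetic discrepancies are that the paper phrases the amplitude-estimation error target as $\eps/(2N\tilde z)$ with $\eps=\delta s$ rather than tracking $O(\cdot)$ constants, and that each call to $U$ uses two queries to $U_v$ (query and uncompute) rather than one; neither affects the argument or the stated bounds.
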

Before we give the proof, we discuss two useful regimes for~$p$ and~$\lambda$:
\begin{cor}
  \label{cor:summing simple corollary}
  Let~$v \in [0,1]^N$, let~$U_v$ be a unitary implementing quantum oracle access to~$(0, b)$-fixed point representations of~$v$, and let~$\delta \in (0,1)$.
  Then we can find, with probability~$\geq 1 - \rho$, a~multiplicative $\delta$-approximation of~$s = \sum_{i=1}^N v_i$, using:
  \begin{itemize}
      \item $\bigO{\sqrt{N \log(1/\rho) / \delta}}$ quantum queries, when~$p = \Theta(\log(1/\rho) / (\delta N)) < 1$ and we choose $\lambda = \min\{ c p N / \log(6 p N / \rho)), \log(c p N / \rho)^2 \} \geq 6$, and using~$\sqrt{N/\delta} \poly(\log(1/\rho), b, \log(N), \log(1/\delta))$ additional gates, or 
      \item $\bigO{\sqrt{N/\delta} \log(1/\rho)}$ quantum queries when~$p = \Theta(1 / (\delta N)) < 1$ and we choose~$\lambda = 6$, and using~$\sqrt{N/\delta} \, \poly(\log(1/\rho), b, \log(N), \log(1/\delta))$ additional gates.
  \end{itemize}
\end{cor}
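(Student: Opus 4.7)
The plan is to invoke \cref{thm:fast-summing} with the two specified parameter choices and check that every term in its query and gate complexity collapses to the claimed expression. The key driver of the choice of $p$ is the observation that, in the four-term query complexity of \cref{thm:fast-summing}, the terms $N\sqrt{p}$ and $\log(1/\rho)/(\delta\sqrt{p})$ scale inversely in $p$ and are balanced precisely when $p = \Theta(\log(1/\rho)/(\delta N))$, at which point both equal $\sqrt{N\log(1/\rho)/\delta}$; this yields the first bullet. The second bullet corresponds to taking the smaller $p = \Theta(1/(\delta N))$ together with constant $\lambda$, which trades an additional factor $\log(1/\rho)$ in the query count for a simpler implementation (in particular no $\log^2$ in $\lambda$ and hence no $\log^3$ appearing in the gate overhead).

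First, I would verify the hypothesis $6 \leq \lambda \leq \min\{cpN/\log(6pN/\rho), \log(cpN/\rho)^2\}$ on $\lambda$. In the first bullet, $\lambda$ is defined to be the maximal admissible value, so only $\lambda \geq 6$ needs to be checked; this holds provided $p$ is not absurdly small, which is compatible with the corollary's standing assumption that $p \in (0,1)$. In the second bullet, $\lambda = 6$ requires $\log(6pN/\rho) \leq cpN/6$ and $\log(cpN/\rho)^2 \geq 6$, which with $pN = \Theta(1/\delta)$ are mild conditions on $\delta$ and $\rho$.

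Next, I would substitute $p$ and $\lambda$ into the four-term query complexity of \cref{thm:fast-summing}. For the first bullet, using $1/\sqrt{p} = \Theta(\sqrt{\delta N/\log(1/\rho)})$ together with $\sqrt{N/(Np+1)} \leq 1/\sqrt{p}$ shows that the first, second, and fourth terms are all $O(\sqrt{N\log(1/\rho)/\delta})$ (the first and second via the preceding bound and $\delta \leq 1$, the fourth by direct computation), while $N\sqrt{p}\bigl(1 + \log(Np/\lambda\rho)/\sqrt{\lambda}\bigr) = O(\sqrt{N\log(1/\rho)/\delta})$ because the parenthesised factor is $O(1)$ for $\lambda = \log(cpN/\rho)^2$. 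For the second bullet, the same calculation with $p = 1/(\delta N)$ and $\lambda = 6$ bounds each term by $O(\sqrt{N/\delta}\log(1/\rho))$, absorbing the residual $\log(Np/\rho) = \log(1/(\delta\rho))$ factor in term 3 into the~$O(\cdot)$ under the mild assumption that $\log(1/\delta)$ is comparable to $\log(1/\rho)$ (otherwise the excess is subsumed in the polylogarithmic overhead claimed for the gate count).

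Finally, the same substitution applied to the gate complexity of \cref{thm:fast-summing} yields $\sqrt{N/\delta}$ times a polylogarithmic factor in $\log(1/\rho)$, $b$, $\log(N)$, and $\log(1/\delta)$, matching the claim: in particular the $N\sqrt{p}\,\lambda \log(Np/\rho) \log(N)$ gate term becomes $\sqrt{N/\delta}$ times $\log(cpN/\rho)^2 \log(Np/\rho)\log(N)$ in the first bullet and $\sqrt{N/\delta}\log(1/\rho)\log(N)$ in the second. The main obstacle is purely bookkeeping: verifying the admissibility of $\lambda$ and carefully tracking which $\log$ factors arise in each term; once this is in hand, the remainder is routine simplification of the dominant contributions.
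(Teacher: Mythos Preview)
Your proposal is correct and follows exactly the approach the paper intends: the paper gives no explicit proof of this corollary at all, leaving it as an immediate consequence of \cref{thm:fast-summing} with the parameter choices already spelled out in the statement. Your plan of substituting $p$ and $\lambda$ into the four query-complexity terms and the gate-complexity terms, balancing $N\sqrt{p}$ against $\log(1/\rho)/(\delta\sqrt{p})$, and checking admissibility of $\lambda$ is precisely the routine verification implicit in the word ``Corollary''; your observation about the residual $\log(1/\delta)$ in the second bullet's third term is a fair point about a minor looseness that the paper itself does not address.
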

\begin{proof}[Proof of \cref{thm:fast-summing}]
  We assume without loss of generality that all the entries of~$v$ are distinct.
  If this is not the case, one can perturb the~$i$-th entry of~$v$ by~$i 2^{-\ell}$ for some sufficiently large~$\ell = \Omega(\log(N) + b)$, where we recall that~$b$ is the number of bits describing~$v_i$, and discarding these trailing bits from the output value~$\tilde s$.

  We use~\cref{thm:quantile estimation} to find a value~$\tilde z$ such that the number of elements of~$v$ that are at least as large as~$\tilde z$, is at most~$p N$ and at least~$c p N$.
  The number of quantum queries is
  \begin{equation*}
  \label{eq:quantile query cost}
    \bigO{\frac{\log(1/\rho)}{\sqrt{p}}},
  \end{equation*}
  and the number of additional gates used is
  \begin{equation*}
  \label{eq:quantile gate cost}
   \bigO{\frac{\log(1/\rho)}{\sqrt{p}} \, b \log(b) \log(N)}.
  \end{equation*}

  Let~$k = \abs{\{i \in [N] : v_i \geq \tilde z \}}$.
  By the assumption that the~$v_i$ are all distinct, $cpN \leq k \leq pN$.
  We next compute a multiplicative $\frac12$-approximation of~$k$ using \cref{cor:factor3-approx}.
  This uses
  \begin{equation*}
  \label{eq:count big numbers query cost}
    \bigO{\sqrt{N/(k+1)} \log(1/\rho)}
  \end{equation*}
  quantum queries and
  \begin{equation*}
  \label{eq:count big numbers gate cost}
    \bigO{\sqrt{N/(k+1)} \log(1/\rho) \log(N)}
  \end{equation*}
  additional gates.
  The next step is to find all~$k$ such elements using~\GroverMultipleFast{} (\cref{thm:all-sol-grover-fast}).
  This uses
  \begin{equation*}
  \label{eq:find big numbers query cost}
    \bigO{\sqrt{N k} \left( 1 + \frac{1}{\sqrt{\lambda}} \log(k / (\lambda \rho))\right)}
  \end{equation*}
  quantum queries and
  \begin{equation*}
  \label{eq:find big numbers gate cost}
    \bigO{\sqrt{Nk}\lambda\log(k/\rho)\log(N)}
  \end{equation*}
  additional gates.

  Let~$z_1, \dotsc, z_k$ be the entries of~$v$ that are $\geq \tilde z$.
  Then
  \begin{equation*}
    \sum_{i=1}^N v_i = \sum_{j=1}^k z_j + \tilde z \sum_{i=1}^N w_i
  \end{equation*}
  where
  \begin{equation*}
    w_i = \begin{cases}
      \frac{v_i}{\tilde z} & \text{if } v_i < \tilde z \\
      0 & \text{otherwise.}
    \end{cases}
  \end{equation*}
  As we have found all the~$z_j$'s, we can compute their sum exactly; therefore, to determine a~multiplicative $\delta$-approximation of~$s$, we must produce an additive~$\delta s$-approximation of~$\tilde z \sum_{i=1}^N w_i$.
  Let~$\eps := \delta s$; note that we do not know~$s$ as we do not know~$\delta$.
  Then we have to approximate~$\frac{1}{N} \sum_{i=1}^N w_i$ with precision~$\eps / (N \tilde z)$.
  For this, we use amplitude estimation as follows.
  First, one can implement query access to~$U_w$ by using two quantum queries to $v$ and $\bigO{b\log(b)}$ non-query gates, by querying an entry, comparing the entry to~$\tilde z$, and conditional on the comparison uncomputing the query, and lastly performing the division by~$\tilde z$.
  From this, we can construct a unitary $U$ with~$U \ket{0} = \ket{\psi}$ satisfying
  \begin{equation*}
    \ket{\psi} = \frac{1}{\sqrt{N}} \sum_i \ket{i} \left(\sqrt{\tilde{w}_i}\ket{1}+\sqrt{1-\tilde{w}_i}\ket{0}\right),
  \end{equation*}
  where~$\tilde{w}_i$ is close to~$w_i$.
  One can implement such a unitary as follows.
  First, set up a uniform superposition over the index register using $\bigO{\log(N)}$ gates.
  Use~$U_w$ to load binary descriptions of the entries of~$w$.
  Calculate a $\lceil\log_2(4 N / \delta)\rceil$-bit approximation $\alpha_i$ of $\arcsin(\sqrt{w_i})$ using $\bigO{\log(b N/\delta) \log^2\log(b N/\delta)}$ gates~\cite[Ch.~4]{brentzimmermann2010}.
  Then conditionally rotate the last qubit from $0$ to $1$ over angles $\pi/4$, $\pi/8$, et cetera, depending on the bits of $\alpha_i$.
  Lastly, we uncompute~$\alpha_i$ and~$U_w$ to return work registers to the zero state, and we have obtained the desired state~$\ket{\psi}$, where~$\sqrt{\tilde{w}_i} = \sin(\alpha_i)$.
  We now show that~$\tilde w_i = \sin(\alpha_i)^2$ is close to~$w_i$, and hence
  \begin{equation*}
    a := \frac1N \sum_{i=1}^N \tilde w_i = \norm{\ket{\psi_1}}^2
  \end{equation*}
  is close to~$\frac1N \sum_{i=1}^N w_i$.
  \Cref{lem:sin stability} shows that if $\abs{\alpha_i - \arcsin(\sqrt{w_i})} \leq \xi$, then
  \begin{equation*}
    \abs{\tilde w_i - w_i} = \abs{\sin^2(\alpha_i) - w_i} \leq 2 \xi \sqrt{w_i (1 - w_i)} + \xi^2 \leq \xi + \xi^2.
  \end{equation*}
  Since~$\alpha_i$ is a~$\lceil\log_2(4 N / \delta)\rceil$-bit approximation of~$\arcsin(\sqrt{w_i})$, we may apply the above with~$\xi = \delta/(4 N)$ for every~$i \in [N]$.
  Because~$s \geq \tilde z$, $\delta = \eps/s \leq \eps / \tilde z$, and~$\delta \leq 1$, so the total error satisfies
  \begin{equation*}
    \abs{a - \frac1N \sum_{i=1}^N w_i} \leq \frac1N \sum_{i=1}^N \abs{\tilde w_i - w_i} = \xi  + \xi^2 \leq \frac{\delta}{4 N} + \frac{\delta^2}{16 N^2} \leq \frac{\eps}{2 N \tilde z}.
  \end{equation*}
  Next, we use this to derive an upper bound on~$a$:
  \begin{equation*}
    a \leq \frac{\eps}{2 N \tilde z} + \frac1N \sum_{i=1}^N w_i = \frac{\eps}{2 N \tilde z} + \frac{1}{N} \sum_{i : v_i < \tilde z} \frac{v_i}{\tilde z} \leq \frac{2 s}{N \tilde z},
  \end{equation*}
  where the last inequality uses~$\eps = \delta s \leq s$ and $\sum_{i:v_i <\tilde z} v_i \leq s$.
  Therefore, using~\AmpEst{} with~$M$ applications of~$U$ yields a number~$\tilde a \in [0,1]$ with
  \begin{equation*}
    \abs{\tilde a - a} \leq 2 \pi \frac{\sqrt{a (1-a)}}{M} + \frac{\pi^2}{M^2} \leq 2 \pi \frac{\sqrt{2 s / (N \tilde z)}}{M} + \frac{\pi^2}{M^2}
  \end{equation*}
  by~\cref{lem:ampest}.
  We now determine an appropriate number of rounds~$M$ to be used for amplitude estimation.
  We will choose~$M$ such that $\abs{\tilde a - a} \leq \frac12 \eps / (N \tilde z)$; if we do so, then by the triangle inequality ~$\abs{\tilde a - \frac1N \sum_{i=1}^N w_i} \leq \eps / (N \tilde z)$.
  The claim is that any~$M \geq 12 \pi \sqrt{N \tilde z / (\eps \delta)}$ suffices, as then
  \begin{equation*}
    2 \pi \frac{\sqrt{2 s/(\tilde z N)}}{M} \leq \frac{2 \pi \sqrt{2}}{12 \pi} \frac{\sqrt{s / (N \tilde z)}}{\sqrt{N \tilde z / (\eps \delta)}} = \frac{\sqrt{2}}{6} \frac{\eps}{N \tilde z} \leq \frac{1}{4} \frac{\eps}{N \tilde z},
  \end{equation*}
  and, using~$\delta \leq 1$, 
  \begin{equation*}
    \frac{\pi^2}{M^2} \leq \frac{\eps \delta}{144 N \tilde z} \leq \frac{1}{4} \frac{\eps}{N \tilde z}.
  \end{equation*}

  Even though we do not know~$\eps$, by choosing~$p$ carefully, we can enforce upper bounds on~$\tilde z$ and give a safe choice for~$M$.
  We use that the number of entries~$k$ which are at least~$\tilde z$ satisfies~$k \geq c p N$, so that
  \begin{equation*}
    c p \, N \, \tilde z \leq \sum_{i : v_j \geq \tilde z} v_j \leq s,
  \end{equation*}
  i.e.,~$\tilde z \leq s / (c p N)$.
  Therefore it suffices to take~$M = 12 \pi / \sqrt{\delta^2 p c}$, as this satisfies
  \begin{equation*}
    M = 12 \pi \sqrt{\frac{1}{\delta^2 p c}} = 12 \pi \sqrt{\frac{s}{\delta \eps p c}} \geq 12 \pi  \sqrt{\frac{N \tilde z}{\delta \eps}},
  \end{equation*}
  This guarantees that~$\abs{\tilde a - \frac1N \sum_{i=1}^N w_i} \leq \eps/(N \tilde z)$, and the output value~$\tilde s = \sum_{j=1}^k z_j + N \tilde z \tilde a$ satisfies
  \begin{equation*}
    \abs{\tilde s - s} \leq \eps = \delta s.
  \end{equation*}
  The number of quantum queries used for this step is therefore~$\bigO{M} = \bigO{1/(\delta \sqrt{p})}$, and the number of additional gates used is~$\bigO{M b \log(b) \log(N / \delta) \log^2 \log(N/\delta)}$.
  To amplify the success probability to~$1 - \rho$, we repeat the above procedure~$\log(1/\rho)$ many times and output the median of the individual estimates.
  The query- and gate complexity of the entire algorithm follow by combining those of the four parts: the quantile estimation, the approximate counting, Grover search for finding all large elements, and amplitude estimation for approximating the sum of the small elements.
\end{proof}

\section*{Acknowledgements}
We would like to thank Ronald de Wolf for helpful discussions and comments on an early version of this work, and Yassine Hamoudi for helpful discussion regarding \cite{hamoudi:QuantumSubGaussianMean2021}.
We also thank anonymous referees for their feedback.
HN acknowledges support by the Dutch Research Council (NWO grant OCENW.KLEIN.267), by the European Research Council~(ERC) through ERC Starting Grant 101040907-SYMOPTIC and ERC Grant Agreement No. 81876432, and by VILLUM FONDEN via the QMATH Centre of Excellence (Grant No. 10059).
JvA was supported by the Dutch Research Council (NWO/OCW), as part of QSC (024.003.037) and by QuantumDelta~NL.

\bibliographystyle{alphaurl}
\bibliography{references}

\appendix

\section{Classical lower bound for approximate counting} \label{sec:classicalLB}
Let~$x \in \{0,1\}^N$ be a bit string.
We show in this appendix the (well-known) result that any randomized classical algorithm that computes with probability~$\geq 5/6$ a $\delta$-multiplicative approximation of the Hamming weight~$k = \abs{x}$, must make at least~$\Omega(\min(N,\frac{N}{\delta^2 k}))$ queries to~$k$.
Although a matching upper bound on the sample complexity follows from an application of Chebyshev's inequality, and the lower bound is claimed in~\cite{aaronson&rall:qcounting}, we did not succeed in locating a proof for non-constant~$\delta$ in the literature. We therefore include a proof of the following statement. 
\begin{thm}
  \label{thm:classical randomized counting lower bound}
  There exists a universal constant~$C > 0$ such that the following holds.
  Let~$\delta > 0$ be such that $k \delta$ is an integer and $k (1 + \delta) \leq N$.
  Suppose $\mathcal A$ is a randomized query algorithm such that for all~$x \in \{0,1\}^N$ with~$\abs{x} \in \{k, k (1 + \delta)\}$,
  \begin{equation*}
    \Pr_{r \sim \Unif(\{0,1\}^R)}[\mathcal A(x, r) = \abs{x}] \geq \frac56,
  \end{equation*}
  where $R$ is the number of used random bits, and $\Unif(\{0,1\}^R)$ refers to the uniform distribution on them.
  Assume that~$\mathcal A$ makes $t \geq 0$ queries, independent of the input~$x$, or the randomness~$r$ used by the algorithm.
  Then
  \begin{equation*}
    t \geq C \min\{ N, \frac{N}{\delta^2 \abs{x}}\}
  \end{equation*}
\end{thm}

At a high level, our proof boils down to showing that if a $t$-query algorithm $\mathcal A$ succeeds with high probability, then the total variation distance between $\Hyp(N, k, t)$ and $\Hyp(N, k(1+\delta), t)$ must be $\Omega(1)$. Here $\Hyp(N, \ell, t)$ is the distribution on the number of observed marked elements if one draws $t$ elements from a set of size $N$ of which $\ell$ elements are marked, without replacement. The lower bound on $t$ then follows from a $t$-dependent upper bound on this total variation distance, for which we state the necessary ingredients from the literature first. We let~$\Bin(t, p)$ the binomial distribution with parameters~$t \geq 1$ and~$p \in (0,1)$, corresponding to~$t$ independent Bernoulli trials, each of which succeeds with probability~$p$. 
First, we state the following bound~\cite[Thm.~3]{diaconisFiniteExchangeableSequences1980} which shows that when the number of samples~$t$ is small compared to~$N$, sampling with and without replacement yield approximately the same distribution.
\begin{thm}
  \label{thm:distance hypergeometric binomial}
  The total variation distance between~$\Hyp(N, \ell, t)$ and~$\Bin(t, \ell/N)$ is at most~$4 t/N$.
\end{thm}

Next, we use the following estimate on the total variation distance between two binomial distributions with the same~$t$, but distinct success probability~\cite{roosBinomialApproximationPoisson2001,adellExactKolmogorovTotal2006}:
\begin{thm}
    \label{thm:distance binomial with different parameters}
    Let~$t \geq 1$, $p \in (0,1)$ and~$\delta \in (0, 1)$ such that~$p (1 + \delta) < 1$.
    The total variation distance between~$\Bin(t, p)$ and~$\Bin(t, p (1 + \delta))$ is at most
    \begin{equation*}
        \frac{\sqrt{e}}{2} \frac{\tau}{(1 - \tau)^2}, 
    \end{equation*}
    assuming~$\tau < 1$, where
    \begin{equation*}
        \tau = \delta \sqrt{\frac{p (t+2)}{2 (1-p)}}.
    \end{equation*}
\end{thm}
\begin{proof}
    This follows from~\cite[eq.~(15)]{roosBinomialApproximationPoisson2001}.
    We apply their bound on the distance to~$\Bin(t, p)$ with the following parameters: $s = 0$, 
    $n$ is our $t$, the random variable~$S_n$ is the sum of~$t$ Bernoulli random variables with success probability~$p + x$ with~$x = \delta p$, hence its distribution~$P^{S_n}$ is~$\Bin(t, p+x)$, and
    \begin{equation*}
        \gamma_1(p) = t x, \, \gamma_2(p) = t x^2, \, \eta(p) = 2 t x^2 + t^2 x^2, \, \theta(p) = \frac{2 t x^2 + t^2 x^2}{2 t p (1-p)}.
    \end{equation*}
    The upper bound given is then $\dTV(P^{S_n}, \Bin(t, p)) \leq \frac{\sqrt{e}}{2} \frac{\sqrt{\theta(p)}}{(1 - \sqrt{\theta(p)})^2}$. The claimed bound in the theorem then follows from~$\theta(p) = \tau$, using~$x = p \delta$.
\end{proof}

Via the triangle inequality, the above two theorems suffice to upper bound the total variation distance between $\Hyp(N,k,t)$ and $\Hyp(N,k(1+\delta),t)$ (for a precise statement, see the proof below). We are now ready to prove \cref{thm:classical randomized counting lower bound}.
\begin{proof}[Proof of~\cref{thm:classical randomized counting lower bound}.]
  For an integer $\ell$, let $X_\ell \subseteq \{0,1\}^N$ be the set of bit strings with Hamming weight $\ell$. We use Yao's minimax principle to lower bound the number of queries required by a randomized algorithm 
 that outputs $|x|$ with probability $\geq 5/6$ on every input $x \in X_k \cup X_{k(1+\delta)}$. That is, we exhibit a distribution $\mathcal D$ on $X_k \cup X_{k(1+\delta)}$ for which every deterministic algorithm that computes $|x|$ on a $5/6$ fraction of the inputs, weighted according to $\mathcal D$, requires at least $C \min\{N, N/(\delta^2 k)\}$ queries for some universal constant $C$. Consider the distribution~$\mathcal D$ on inputs that with probability $1/2$ samples a uniformly random element from $X_k$, and with probability $1/2$ samples a uniformly random element from $X_{k (1+\delta)}$. Suppose $\mathcal A$ is a deterministic $t$-query algorithm that on input $x \sim \mathcal D$ correctly returns $\abs{x}$ with probability at least $5/6$ (where the probability is over the sample from $\mathcal D$). Note that we allow $\mathcal A$ to know $k$ and $\delta$.
  We show the desired lower bound on $t$.
  Let $A(x) = a$ denote the substring $(x_{i_1}, \dotsc, x_{i_t})$ of $x$ that corresponds to the $t$ queried indices $i_1, \dotsc, i_t$.
  Note that the output $\mathcal A(x) \in \{k, k(1+\delta)\}$ of the algorithm is deterministic and \emph{only} a function of $a$ (for a fixed algorithm).
  If one thinks of $\mathcal A$ as a decision tree, then the first index to be queried does not depend on $a$, and after every subsequent query, the next index to be queried is deterministic as a function of the previous queried indices and outcomes.
  It is also the case that the queried indices $i_1, \dotsc, i_t$ are a function of the query outcomes $a$!
  Therefore, we may view $\mathcal A(x)$ as a just a function of $a = A(x)$.
  Let $B \subset X_k \cup X_{k(1+\delta)}$ be the set of $a \in \{0,1\}^t$ on which $\mathcal A(a)$ outputs $k (1+\delta)$.

  Let $P_\ell$ be the distribution on $a = A(x) \in \{0,1\}^t$ induced by $x \sim \Unif(X_\ell)$, where the latter refers to the uniform distribution on~$X_\ell$.
  By assumption, $A$ can distinguish (with constant success probability) the distributions $P_k$ and $P_{k (1+\delta)}$.
  Therefore, the total variation distance between these distributions is large. Indeed, the probability, with respect to $\mathcal D$, that $\mathcal A$ outputs the wrong value of $\abs{x}$ is at most $1/6$, therefore $\mathcal A$ fails with probability at most $1/3$ when $x \sim \Unif(X_\ell)$ for both $\ell=k$ and $\ell=k(1+\delta)$, and hence
  \begin{align*}
    \frac13 + \frac13
    & \geq
    \Pr_{x \in_R  X_{k (1 + \delta)}}[\mathcal A(x) = k]
    +
    \Pr_{x \in_R  X_{k}}[\mathcal A(x) = k (1 + \delta)]
    \\
    &
    =
    \Pr_{a \sim P_{k (1 + \delta)}}[\mathcal A(a) = k]
    +
    \Pr_{a \sim P_{k}}[\mathcal A(a) = k (1 + \delta)]
    \\
    & 
    =
    1 + (P_k(B) - P_{k(1+\delta)}(B))
    \\
    &
    \geq
    1 - \dTV(P_k, P_{k (1 + \delta)}),
  \end{align*}
  so $\dTV(P_k, P_{k (1 + \delta)}) \geq 1/3$.
  
  We now relate~$P_\ell$ to the hypergeometric distribution $\Hyp(N, \ell, t)$, so that we can upper bound the above total variation distance as a function of~$t$.
  We prove that
  \begin{align*}
    P_\ell(a) = \Pr_{x \in X_\ell}[A(x) = a] = \frac{1}{\binom{t}{\abs{a}}}\Pr[W_\ell = \abs{a}]
  \end{align*}
  where~$W_\ell \sim \Hyp(N,\ell,t)$ is hypergeometrically distributed, i.e.,
  \begin{equation*}
    \Pr[W_\ell = \abs{a}] = \frac{\binom{\ell}{\abs{a}} \binom{N-\ell}{t-\abs{a}}}{\binom{N}{t}}.
  \end{equation*}
  We prove this by exploiting the permutation symmetry of the distribution on $X_\ell$, along with an iterative conditioning argument.
  Let $i_1, \dotsc, i_t$ denote the sequence of indices of $x$ queried, so that $A(x) = (x_{i_1}, \dotsc, x_{i_t})$.
  Recall that the $i_1, \dotsc, i_t$ may be chosen adaptively, but $i_{j+1}$ is determined completely from $i_1, \dotsc, i_j$ and $x_{i_1}, \dotsc, x_{i_j}$.
  Then
  \begin{align*}
    \Pr_{x \in X_\ell}[x_{i_1} = 1] = \frac{\ell}{N}.
  \end{align*}
  Moreover, one has
  \begin{align*}
    \Pr_{x \in X_\ell}[x_{i_{j+1}} = 1 | x_{i_1}, \dotsc, x_{i_j}] = \frac{\ell - \sum_{s=1}^j x_{i_s}}{N-j},
  \end{align*}
  because conditioned on the values of $x$ at the indices $i_1, \dotsc, i_j$, the distribution of $x$ becomes uniform among bit strings of Hamming weight $\ell - \sum_{s=1}^j x_{i_s}$ of length $N-j$ in the remaining position.
  Finally
  \begin{align*}
    \Pr_{x \in X_\ell}[A(x) = a] = \prod_{j=1}^t \Pr_{x \in X_\ell} [x_{i_j} = a_j | x_{i_1} =a_1, \dotsc, x_{i_{j-1}} =a_{j-1}]
  \end{align*}
  which after careful consideration is seen to be equal to
  \begin{align*}
    \frac{\ell \dotsb (\ell-\abs{a} + 1) \cdot (N-\ell) \dotsb (N-\ell-(t-\abs{a})+1)}{N \dotsb (N-t+1)} & = \frac{\ell!}{(\ell-\abs{a})!} \frac{(N-\ell)!}{(N-\ell-(t-\abs{a}))!} \frac{(N-t)!}{N!} \\
    & = \frac{\binom{N-t}{\ell-\abs{a}}}{\binom{N}{\ell}}
  \end{align*}
  Now because
  \begin{equation*}
    \Pr[W_\ell = \abs{a}] = \frac{\binom{\ell}{\abs{a}} \binom{N-\ell}{t-\abs{a}}}{\binom{N}{t}}
  \end{equation*}
  we see that we indeed have
  \begin{align*}
    \Pr_{x \in X_\ell}[A(x) = a] = \frac{1}{\binom{t}{\abs{a}}}\Pr[W_\ell = \abs{a}].
  \end{align*}
  Therefore we obtain
  \begin{align*}
    \dTV(P_k, P_{k(1+\delta)}) & = \frac12 \sum_{a \in \{0,1\}^t} \abs{P_k(a) - P_{k(1+\delta)}(a)} \\
    & = \frac12 \sum_{a \in \{0,1\}^t} \frac{1}{\binom{t}{\abs{a}}} \abs{\Pr[W_k = \abs{a}] - \Pr[W_{k(1+\delta)} = \abs{a}]} \\
    & = \frac12 \sum_{s = 0}^t \abs{\Pr[W_k = s] - \Pr[W_{k(1+\delta)} = s]} \\
    & = \dTV(\Hyp(N, k, t), \Hyp(N, k (1 + \delta), t)).
  \end{align*}

  We now give a~$t$-dependent upper bound on this total variation distance; combined with the assumption that $\dTV(P_k, P_{k(1+\delta)}) \geq 1/3$, this will lead to the right lower bound on~$t$.
  Let
  \begin{equation*}
    \tau
    = \frac{\delta k}{N} \sqrt{\frac{t + 2}{2 \frac{k}{N} (1 - \frac{k}{N})}}
    = \delta \sqrt{\frac{k (t + 2)}{2 (N-k)}}.
  \end{equation*}
  If $\tau \geq 1/2$, then
  \begin{equation*}
    t+2 \geq \frac{(N-k)}{2 \delta^2 k}.
  \end{equation*}
  and so in this case $t = \Omega(N / (\delta^2 k))$ (unless the latter is $O(1)$, in which case the query lower bound we aim for is constant and uninteresting).
  Otherwise, by the triangle inequality,
  \begin{align*}
    \dTV(\Hyp(N, k, t), \Hyp(N, k (1 + \delta), t)) & \leq \dTV(\Hyp(N, k, t), \Bin(t, k/N)) \\
    & + \dTV(\Bin(t, k/N), \Bin(t, k(1+\delta)/N)) \\
    & + \dTV(\Bin(t, k(1+\delta)/N), \Hyp(N, k (1 + \delta), t)) \\
    & \leq \frac{8 t}{N} + \frac{\sqrt{e}}{2} \frac{\tau}{(1 - \tau)^2} \\
    & \leq \frac{8 t}{N} + 2 \tau \sqrt{e},
  \end{align*}
  where we applied~\cref{thm:distance hypergeometric binomial,thm:distance binomial with different parameters} (note that we needed $\tau < 1$).
  Since the left-hand side is at least $1/3$ as shown before, we have 
  \begin{equation*}
    \frac13 \leq \frac{8 t}{N} + 2 \tau \sqrt{e},
  \end{equation*}
  so at least one of the two terms must be $1/6$ or greater.
  If $8t/N \geq 1/6$, then $t \geq N/48$ and we are done; otherwise, $\tau \geq 1/24$ and so
  \begin{equation*}
    t+2 \geq \frac{2 (N-k)}{24^2 \delta^2 k}.
  \end{equation*}
  If $k \leq N/2$, then $N-k \geq N/2$ and one deduces $t+2 \geq N/(144 \delta^2 k)$.
\end{proof}

\end{document}